\newcommand{\Z}{{\mathbb Z}}
\newcommand{\R}{{\mathbb R}}
\newcommand{\C}{{\mathbb C}}
\newcommand{\N}{{\mathbb N}}
\newcommand{\E}{{\mathbb E}}
\newcommand{\PP}{{\mathbb P}}
\newtheorem{theorem}{Theorem}
\newtheorem{remark}{Remark}[section]
\newtheorem{lemma}[remark]{Lemma}
\newtheorem{proposition}[remark]{Proposition}
\newtheorem{coro}[remark]{Corollary}
\begin{document}

\title[Anderson Localization]{An Introduction to the\\ Mathematics of Anderson Localization }

\author{G\"unter Stolz}

\address{Department of Mathematics, University of Alabama at Birmingham, Birmingham, AL~35294, USA}

\email{stolz@math.uab.edu}

\thanks{This work was supported in part by NSF grant DMS-0653374.}

\begin{abstract}
We give a widely self-contained introduction to the mathematical
theory of the Anderson model. After defining the Anderson model and
determining its almost sure spectrum, we prove localization
properties of the model. Here we discuss spectral as well as
dynamical localization and provide proofs based on the fractional
moments (or Aizenman-Molchanov) method.

We also discuss, in less self-contained form, the extension of the
fractional moment method to the continuum Anderson model. Finally,
we mention major open problems.

These notes are based on several lecture series which the author
gave at the Kochi School on Random Schr\"odinger Operators, November
26-28, 2009, the Arizona School of Analysis and Applications, March
15-19, 2010 and the Summer School on Mathematical Physics, Sogang
University, July 20-23, 2010.
\end{abstract}

\maketitle

\section{Introduction}

In 1958 the physicist P.\ W.\ Anderson introduced the model which is
now named after him  to explain the quantum mechanical effects of
disorder, as present in materials such as alloys and amorphous media
\cite{Anderson}. The most famous phenomena which arise in the
context of this model are {\it Anderson localization}, i.e.\ the
suppression of electron transport due to disorder, and the {\it
Anderson transition} in three-dimensional disordered media which
predicts the existence of a mobility edge separating energy regions
of localized states from an extended states region. Anderson
localization has important consequences throughout physics, in
theory and experiment. Anderson's work, and that of N.\ F.\ Mott and
J.\ H.\ van Vleck, won the 1977 physics Nobel prize ``for their
fundamental theoretical investigations of the electronic structure
of magnetic and disordered
systems"\footnote{http://nobelprize.org/nobel\_prizes/physics/laureates/1977/}.

Mathematically rigorous studies of the Anderson Model and other
models of random operators started in the 1970s, with the first
proof of Anderson localization for a related one-dimensional model
provided by I.\ Goldsheid, S.\ Molchanov and L.\ Pastur in 1977 \cite{GMP},
followed several years later by a proof of localization for the
actual Anderson model by H.\ Kunz and B.\ Souillard \cite{KS}, also
initially for dimension one. Since then the study of random
operators has become an important field of mathematical physics,
which has led to a tremendous amount of research activity and many
mathematical results.

While the Anderson transition and extended states are still an open
mathematical challenge, by now a good rigorous understanding of
Anderson localization has been achieved. Several powerful methods
have been found to prove Anderson localization. Important
differences exist between one-dimensional and multi-dimensional
models, where different physical mechanisms are responsible for
localization effects. In these notes we will focus on methods which
allow to prove Anderson localization in arbitrary dimension. Two
such methods are available: The method of {\it multiscale analaysis}
(MSA) developed in 1983 by Fr\"ohlich and Spencer
\cite{Frohlich/Spencer}, and the {\it fractional moments method}
(FMM) introduced by Aizenman and Molchanov in 1993
\cite{Aizenman/Molchanov}.

MSA has produced results in situations which are out of reach for an
approach through the FMM, see Section~\ref{sec:singular} for some related discussion. However, the FMM is mathematically more elementary, in
particular for the case of the classical {\it discrete} Anderson
model which will be our main focus here. Also, under suitable
assumptions, the FMM allows to prove stronger results on dynamical
localization than can be obtained by MSA. Therefore, in these
lectures, after an introduction to the Anderson model and its basic
spectral properties, we will discuss how to prove Anderson
localization based on the FMM.

After more than 50 years of physical research and more than 30 years
of mathematical work a vast literature with results on Anderson
localization and, more generally, the physics of disordered quantum
mechanical systems, is available. In these introductory lectures we
ignore most of the literature as it can not be our goal to provide a
comprehensive survey, not even of the mathematical research which
has been done. Some book length presentations, or parts of such,
which provide very good further reading and many more references are
\cite{CFKS, Carmona/Lacroix, Pastur/Figotin, Stollmann, Kirsch}.

It is assumed below that the reader is familiar with measure and
integration theory as presented in \cite{Rudin}, with basic probabilistic concepts such as independence, and with the
foundations of the theory of linear operators in Hilbert spaces, up
to the spectral theorem for self-adjoint operators and consequences
such as spectral types (absolutely continuous, singular continuous
and pure point spectrum) and the abstract solution of the
time-dependent Schr\"odinger equation via Stone's theorem, e.g.\
\cite{Weidmann} or \cite{Reed/Simon}. Otherwise, we have tried to
keep these notes mostly self-contained. For much of the first seven sections we provide full proofs.

We do not aim at the most general known results, but rather want to
demonstrate that simple and natural mathematical ideas can be used
to rigorously establish Anderson localization. Many further
developments of the ideas discussed here can be found in the
literature. The references provided below can serve as a starting
point for further reading. An ideal source for continued reading and
learning the state of the art of much what is discussed here will be
the upcoming book \cite{AizenmanWarzelBook} by M.\ Aizenman and S.\
Warzel.

In Section~\ref{sec:model} we introduce the Anderson model and, as a warm-up, prove its first important property, namely that its spectrum is almost surely deterministic. The rest of these notes exclusively deals with the phenomenon of Anderson localization. Section~\ref{sec:locproperties} introduces the concepts of {\it spectral localization} and {\it dynamical localization}, followed by a discussion of what is known on the physics level of rigor.

In Sections~\ref{sec:largedisorder} and \ref{sec:Grafmethod} we prove localization in the large disorder regime of the Anderson model. This is done via the fractional moments method, by first proving in Section~\ref{sec:largedisorder} that fractional moments of Green's function decay exponentially, and by then showing in Section~\ref{sec:Grafmethod} that this implies dynamical as well as spectral localization. In these sections we use methods which were developed in some of the first papers on the fractional moments method, e.g.\ \cite{Aizenman/Molchanov} and \cite{Graf94}. In particular, these methods work directly for the Anderson model in {\it infinite volume}.

Subsequently, other methods were introduced in the literature, e.g.\ \cite{Aizenman94} or \cite{ASFH01}, which use {\it finite volume} restrictions of the Anderson model. A central concept here are so-called {\it eigenfunction correlators}. These methods have proven to be very powerful in further-reaching work, for example in dealing with the continuum Anderson model or multi-particle Anderson models. Thus we introduce this approach in Section~\ref{sec:efcor} and Appendix~\ref{sec:appendix} and show how they yield an alternative proof of localization.

Section~\ref{sec:bandedge} discusses the second main regime in which multi-dimensional localization has been established rigorously, the band edge regime. Among the new ideas needed here are the phenomenon of {\it Lifshits tails} of the integrated density of states near spectral edges and a geometric decoupling method to control correlations in Green's function. Parts of this section have the character of an outline, referring to the literature for some of the results used.

Entirely written in form of an outline is Section~\ref{sec:continuum}, in which we
discuss the extension of the FMM to continuum Anderson models, as
accomplished in \cite{AENSS06} and \cite{BNSS06}. This requires
considerable technical effort and we merely point out the difficulties
which had to be overcome and mention some of the tools which allowed to accomplish this.

The Anderson model and, more generally, the quantum mechanics of
disordered media, provides many difficult future challenges for
mathematicians. We discuss some of them in our concluding
Section~\ref{sec:problems}.

\vspace{.3cm}

\noindent {\bf Acknowledgement:} The author's knowledge of random
operators and, in particular, of the Anderson model has benefitted
from many other mathematicians, through their works as well as
through personal contact. We apologize for not being able to
properly give credit for the origins of all of these benefits. But
we need and want to make an exception for the contributions of
Michael Aizenman, who was the driving force in the development of
the fractional moments method and has influenced the author's way of
thinking about random operators in multiple ways. Much of what we
have to say here is based on ideas of Michael and his collaborators.
In particular, special thanks are due to Michael and Simone Warzel
for letting the author use some preliminary material from
\cite{AizenmanWarzelBook} in Section~\ref{sec:efcor} and
Appendix~\ref{sec:appendix} below.

Thanks are also due to a referee for useful suggestions which improved our presentation.

Finally, the author would like to thank the organizers of the Kochi School on Random Schr\"odinger Operators, the Arizona School of Analysis and Applications and the Summer School on Mathematical Physics at Sogang
University for their invitations. Without these opportunities to lecture on the material covered here these notes would never have been written.

\section{The Anderson Model} \label{sec:model}

\subsection{The Discrete Laplacian}

Below we will introduce the Anderson model as a {\it discrete}
Schr\"odinger operator, acting, for dimension $d\ge 1$, on the
Hilbert space
\[ \ell^2(\Z^d) = \{ u:\Z^d \to \C: \,\sum_{n\in \Z^d} |u(n)|^2 < \infty \}, \]
with inner product $\langle u, v \rangle = \sum_n \overline{u(n)}
v(n)$.

The usual negative Laplacian $-\Delta = -\sum_j
\partial^2/\partial^2 x_j$ is replaced by its discrete analogue
$h_0$, which acts on $u\in \ell^2(\Z^d)$ by
\begin{equation} \label{eq:Laplace}
 (h_0 u)(n) = - \sum_{k\in \Z^d,\,|k|=1} u(n+k), \quad n\in \Z^d,
 \end{equation}
where $|k|=|k_1|+\ldots+|k_d|$ is the graph distance on $\Z^d$. More
appropriately, a finite difference approximation of $-\Delta$ would
be given by $h_0+2d$, but we neglect the mathematically trivial
shift by $2d$ (which should still be kept in mind for physical
interpretations). In physics, the Hamiltonian $h_0$ (or its
negative) most frequently arises more directly, i.e.\ not as a discretization
of a differential operator, in the context of the so-called tight-binding approximation.

Just as the continuum Laplacian, the discrete Laplacian is unitarily
equivalent to a multiplication operator via Fourier transform. Here
we consider the Fourier transform
\[ F: L^2([0,2\pi)^d) \to \ell^2(\Z^d),\]
which is the unitary operator given by
\[ (Fg)(n) = (2\pi)^{-d/2} \int_{[0,2\pi)^d} g(x) e^{-ix\cdot n} \,dx,\]
with inverse
\[ (F^{-1} u)(x) = \mbox{l.i.m.} \:(2\pi)^{-d/2} \sum_{n\in \Z^d, |n|\le N} u(n) e^{ix\cdot n}.\]
Here $x\cdot n = x_1 n_1 + \ldots x_d n_d$ and l.i.m.\ denotes the
limit $N\to\infty$ in $\ell^2(\Z^d)$.

A calculation shows that
\[ F^{-1} h_0 F = - 2 \sum_{j=1}^d \cos(x_j),\]
where the right-hand side is understood as a multiplication operator
on $L^2([0,2\pi)^d)$ in the variable $x=(x_1,\ldots,x_d)$. The
function $g(x)=-2\sum_j \cos(x_j)$ is real-valued and bounded. Thus
$h_0$ is bounded and self-adjoint (which also can be checked
directly from the definition (\ref{eq:Laplace}) without use of the
Fourier transform). The range of $g$ gives the spectrum of $h_0$,
\begin{equation} \label{eq:Laplacespec}
 \sigma(h_0) = [-2d,2d].
 \end{equation}
With a bit more effort one can show that inverse images of
Lebesgue-nullsets in $\R$ under the function $g$ are
Lebesgue-nullsets in $\R^d$. Thus the spectrum of $h_0$ is purely
absolutely continuous.

Another similarity of $h_0$ with the continuum Laplacian is that is
has plane waves as generalized eigenfunctions. To see this, let
$x\in [0,2\pi)^d$ and set
\begin{equation} \label{eq:genef}
\phi_x(n) := e^{in\cdot x}.
\end{equation}
While $\phi_x \not\in \ell^2(\Z^d)$, $h_0$ acts on it via
(\ref{eq:Laplace}) as
\begin{eqnarray*}
(h_0 \phi_x)(n) & = & - \sum_{|k|=1} e^{i(n+k)\cdot x} \\
& = & \left( - \sum_{j=1}^d 2\cos(x_j) \right) \phi_x(n).
\end{eqnarray*}
Thus $\phi_x$ is a bounded generalized eigenfunction of $h_0$ to the
spectral value $-2\sum_j \cos(x_j)$.

\subsection{The Anderson Model}

Let $\omega = (\omega_n)_{n\in \Z^d}$ be a set of independent,
identically  distributed (i.i.d.) real-valued random variables
indexed by $n\in \Z^d$. Recall that this means the following, where
we denote probabilities by $\PP$:

\begin{itemize}

\item The $(\omega_n)$ are identically distributed, i.e.\ there
exists a Borel probability measure $\mu$ on $\R$ such that, for all
$n\in \Z^d$ and Borel sets $A\subset \R$,
\[ \PP(\omega_n \in A) = \mu(A).\]

\item The $(\omega_n)$ are independent. Thus, for each finite subset $\{n_1,\ldots, n_{\ell}\}$ of $\Z^d$ and arbitrary Borel sets $A_1, \ldots, A_{\ell} \subset \R$,
\begin{eqnarray*}
\PP(\omega_{n_1} \in A_1, \ldots, \omega_{n_{\ell}} \in A_{\ell}) &
= & \prod_{j=1}^{\ell} \PP(\omega_{n_j} \in A_j) \\ & = &
\prod_{j=1}^{\ell} \mu(A_j).
\end{eqnarray*}

\end{itemize}

It is sometimes useful to think of a concrete way in which i.i.d.\
random variables can be realized as measurable functions on a
probability space $(\Omega, {\mathcal A}, \PP)$. The standard
construction is the infinite product space
\[ (\Omega, {\mathcal A}, \PP) = \bigotimes_{n\in \Z^d} (\R, {\mathcal B}_{\R}, \mu),\]
with ${\mathcal A}$ and $\PP$ denoting the $\sigma$-algebra and
measure generated by the pre-measure induced by $\mu$ on the Borel
cylinder sets in $\Omega = \R^{\Z^d}$. This is consistent with the
notation $\omega = (\omega_n)_{n\in \Z^d}$ as the components
$\omega_n$ of $\omega\in \Omega$ are now i.i.d.\ random variables on
$\Omega$ with common distribution $\mu$.

It is also convenient to work on a complete probability space
$(\Omega, {\mathcal A}, \PP)$, which in the above realization is
achieved by completing the product algebra ${\mathcal A}$ under
$\PP$, for which the same notation will be kept.

The {\it Anderson Model} is a random Hamiltonian $h_{\omega}$ on
$\ell^2(\Z^d)$, defined for $\omega \in \Omega$ by
\begin{equation} \label{eq:Anderson}
(h_{\omega} u)(n) = (h_0 u)(n) +  \omega_n u(n), \quad n\in \Z^d.
\end{equation}

Introducing the {\it random potential} $V_{\omega}: \Z^d \to \R$ by
$V_{\omega}(n) = \omega_n$, we may also write
\[ h_{\omega} = h_0 + V_{\omega}.\]

Note here that $h_{\omega}$ is not a single operator, but rather an
operator-valued function on a probability space. It's operator
theoretic properties will generally depend on $\omega$. Our goal
will typically be that a certain property of $h_{\omega}$ holds {\it
almost surely} or {\it with probability one}, meaning that it holds
for $\omega\in \Omega_0$, a measurable subset of $\Omega$ with
$\PP(\Omega_0)=1$. It lies within the nature of random operator
theory that the most interesting properties will only hold almost
surely rather than for {\it all} $\omega \in \Omega$.

One may think of the Anderson model $h_{\omega}$ as the Hamiltonian
governing the quantum mechanical motion of a single electron in a
discretized alloy-type random medium. In this view the random
potential $V_{\omega}(n) = \omega_n$, $n\in \Z^d$, represents a
solid formed by nuclei located at the sites $n$ of the lattice
$\Z^d$ and carrying random electrical charges $ \omega_n$. Assuming
that $h_{\omega}$ is self-adjoint, the dynamics of the electron is
given through the unitary group $e^{-ith_{\omega}}$, defined via the
spectral theorem, which provides the solution $\psi(t) =
e^{-ith_{\omega}} \psi_0$ of the time-dependent Schr\"odinger
equation $h_{\omega} \psi(t) = i \psi'(t)$, $\psi(0)=\psi_0$. The
possible energies of the electron are given by the spectrum
$\sigma(h_{\omega})$ of the Anderson Hamiltonian $h_{\omega}$.

In the above discussion we have assumed self-adjointness of the
Hamiltonian, which is the first mathematical fact to be checked.
This is particularly easy for discrete Schr\"odinger operators such
as $h_{\omega}$ because the discrete Laplacian $h_0$ is bounded and
self-adjoint.

\begin{theorem} \label{thm:selfadjoint}
For every $\omega \in {\R}^{\Z^d}$, the operator $h_{\omega}$ is
self-adjoint on
\[ D(V_{\omega}) = \{ u\in \ell^2(\Z^d): \sum_n |\omega_n u(n)|^2 < \infty\},\]
the domain of the maximal multiplication operator by the potential
$V_{\omega}$.
\end{theorem}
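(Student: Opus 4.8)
The plan is to reduce the statement to two standard facts: that the maximal multiplication operator by a real-valued function is self-adjoint, and that adding a bounded self-adjoint operator changes neither self-adjointness nor the domain. Since $h_0$ is bounded and self-adjoint (as shown above via the Fourier transform), the substantive part concerns only the potential operator $V_\omega$, acting on $D(V_\omega)$ by $(V_\omega u)(n) = \omega_n u(n)$.

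First I would record that $V_\omega$ is densely defined and symmetric. Density is immediate, since $D(V_\omega)$ contains every finitely supported sequence, and these are dense in $\ell^2(\Z^d)$. Symmetry follows from $\langle V_\omega u, v\rangle = \sum_n \overline{\omega_n u(n)}\, v(n) = \sum_n \overline{u(n)}\, \omega_n v(n) = \langle u, V_\omega v\rangle$ for $u,v \in D(V_\omega)$, where the rearrangement is justified because $(\omega_n u(n))_n \in \ell^2(\Z^d)$ and $v \in \ell^2(\Z^d)$ make the series absolutely convergent by Cauchy--Schwarz.

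To upgrade symmetry to self-adjointness I would invoke the basic criterion that a densely defined symmetric operator $A$ is self-adjoint as soon as $\mathrm{ran}(A+i) = \mathrm{ran}(A-i) = \ell^2(\Z^d)$. Given $g \in \ell^2(\Z^d)$, set $u(n) := g(n)/(\omega_n \pm i)$. Then $|u(n)| \le |g(n)|$, so $u \in \ell^2(\Z^d)$; moreover $|\omega_n u(n)| = |\omega_n|\,|g(n)|/\sqrt{\omega_n^2+1} \le |g(n)|$, so in fact $u \in D(V_\omega)$; and by construction $(V_\omega \pm i)u = g$. This shows both ranges are all of $\ell^2(\Z^d)$, so $V_\omega$ is self-adjoint on $D(V_\omega)$.

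Finally, because $h_0$ is bounded and self-adjoint, $h_\omega = V_\omega + h_0$ is self-adjoint on the unchanged domain $D(V_\omega)$. This is the trivial (bounded) case of the Kato--Rellich theorem, but it can also be seen directly: $V_\omega + h_0$ is symmetric on $D(V_\omega)$, and for $\lambda > \|h_0\|$ the operator $V_\omega + i\lambda$ is boundedly invertible with $\|(V_\omega+i\lambda)^{-1}\| \le 1/\lambda < 1/\|h_0\|$, so $(V_\omega + h_0 + i\lambda) = (I + h_0 (V_\omega + i\lambda)^{-1})(V_\omega + i\lambda)$ is invertible by a Neumann series, and likewise with $-\lambda$; hence $\mathrm{ran}(h_\omega \pm i\lambda) = \ell^2(\Z^d)$. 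I do not anticipate a genuine obstacle here; the only point needing a little care is confirming that the adjoint of the maximal multiplication operator has domain no larger than $D(V_\omega)$, which is precisely what the range computation above guarantees.
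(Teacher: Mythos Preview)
Your proof is correct and follows exactly the same two-step strategy as the paper: first establish that the maximal multiplication operator $V_\omega$ is self-adjoint on $D(V_\omega)$, then invoke that a bounded self-adjoint perturbation (here $h_0$) preserves self-adjointness with unchanged domain. The only difference is that the paper dispatches both steps by citation (``a standard fact'' and a reference to Weidmann), whereas you supply explicit arguments---the range criterion for $V_\omega$ and a Neumann-series computation for the perturbation step---so your version is simply a more detailed execution of the same outline.
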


\begin{proof}
Self-adjointness of the maximal multiplication operator by a
real-valued function is a standard fact. Perturbation of the
self-adjoint maximal multiplication operator $V_{\omega}$ by the
bounded self-adjoint operator $h_0$ preserves self-adjointness with
same domain, e.g.\ \cite{Weidmann}.
\end{proof}

If we assume, as will be done later, that the distribution $\mu$ of
the $\omega_n$ has bounded support, i.e.\ that
\[ \mbox{supp}\,\mu := \{ t \in \R:\, \mu((t-\varepsilon, t+\varepsilon))>0 \:\mbox{for all $\varepsilon >0$}\}\]
is bounded in $\R$, then the potential $V_{\omega}$ is bounded and
therefore defines a bounded multiplication operator. Thus
$h_{\omega}$ is a bounded self-adjoint operator on $\ell^2(\Z^d)$ as
well. On the other hand, if supp$\,\mu$ is unbounded, then it is not
hard to see that $V_{\omega}$, and thus $h_{\omega}$, is almost
surely unbounded.

\subsection{The spectrum of the Anderson model}

Our next goal is to determine the spectrum of $h_{\omega}$. It
follows  as a consequence of the general theory of so-called {\it
ergodic operators} (e.g.\ \cite{Carmona/Lacroix}), of which the
Anderson model is a special case, that $\sigma(h_{\omega})$ is
almost surely deterministic, i.e.\ there exists a closed subset
$\Sigma$ of $\R$ such that
\[ \sigma(h_{\omega}) = \Sigma \quad \mbox{almost surely.} \]

Rather than proving this within the general theory of ergodic
operators  we will give a direct proof of the following result,
which explicitly describes the almost sure spectrum of the Anderson
model:

\begin{theorem} \label{thm:charspectrum}
The spectrum of the Anderson model is almost surely given by
\begin{equation} \label{eq:charspectrum}
\sigma(h_{\omega}) = \sigma(h_0) + \,\mbox{supp}\,\mu.
\end{equation}
\end{theorem}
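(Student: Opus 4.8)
The plan is to prove the two inclusions $\sigma(h_\omega) \supseteq \sigma(h_0) + \mathrm{supp}\,\mu$ and $\sigma(h_\omega) \subseteq \sigma(h_0) + \mathrm{supp}\,\mu$ separately, the first holding almost surely and the second holding for \emph{every} $\omega$ whose potential values lie in $\mathrm{supp}\,\mu$ (which is an almost sure event). For the easier inclusion $\subseteq$, fix such an $\omega$ and let $\lambda \notin \sigma(h_0) + \mathrm{supp}\,\mu$. I would aim to show $h_\omega - \lambda$ is invertible by a Neumann-series / Combes--Thomas-type argument: since $\mathrm{dist}(\lambda, \sigma(h_0) + \mathrm{supp}\,\mu) > 0$, one can write $h_\omega - \lambda = (h_0 - \mu_0) + (V_\omega - (\lambda - \mu_0))$ for a suitable constant $\mu_0 \in \mathrm{supp}\,\mu$ and check that the multiplication operator $V_\omega - (\lambda-\mu_0)$ is boundedly invertible with all values bounded away from $-\sigma(h_0)$; more cleanly, approximate $V_\omega$ by a constant potential $\mu_0 u$ on large blocks and use that $\|(h_0 + \mu_0 - \lambda)^{-1}\|$ is uniformly controlled together with a geometric resummation over blocks. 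Either way this is the routine direction and uses only \eqref{eq:Laplacespec}.

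The substantive direction is $\sigma(h_\omega) \supseteq \sigma(h_0) + \mathrm{supp}\,\mu$ almost surely, and here the natural tool is a Weyl sequence argument exploiting the independence of the $\omega_n$ to create, with probability one, arbitrarily large regions of $\Z^d$ on which the potential is nearly constant. Concretely, fix $E_0 \in \sigma(h_0)$ and $v \in \mathrm{supp}\,\mu$; I want to show $E_0 + v \in \sigma(h_\omega)$ a.s. Since $E_0 \in \sigma(h_0) = [-2d,2d]$, there is a generalized plane-wave-type approximate eigenfunction: pick $x \in [0,2\pi)^d$ with $-2\sum_j \cos x_j = E_0$, truncate $\phi_x$ from \eqref{eq:genef} to a box $\Lambda_L$ of side $L$ and normalize, obtaining $\psi_L$ with $\|(h_0 - E_0)\psi_L\| \to 0$ as $L \to \infty$ (the error is a boundary term of order $L^{-1/2}$ relative to the normalization). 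Now I need an event on which, for infinitely many disjoint boxes $\Lambda$ of every fixed size $L$, we have $|\omega_n - v| < \e$ for all $n \in \Lambda$; since $v \in \mathrm{supp}\,\mu$ this event has positive probability $\mu((v-\e,v+\e))^{|\Lambda_L|} > 0$ for each placement, and by independence over disjoint translates plus Borel--Cantelli (second, divergence part) it occurs almost surely. Translating $\psi_L$ into such a box, on its support $h_\omega$ acts as $h_0 + \omega_n$ with $\omega_n$ within $\e$ of $v$, so $\|(h_\omega - (E_0+v))\psi_L\| \le \|(h_0-E_0)\psi_L\| + \e$. Letting $L \to \infty$ and then $\e \to 0$ produces a genuine Weyl sequence for $h_\omega$ at energy $E_0 + v$, giving $E_0 + v \in \sigma(h_\omega)$.

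To upgrade this to the full spectral inclusion almost surely I would take a countable dense subset $\{(E_j, v_k)\}$ of $\sigma(h_0) \times \mathrm{supp}\,\mu$, apply the above to each pair on a full-measure event, intersect these countably many events, and use closedness of the spectrum to conclude $\sigma(h_\omega) \supseteq \overline{\sigma(h_0) + \mathrm{supp}\,\mu} = \sigma(h_0) + \mathrm{supp}\,\mu$ on that event; note the sum of a compact set and a closed set is already closed here. The main obstacle is the probabilistic step: making precise that one can find, almost surely, infinitely many disjoint translated boxes on which the i.i.d.\ potential is simultaneously $\e$-close to a prescribed value $v \in \mathrm{supp}\,\mu$. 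The key point making this work is the definition of $\mathrm{supp}\,\mu$, which guarantees $\mu((v-\e,v+\e)) > 0$, so each box-event has a fixed positive probability; independence over a disjoint family of boxes (e.g.\ boxes centered at points of $(3L)\Z^d$) then forces infinitely many successes by the divergence Borel--Cantelli lemma. Everything else—the plane-wave construction, the boundary error estimate, the Weyl criterion—is standard and I would not belabor it.
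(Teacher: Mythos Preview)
Your approach to the inclusion $\sigma(h_0)+\mathrm{supp}\,\mu\subset\sigma(h_\omega)$ is essentially identical to the paper's: truncated plane waves as Weyl sequences, the independence argument for finding boxes on which the potential is $\varepsilon$-close to a prescribed $v\in\mathrm{supp}\,\mu$, and a countable-density step to pass to the full set. (The paper organizes the density step slightly more economically, fixing only $v$ and letting $E_0$ range over all of $[-2d,2d]$ on the \emph{same} almost-sure event, since the good boxes do not depend on $E_0$; but this is cosmetic.)

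For the easy inclusion, however, your description is more tangled than necessary and neither of your two sketches quite lands as written: introducing a constant $\mu_0$ does not help because $V_\omega$ is not close to any single constant, and a block decomposition with geometric resummation is overkill. The clean argument, which the paper uses, is to treat $h_0$ as the bounded perturbation of $V_\omega$ rather than the reverse. Almost surely every $\omega_n$ lies in $\mathrm{supp}\,\mu$, so $\sigma(V_\omega)\subset\mathrm{supp}\,\mu$; since $\|h_0\|\le 2d$, the general fact $\sigma(A+B)\subset\sigma(A)+[-\|B\|,\|B\|]$ gives $\sigma(h_\omega)\subset\mathrm{supp}\,\mu+[-2d,2d]$ immediately. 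This general fact is proved by exactly the Neumann series you have in mind, but factored the right way: if $\lambda\notin\mathrm{supp}\,\mu+[-2d,2d]$ then $|\omega_n-\lambda|\ge 2d+\delta$ for some $\delta>0$ and all $n$, so $V_\omega-\lambda$ is boundedly invertible with $\|(V_\omega-\lambda)^{-1}h_0\|\le 2d/(2d+\delta)<1$, and $h_\omega-\lambda=(V_\omega-\lambda)\bigl(I+(V_\omega-\lambda)^{-1}h_0\bigr)$ is invertible.
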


Here the sum of two subsets $A$ and $B$ of $\R$ is defined by
$A+B:=\{a+b: a\in A, b\in B\}$. In particular, this means that the
almost sure spectrum of $h_{\omega}$ is a union of intervals, namely
of translates of $[-2d,2d]$ by the points in supp$\,\mu$. If
supp$\,\mu$ doesn't have large gaps, then the almost sure spectrum
of $h_{\omega}$ is a single interval.

\begin{proof}
We begin with the easy part of the proof, namely that
\begin{equation} \label{eq:firsthalf}
\sigma(h_{\omega}) \subset [-2d,2d]+ \mbox{supp}\,\mu
\end{equation}
almost surely.

We first argue that $\sigma(V_{\omega}) = \overline{\{ \omega_n:
n\in \Z^d\}} \subset \mbox{supp}\,\mu$ almost surely. In fact, as
$\mu(\mbox{supp}\,\mu)=1$, for fixed $n\in \Z^d$, $\omega_n \in
\mbox{supp}\,\mu$ holds almost surely, i.e.\ on a set $\Omega_n
\subset \Omega$ with $\PP(\Omega_n)=1$. The countable intersection
$\Omega'$ of the $\Omega_n$ also has measure one and for $\omega\in
\Omega'$ we have $\sigma(V_{\omega}) \subset \mbox{supp}\,\mu$ as
supp$\,\mu$ is closed.

By a general fact from spectral theory, easily proven using a
Neumann series argument, a bounded self-adjoint perturbation $B$
does not shift the spectrum of a self-adjoint operator $A$ by more
than $\|B\|$, i.e.\ $\sigma(A+B) \subset \sigma(A) + [-\|B\|,
\|B\|]$. Thus (\ref{eq:firsthalf}) holds for $\omega\in \Omega'$ by
(\ref{eq:Laplacespec}).

The proof of
\begin{equation} \label{eq:secondhalf}
 [-2d,2d] + \mbox{supp}\,\mu \subset \sigma(h_{\omega})
\end{equation}
with probability one is more involved and falls into a probabilistic
part and a spectral theoretic part.

For the probabilistic part, start with fixed $t \in
\mbox{supp}\,\mu$, $\varepsilon>0$ and $N\in \N$ and let
\begin{eqnarray*}
\Omega_{t,N,\varepsilon} & := & \left\{ \omega\in \Omega: \;\mbox{There exists a cube $\Lambda_N \subset \Z^d$ of side length $N$ such that} \right. \\
& & \left. |\omega_n -t| < \varepsilon \mbox{ for all $n\in
\Lambda_N$}\right\}.
\end{eqnarray*}
As $t \in \mbox{supp}\,\mu$, we have $p:=
\mu((t-\varepsilon,t+\varepsilon))>0$.  Thus, for each fixed cube
$\Lambda$ of side length $N$ in $\Z^d$, the probability that
$|\omega_n-t|<\varepsilon$ for all $n\in \Lambda$ is $p^{N^d}
>0$. We can cover $\Z^d$ by infinitely many disjoint cubes of side
length $N$, where these events are independent. It follows that
$\PP(\Omega_{t,N,\varepsilon})=1$.

Next, let $\Omega_{t,\varepsilon} := \cap_{N\in \N}
\Omega_{t,N,\varepsilon}$.  Thus $\PP(\Omega_{t,\varepsilon})=1$ and
we will prove the following below: For each $\omega \in
\Omega_{t,\varepsilon}$ and $a\in [-2d,2d]$ it holds that
\begin{equation} \label{eq:specfact}
[a+t-\varepsilon, a+t+\varepsilon] \cap \sigma(h_{\omega}) \not=
\emptyset.
\end{equation}

Assuming that (\ref{eq:specfact}) is true, we proceed as follows:
Define $\Omega_{t} := \cap_{\ell \in \N} \Omega_{t,1/\ell}$, such
that $\PP(\Omega_{t})=1$. For $\omega \in \Omega_{t}$ we have by
(\ref{eq:specfact}) that, for all $a\in [-2d,2d]$,
\[ (a+t-\frac{1}{\ell}, a+t +\frac{1}{\ell}) \cap \sigma(h_{\omega}) \not= \emptyset \]
for all $\ell \in \N$. As $\sigma(h_{\omega})$ is closed, this
implies  that $a+t \in \sigma(h_{\omega})$ for all $a\in [-2d,2d]$,
and thus $[-2d,2d] +t \subset \sigma(h_{\omega})$.

For one last argument involving countable intersections of full
measure  sets, let $B$ be a countable subset of supp$\,\mu$ which is
dense in supp$\,\mu$ and let $\Omega_0 := \cap_{t\in B} \Omega_{t}$.
Then $\PP(\Omega_0)=1$ and for $\omega \in \Omega_0$ we have
$[-2d,2d]+B \subset \sigma(h_{\omega})$. Using again that
$\sigma(h_{\omega})$ is closed completes the proof of
(\ref{eq:secondhalf}).

We still need to show (\ref{eq:specfact}), which is the
spectral-theoretic  part of the proof. Let $\omega \in
\Omega_{t,\varepsilon}$. Thus, by assumption, for each $N\in\N$
there exists a cube $\Lambda_N$ of side length $N$ such that
$|\omega_n-t| < \varepsilon$ for all $n\in \Lambda_N$.

To $a\in [-2d,2d]$ pick $x=(x_1,\ldots,x_d)$ such that $a=-2\sum_j
\cos(x_j)$ and consider the corresponding generalized eigenfunction
$\phi_x$ from (\ref{eq:genef}). Then $\psi_N := \chi_{\Lambda_N}
\phi_x$ has finite support and, in particular, lies in $ \ell^2
(\Z^d)$. We claim that
\begin{equation} \label{eq:weyl}
\limsup_{N\to\infty} \frac{\|(h_{\omega}-(a+t))\psi_N\|}{\|\psi_N\|}
\le \varepsilon.
\end{equation}

To find a norm bound for $(h_{\omega}-(a+t))\psi_N =
(h_0-a)\chi_{\Lambda_N} \phi_x +(V_{\omega}-t) \chi_{\Lambda_N}
\phi_x$, we first note that by assumption
$\|(V_{\omega}-t)\chi_{\Lambda_N} \phi_x\| \le \varepsilon
\|\psi_N\|$. Moreover, as $(h_0-a)\phi_x =0$, it follows that
$((h_0-a)\chi_{\Lambda_N} \phi_x)(n)$ is non-zero only for $n$ close
to the boundary of $\Lambda_N$, where its values are bounded by a
constant independent of $N$. Thus
\[ \|(h_{\omega}-(a+t))\psi_N\| \le CN^{(d-1)/2} + \varepsilon \|\psi_N\|.\]
On the other hand we have $\|\psi_N\| = N^{d/2}$. This proves
(\ref{eq:weyl}).

We conclude by a standard argument: If $h_{\omega}-(a+t)$ is
invertible, then, by (\ref{eq:weyl}),
\[ \|(h_{\omega}-(a+t))^{-1}\| \ge \frac{1}{\varepsilon}.\]
This implies (\ref{eq:specfact}) by using the fact that for general
self-adjoint operators $A$ it holds that
\[ \|(A-z)^{-1}\| = \frac{1}{\mbox{dist}(z,\sigma(A))}.\]

\end{proof}

\section{Localization Properties} \label{sec:locproperties}

We will be interested in {\it localization properties} of the
Anderson model, which can be described either by spectral properties
or by dynamical properties of the Hamiltonian.

To be more precise, let $I\subset \R$ be an open interval. We say
that $h_{\omega}$ exhibits {\it spectral localization} in $I$ if
$h_{\omega}$ almost surely has pure point spectrum in $I$, i.e.\ $I$
does not contain any continuous spectrum of $h_{\omega}$, and its
eigenfunctions to all eigenvalues in $I$ decay exponentially.

If $I$ is a non-trivial interval contained in the almost sure
spectrum of $h_{\omega}$, which is a union of intervals, then
spectral localization in $I$ necessarily means that the spectrum
consists of a dense set of eigenvalues (whose closure fills all of
$I$). This phenomenon is very different and much more subtle than
the appearance of {\it discrete} isolated eigenvalues, which is the
classical situation encountered in atomic or molecular hamiltonians.
In fact, the possibility of dense pure point spectrum historically
can be considered as the biggest mathematical surprise provided by
the investigation of the Anderson model.

On the other hand, we say that $h_{\omega}$ exhibits {\it dynamical
localization} in $I$ if there exist constants $C<\infty$ and
$\mu>0$ such that
\begin{equation} \label{eq:dynloc}
\E \left( \sup_{t\in \R} | \langle e_j, e^{-ith_{\omega}}
\chi_I(h_{\omega}) e_k \rangle | \right) \le C e^{-\mu |j-k|},
\end{equation}
for all $j, k \in \Z^d$. Here $\{e_j\}_{j\in \Z^d}$ is the canonical
orthonormal basis in $\Z^d$, $e_j(k) = \delta_{jk}$, and $\E(\cdot)$
denotes the expectation with respect to the probability measure
$\PP$, meaning $\E(X) = \int_{\Omega} X \,d\PP$ for random variables
$X$ on $\Omega$. Both, $e^{-ith_{\omega}}$ as well as
$\chi_I(h_{\omega})$, are defined via the functional calculus for
self-adjoint operators. By $\chi_I$ we denote the characteristic
function of $I$, so that $\chi_I(h_{\omega})$ is the spectral
projection for $h_{\omega}$ onto $I$.

Dynamical localization in the form (\ref{eq:dynloc}) is a strong
form of asserting that solutions of the time-dependent Schr\"odinger
equation $h_{\omega} \psi(t) = i\partial_t \psi(t)$ are staying
localized in space, uniformly for all times, and thus shows the
absence of {\it quantum transport}. Let us illustrate this by
showing that dynamical localization implies that all moments of the
position operator are bounded in time, i.e.\ for all $p>0$ and all
finitely supported $\psi \in \ell^2(\Z^d)$,
\begin{equation} \label{eq:moments}
\sup_{t\in \R} \| |X|^p e^{-ith_{\omega}} \chi_I(h_{\omega}) \psi\|
< \infty \quad \mbox{almost surely},
\end{equation}
where the position operator $|X|$ is defined by $(|X|\phi)(n) =
|n|\phi(n)$. To see how (\ref{eq:moments}) follows from
(\ref{eq:dynloc}), assume that $\psi(k)=0$ for $|k|>R$.  Then
\begin{eqnarray*}
\||X|^p e^{-ith_{\omega}} \chi_I(h_{\omega}) \psi\|^2 & = & \sum_j
\left| \langle e_j, |X|^p e^{-ith_{\omega}} \chi_I(h_{\omega}) \psi
\rangle \right|^2 \\
& = & \sum_j |j|^{2p} \left| \sum_{|k|\le R} \langle e_j,
e^{-ith_{\omega}} \chi_I(h_{\omega}) e_k \rangle \psi(k) \right|^2
\\
& \le & \sum_j \sum_{|k|\le R} |j|^{2p} \left| \langle e_j,
e^{-ith_{\omega}} \chi_I(h_{\omega}) e_k \rangle \right|^2
\|\psi\|^2,
\end{eqnarray*}
where the last step used the Cauchy-Schwarz inequality. We can drop
the square from $|\langle e_j, e^{-ith_{\omega}} \chi_I(h_{\omega})
e_k \rangle |^2$ (as this number is bounded by $1$) and then take
expectations to get
\begin{eqnarray*}
\lefteqn{\E \left( \sup_t \||X|^p e^{-ith_{\omega}} \chi_I(h_{\omega})
\psi\|^2 \right)} \\ & \le & \sum_j \sum_{|k|\le R} |j|^{2p} \E \left(
\sup_t |\langle e_j, e^{-ith_{\omega}} \chi_I(h_{\omega}) e_k
\rangle | \right) \|\psi\|^2 \\
& \le & C \sum_j \sum_{|k|\le R} |j|^{2p} e^{-\mu|j-k|} \|\psi\|^2
\\
& < & \infty.
\end{eqnarray*}
This implies the almost sure statement in (\ref{eq:moments}) (with
square at the norm and therefore also without).

Dynamical localization is not only the physically more interesting
statement than spectral localization (as physicists usually have
little patience and limited appreciation for spectral theory), it is
also the mathematically stronger property: We will show later that
dynamical localization in $I$ implies spectral localization in $I$.

Let us discuss situations in which localization, spectral or
dynamical, is expected physically. For this it will help to
introduce an additional disorder parameter $\lambda>0$ in the
Anderson model and define
\begin{equation} \label{eq:Andersondis}
h_{\omega,\lambda} = h_0 + \lambda V_{\omega},
\end{equation}
with $V_{\omega}(n)= \omega_n$ as above. Formally, this fits into
the same framework as (\ref{eq:Anderson}), using the re-scaled
distribution
\begin{equation} \label{eq:scaling}
\PP(\lambda \omega_n \in B) = \mu_{\lambda}(B) := \mu(B/\lambda)
\end{equation}
 of the i.i.d.\ random variables $\lambda \omega_n$.
The distribution $\mu_{\lambda}$ is spread out over larger supports
for larger $\lambda$, corresponding to a wider range of possible
random charges in an alloy-type medium. Thus $\lambda
>>1$ is the case of {\it large disorder} and $\lambda << 1$
represents {\it small disorder}.

Physicists know all of the following:

In dimension $d=1$ the entire spectrum of $h_{\omega,\lambda}$ is
localized for {\it any} value of the disorder $\lambda>0$.

In dimension $d\ge 2$ the entire spectrum is localized at {\it large
disorder}, i.e.\ for $\lambda>>1$.

For small disorder $\lambda$ different behavior arises in dimensions
$d=2$ and $d=3$. For $d=2$ one still has localization of the entire
spectrum, but possibly in a weaker form than for $d=1$, e.g.\ a
small amount (or weak type) of quantum transport might be possible.
On the other hand, in dimension $d=3$ one observes the {\it Anderson
transition}. There are localized regions near the band edges of the
almost sure spectrum, separated by {\it mobility edges} from a
region of {\it extended states} in the center of the spectrum.
Extended states are interpreted as the existence of quantum
transport in the sense that the moments (\ref{eq:moments}) should be
infinite for sufficiently large $p$. The physical expectation for
$d=3$ is that this starts at $p=1/2$, which corresponds to the
presence of diffusive motion.

Mathematically, localization has been proven for three different
regimes: (i) for all energies and arbitrary disorder in $d=1$, (ii)
in any dimension and for all energies at sufficiently large
disorder, and (iii) near band edges of the spectrum in any dimension
and for arbitrary disorder.

The mechanisms which cause localization in the Anderson model are
fundamentally different for the one-dimensional and
multi-dimensional case, which is also reflected in the mathematical
methods which have been used to prove this. In $d=1$ strong tools
from the theory of one-dimensional dynamical systems are available,
in particular results on the asymptotics of products of independent
random variables which allow to prove positivity of Lyapunov
exponents.  Large parts of the books \cite{Carmona/Lacroix} and
\cite{Pastur/Figotin} are devoted to the presentation of the
one-dimensional theory. A complete presentation of the
Kunz-Souillard proof of localization for the one-dimensional
Anderson model can be found in \cite{CFKS}. For a somewhat later
survey of results on one-dimensional localization see
\cite{StolzIndia}.

As discussed in the introduction, we will focus here on methods
which allow to prove multi-dimensional localization and, among the
two methods which have been shown to accomplish this, focus on the
fractional moments method. Using this method we will give a detailed
proof of large disorder localization and also explain how it works
to show band edge localization, in each case in arbitrary dimension.

We will not discuss localization proofs via multiscale analysis.
Excellent introductions to this method can be found in \cite{Kirsch}
and \cite{Stollmann}, while the state of the art of what can be
obtained from Fr\"ohlich-Spencer-type multiscale analysis is
presented in \cite{Germinet/Klein} and the review \cite{Klein}. We
also mention the recent powerful extension of the ideas behind
multiscale analysis in \cite{Bourgain/Kenig}, which allow to prove
localization for continuum Anderson models (see
Section~\ref{sec:continuum}) with discretely distributed random
couplings, a result which is beyond what can be obtained by the
fractional moments method.

\section{Localization at large disorder} \label{sec:largedisorder}

Consider the Anderson model (\ref{eq:Andersondis}) at disorder
$\lambda>0$ and in any dimension $d\ge 1$.

Throughout the rest of these notes we will work with a stronger
assumption on the distribution $\mu$ of the random parameters
$\omega_n$, namely that $\mu$ is absolutely continuous with density
$\rho$,
\begin{equation} \label{eq:density}
\mu(B) = \int_B \rho(v)\,dv \quad \mbox{for $B\subset \R$\;Borel},
\quad \rho \in L^{\infty}_0(\R),
\end{equation}
i.e.\ $\rho$ is bounded and has compact support. In particular, this
means that the Anderson hamiltonian $h_{\omega,\lambda}$ is a
bounded self-adjoint operator in $\ell^2(\Z^d)$.

Introduce the {\it Green function} as the matrix-elements of the
resolvent of $h_{\omega,\lambda}$,
\begin{equation} \label{eq:green}
G_{\omega,\lambda}(x,y;z) := \langle e_x,
(h_{\omega,\lambda}-z)^{-1} e_y \rangle.
\end{equation}

Our first goal is to prove

\begin{theorem}[\cite{Aizenman/Molchanov}] \label{thm1}
Let $0<s<1$. Then there exists $\lambda_0>0$ such that for $\lambda
\ge \lambda_0$ there are $C<\infty$ and $\mu>0$ with
\begin{equation} \label{eq:fmdecay}
\E \left( |G_{\omega,\lambda}(x,y;z)|^s \right) \le Ce^{-\mu |x-y|}
\end{equation}
uniformly in $x,y \in \Z^d$ and $z\in \C\setminus \R$.
\end{theorem}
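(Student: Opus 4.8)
The plan is to prove exponential decay of the fractional moments of the Green function via a self-consistent (Aizenman--Molchanov-type) argument built on two ingredients: an \emph{a priori} bound on $\E(|G_{\omega,\lambda}(x,y;z)|^s)$ that is uniform in $z\in\C\setminus\R$, and a \emph{decoupling} estimate that lets one sum over nearest neighbors and close an inequality that forces geometric decay once $\lambda$ is large. The key point to exploit is that, for $0<s<1$, the map $v\mapsto |G|^s$ has integrable singularities when viewed as a function of the single coupling constant $\omega_x$ (or $\omega_y$) at one site; this is where the assumption $\rho\in L^\infty_0(\R)$ enters decisively.

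First I would establish the a priori bound: there is a finite constant $C_s$, depending only on $s$ and $\|\rho\|_\infty$ (and $\lambda$), such that $\E(|G_{\omega,\lambda}(x,y;z)|^s)\le C_s$ uniformly in $x\neq y$ and $z\in\C\setminus\R$. The mechanism is a rank-one analysis: writing $h_{\omega,\lambda}=A+\lambda\omega_x\langle e_x,\cdot\rangle e_x$ with $A$ independent of $\omega_x$, the Krein formula expresses $G_{\omega,\lambda}(x,y;z)$ as a Möbius function of $\omega_x$, and for such a function one has the elementary bound $\int_{\R}|\,(\alpha\omega+\beta)/(\gamma\omega+\delta)\,|^s\,\rho(\omega)\,d\omega\le C(s)\|\rho\|_\infty^{\,?}$ — more precisely, one uses that $\int |t-w|^{-s}\,dt$ over a bounded set is finite for $s<1$, so the conditional expectation over $\omega_x$ (with all other couplings fixed) of $|G|^s$ is bounded. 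Integrating out the remaining variables preserves the bound. I would state this as a lemma. The diagonal case $x=y$ needs a small variant (the Green function is a Herglotz function of $\omega_x$), but for the theorem only $x\neq y$ matters in the end; $x=y$ is handled by the same rank-one estimate.

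Next comes the self-consistent inequality. Using the resolvent identity $h_{\omega,\lambda}=h_0+\lambda V_\omega$ together with $(h_{\omega,\lambda}-z)(h_{\omega,\lambda}-z)^{-1}=I$, one derives, for $x\neq y$, the algebraic relation
\[
(\lambda\omega_x - z)\,G_{\omega,\lambda}(x,y;z) \;=\; \delta_{xy} \;+\; \sum_{|k|=1} G_{\omega,\lambda}(x+k,y;z),
\]
so that
\[
|G_{\omega,\lambda}(x,y;z)|^s \;\le\; \frac{1}{|\lambda\omega_x - z|^s}\sum_{|k|=1}|G_{\omega,\lambda}(x+k,y;z)|^s .
\]
Now take conditional expectation over $\omega_x$ only. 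The crucial decoupling step is that $G_{\omega,\lambda}(x+k,y;z)$ still depends on $\omega_x$, so one cannot simply pull it out; instead one uses a second rank-one representation to show that conditioning on the \emph{sigma-algebra generated by all $\omega_j$, $j\neq x$} and then averaging produces a factor $\E_{\omega_x}(|\lambda\omega_x-z|^{-s}\,|\,\cdot\,)$ times the neighbors' fractional moments, up to a multiplicative constant independent of $z$ — this is the standard ``$a$-priori bound plus rank-one decoupling'' lemma, and $\E_{\omega_x}(|\lambda\omega_x - z|^{-s}) \le \lambda^{-s}\int|\omega - z/\lambda|^{-s}\rho(\omega)\,d\omega \le \lambda^{-s}\, C(s)\|\rho\|_\infty^{\,s}$ by the same integrability as before. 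Writing $\Gamma(x,y):=\E(|G_{\omega,\lambda}(x,y;z)|^s)$, this yields, for $x\neq y$,
\[
\Gamma(x,y)\;\le\; \frac{B(s)}{\lambda^{s}}\sum_{|k|=1}\Gamma(x+k,y)
\]
with $B(s)$ finite and independent of $\lambda$ and $z$ (the a priori bound is used to absorb the term where $x+k=y$). Choosing $\lambda_0$ so that $q:=2d\,B(s)\lambda_0^{-s}<1$, one iterates this inequality along paths from $x$ to $y$: the number of nearest-neighbor paths of length $n$ from $x$ is at most $(2d)^n$, each contributing a factor $(B(s)\lambda^{-s})^n$ and the bounded $\Gamma$ at the far end, so $\Gamma(x,y)\le C_s\,q^{\,|x-y|}$, which is \eqref{eq:fmdecay} with $\mu=\log(1/q)>0$.

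I expect the main obstacle to be the decoupling step — justifying rigorously that conditioning over $\omega_x$ can be performed ``for free'' on the left while the dependence of the neighboring Green functions on $\omega_x$ is controlled by a uniform constant. The honest way to do this is a two-step rank-one argument (sometimes phrased via the ``spectral averaging'' identity for $|G|^s$ over a rank-one perturbation), which is clean but must be stated carefully so that the constant $B(s)$ genuinely does not depend on $z$ or on the other couplings; the boundedness of $\rho$ and the fact that $0<s<1$ (so $|t|^{-s}$ is locally integrable) are exactly what make it work. Everything else — the resolvent identity, the path-counting iteration, the choice of $\lambda_0$ — is routine once that lemma is in hand, and the bound is manifestly uniform in $z\in\C\setminus\R$ because none of the constants involved ever see $\mathrm{Im}\,z$.
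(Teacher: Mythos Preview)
Your overall strategy --- a priori bound, resolvent identity at one endpoint, then iterate to force geometric decay --- is exactly the paper's. The a priori bound and the path-iteration are set up correctly. The gap is in the decoupling step, which you rightly flag as the crux but do not actually carry out.

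You first write the pointwise bound $|G(x,y)|^s \le |\lambda\omega_x - z|^{-s}\sum_{|k|=1}|G(x+k,y)|^s$ and then want $\E_{\omega_x}\bigl(|\lambda\omega_x - z|^{-s}|G(x+k,y)|^s\bigr) \le (B/\lambda^s)\,\E(|G(x+k,y)|^s)$ with $B$ uniform in $z$ and in the remaining couplings. No such bound is supplied, and it is not clear one exists: as a function of $\omega_x$, $G(x+k,y;z)$ is M\"obius with a pole at $-1/(\lambda G_{\hat\omega}(x,x;z))$, and as $\mathrm{Im}\,z\to 0$ this pole and the singularity of $|\lambda\omega_x-z|^{-s}$ can both approach the real axis. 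When their real parts coincide the product behaves like $|v-v_0|^{-2s}$ near $v_0$, which for $s\ge 1/2$ is not locally integrable, so the ratio you want need not be uniformly bounded. Your appeal to a ``standard a-priori bound plus rank-one decoupling lemma'' does not name a result that delivers this.

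The paper's fix is to reverse the order of the two steps. Using the Krein formula $G(x,y;z)=\alpha/(\lambda\omega_y-\beta)$ with $\alpha,\beta$ independent of $\omega_y$, it invokes a Decoupling Lemma: for bounded, compactly supported $\rho$ there is $C_2<\infty$ with
\[
\int\frac{\rho(v)\,dv}{|v-\beta|^s}\;\le\;C_2\int\frac{|v-\eta|^s}{|v-\beta|^s}\,\rho(v)\,dv\qquad\text{uniformly in }\eta,\beta\in\C.
\]
Taking $\eta=z/\lambda$ this \emph{inserts} the factor $|\lambda\omega_y-z|^s$ at cost $C_2$, giving
\[
\E\bigl(|G(x,y)|^s\bigr)\;\le\;\frac{C_2}{\lambda^s}\,\E\bigl(|(\lambda\omega_y-z)\,G(x,y)|^s\bigr)\;\le\;\frac{C_2}{\lambda^s}\sum_{u:|u-y|=1}\E\bigl(|G(x,u)|^s\bigr),
\]
the last step now being the \emph{pointwise} resolvent identity, with no residual correlation to disentangle. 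Inserting a positive power $|v-\eta|^s$ is uniformly harmless (it is bounded on $\mathrm{supp}\,\rho$ and its reciprocal is handled by the lemma), whereas removing $|v-\eta|^{-s}$ is not --- that asymmetry is the missing idea in your sketch. Once this is in place, your iteration and choice of $\lambda_0$ go through verbatim.
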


It is the appearance of fractional moments of the form
$\E(|\cdot|^s)$, $0<s<1$, in the above theorem which prompted the
name ``fractional moments method'' for the circle of ideas which we
want to present here. The method is also frequently called the
``Aizenman-Molchanov method'', as Aizenman and Molchanov did not
only realize that results such as Theorem~\ref{thm1} hold, but that
they imply spectral and dynamical localization. These implications
will be discussed in the next section.

The following proof of Theorem~\ref{thm1} closely follows the
original ideas from \cite{Aizenman/Molchanov}. We start with two
lemmas, an {\it a-priori bound} on the fractional moments of Green's
function and a {\it decoupling lemma}, which contain central ideas
behind the method and, in increasing degree of sophistication, have
been used in all subsequent developments of the method.

\begin{lemma}[A priori bound] \label{lem1}
There exists a constant $C_1 = C_1(s,\rho)<\infty$ such that
\begin{equation} \label{eq:apriori}
\E_{x,y}(|G_{\omega,\lambda}(x,y;z)|^s) \le C_1 \lambda^{-s}
\end{equation}
for all $x,y \in \Z^d$, $z \in \C\setminus \R$, and $\lambda>0$.
\end{lemma}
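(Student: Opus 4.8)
The whole statement reduces to one elementary fact about the single‑site density: since $\rho\in L^\infty_0(\R)$ and $0<s<1$, the singularity $|v|^{-s}$ is integrable, so
\[
\kappa_s(\rho):=\sup_{\zeta\in\C}\int_\R\frac{\rho(v)}{|v-\zeta|^s}\,dv\;\le\;\frac{2}{1-s}\,\|\rho\|_\infty\,\bigl(\operatorname{diam}\operatorname{supp}\rho\bigr)^{1-s}\;<\;\infty ,
\]
and, after the substitution $v\mapsto\lambda v$, $\int_\R\rho(v)\,|\lambda v-\zeta|^{-s}\,dv=\lambda^{-s}\int_\R\rho(v)\,|v-\zeta/\lambda|^{-s}\,dv\le\kappa_s(\rho)\,\lambda^{-s}$, uniformly in $\zeta\in\C$; this is the only place the power $\lambda^{-s}$ enters, and it is what forces $s<1$. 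Throughout I freeze all coordinates of $\omega$ except the one or two over which I integrate, and all bounds will be uniform in the frozen configuration and in $z\in\C\setminus\R$. For $x=y$, write $h_{\omega,\lambda}=h^{(x)}+\lambda\omega_x P_x$, with $P_x$ the rank‑one projection onto $e_x$ and $h^{(x)}$ independent of $\omega_x$; since $h^{(x)}$ is self‑adjoint and $\operatorname{Im}z\neq0$ one has $\langle e_x,(h^{(x)}-z)^{-1}e_x\rangle\ne0$, and the rank‑one resolvent (Krein) formula gives $G_{\omega,\lambda}(x,x;z)=(\lambda\omega_x+\zeta_x)^{-1}$ with $\zeta_x=\langle e_x,(h^{(x)}-z)^{-1}e_x\rangle^{-1}\in\C$. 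Integrating over $\omega_x$ with the rescaled estimate ($\zeta=-\zeta_x$) gives $\E_x\bigl(|G_{\omega,\lambda}(x,x;z)|^s\bigr)\le\kappa_s(\rho)\,\lambda^{-s}$.

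For $x\neq y$ write $h_{\omega,\lambda}=h^{(x,y)}+\lambda\omega_x P_x+\lambda\omega_y P_y$ with $h^{(x,y)}$ independent of $(\omega_x,\omega_y)$, and let $M$ be the $2\times2$ matrix $M_{ab}=\langle e_a,(h^{(x,y)}-z)^{-1}e_b\rangle$, $a,b\in\{x,y\}$; $M$ is complex symmetric and, since $\operatorname{Im}z\neq0$, invertible. The rank‑two Krein formula gives
\[
\begin{pmatrix}G(x,x;z)&G(x,y;z)\\ G(y,x;z)&G(y,y;z)\end{pmatrix}=\bigl(M^{-1}+\operatorname{diag}(\lambda\omega_x,\lambda\omega_y)\bigr)^{-1},
\]
hence, writing $p=M_{xx}$, $q=M_{yy}$, $r=M_{xy}$,
\[
G_{\omega,\lambda}(x,y;z)=\frac{r}{(1+\lambda\omega_x p)(1+\lambda\omega_y q)-\lambda^2\omega_x\omega_y r^2}.
\]
The denominator is affine in $\omega_x$ (and in $\omega_y$) with non‑vanishing leading coefficient — non‑vanishing because $G(x,x;z)$ and $G(y,y;z)$ are nonzero, which holds since $\operatorname{Im}\langle e_x,(h_{\omega,\lambda}-z)^{-1}e_x\rangle=\operatorname{Im}(z)\,\|(h_{\omega,\lambda}-z)^{-1}e_x\|^2\ne0$ — so the single‑site estimate is applicable in either variable, and the plan is to integrate $|G_{\omega,\lambda}(x,y;z)|^s$ over the pair $(\omega_x,\omega_y)$.

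The crux, and the step I expect to be the real obstacle, is that one must \emph{not} integrate out $\omega_x$ and $\omega_y$ one after the other: that would leave the residual factor $|r|/|\det M|=|(M^{-1})_{xy}|$, which by the Schur‑complement formula for $M^{-1}$ can be of size $|\operatorname{Im}z|^{-1}$ and is therefore not bounded uniformly in $z$. The $(\omega_x,\omega_y)$‑integration has to be coupled. The natural way is a dichotomy on the size of $|\det M|$: when $|\det M|$ is large (on the scale $\max(|p|,|q|)/\lambda$), the two‑step estimate already produces $O(\lambda^{-2s})$ with a controlled residual; when $|\det M|$ is small one has $r^2\approx pq$, so that up to an error absorbable into the bound the denominator equals $1+\lambda(p\,\omega_x+q\,\omega_y)$, an affine function of the \emph{single} combination $p\,\omega_x+q\,\omega_y$ — integrating first over whichever of $\omega_x,\omega_y$ carries the larger of $|p|,|q|$ then yields a factor $\kappa_s(\rho)\,\lambda^{-s}\max(|p|,|q|)^{-s}$, against which the numerator $|r|^s\approx(|p||q|)^{s/2}\le\max(|p|,|q|)^s$ is bounded. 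Tracking the constants through this case distinction gives $C_1=C_1(s,\rho)$. The rank‑one/two resolvent algebra and the elementary $s<1$ integral bound are routine; essentially all of the work is in making this coupled two‑dimensional integration, and with it the uniformity in $z$, rigorous.
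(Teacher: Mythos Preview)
Your diagonal case matches the paper exactly. For $x\neq y$ you set up the same rank-two Krein formula and correctly diagnose the obstruction to sequential integration, but your proposed resolution --- a dichotomy on $|\det M|$ with an approximation $r^2\approx pq$ in the small-determinant regime --- is both different from the paper's argument and, as you yourself concede, not carried through.

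The paper bypasses any case analysis by a change of variables: with $u=\tfrac12(\omega_x+\omega_y)$ and $v=\tfrac12(\omega_x-\omega_y)$ one has $\operatorname{diag}(\omega_x,\omega_y)=uI+\operatorname{diag}(v,-v)$, so the matrix to invert becomes $B-uI$ with $B=-\lambda^{-1}A+\operatorname{diag}(-v,v)$ (here $A=M^{-1}$ in your notation), and $u$ enters only as a \emph{scalar} shift. The problem then reduces to the one-dimensional estimate
\[
\int_{-r}^{r}\|(B-uI)^{-1}\|^{s}\,du\le C(r,s)
\]
uniformly over all $2\times2$ matrices $B$ with $\operatorname{Im}B\ge0$ (or $\le0$). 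This is proved by Schur-triangularizing $B$ and bounding the three entries of $(B-uI)^{-1}$ separately, the off-diagonal one via the positivity of $\det(\operatorname{Im}B)$. The trick costs only a Jacobian factor and an enlargement of the integration box, and delivers the uniform-in-$z$ bound directly.

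By contrast, your dichotomy still has real work left: you must quantify the threshold for ``$|\det M|$ small'' uniformly in $z$, and in that regime control the cross-term $\lambda^{2}\omega_x\omega_y\det M$ against the affine part $1+\lambda(p\,\omega_x+q\,\omega_y)$ --- which may itself be small. That is exactly the place where uniformity in $z$ can leak, and you have not shown how to close it. The paper's change of variables is the missing idea that makes the two-dimensional integration routine.
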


Here
\[
\E_{x,y}(\ldots) = \int \int \ldots \rho(\omega_x)\,d\omega_x
\,\rho(\omega_y)\,d\omega_y
\]
is the conditional expectation with $(\omega_u)_{u\in \Z^d \setminus
\{x,y\}}$ fixed. After averaging over $\omega_x$ and $\omega_y$ the
bound in (\ref{eq:apriori}) does not depend on the remaining random
parameters. Thus we get also that
\[
\E(|G_{\omega,\lambda}(x,y;z)|^s) \le C_1 \lambda^{-s}.
\]

\begin{proof}
We first prove (\ref{eq:apriori}) for the case $x=y$, which
demonstrates the simplicity of the fundamental idea underlying the
FMM. For fixed $x\in \Z^d$, write $\omega = (\hat{\omega},
\omega_x)$ where $\hat{\omega}$ is short for $(\omega_u)_{u \in \Z^d
\setminus \{x\}}$. With $P_{e_x} := \langle e_x, \cdot \rangle e_x$,
the orthogonal projection onto the span of $e_x$, we can separate
the $\omega_x$ and $\hat{\omega}$ dependence of $h_{\omega,\lambda}$
as
\[
h_{\omega,\lambda} = h_{\hat{\omega},\lambda} + \lambda \omega_x
P_{e_x}.
\]
The resolvent identity yields
\begin{equation} \label{eq:resid}
(h_{\omega,\lambda}-z)^{-1} = (h_{\hat{\omega},\lambda}-z)^{-1}
-\lambda \omega_x (h_{\hat{\omega},\lambda}-z)^{-1} P_{e_x}
(h_{\omega,\lambda}-z)^{-1}.
\end{equation}
Taking matrix-elements we conclude for the corresponding diagonal
Green functions that
\begin{equation} \label{eq:diaggreen}
G_{\omega,\lambda}(x,x;z) = G_{\hat{\omega},\lambda}(x,x;z) -
\lambda \omega_x G_{\hat{\omega},\lambda}(x,x;z)
G_{\omega,\lambda}(x,x;z)
\end{equation}
or
\begin{equation} \label{eq:solveG}
G_{\omega,\lambda}(x,x;z) = \frac{1}{a+\lambda \omega_x} \quad
\mbox{with $a= \frac{1}{G_{\hat{\omega},\lambda}(x,x;z)}$}.
\end{equation}
Note that the latter is well-defined since one can easily check the
Herglotz property
$\mbox{Im}\,G_{\hat{\omega},\lambda}(x,x;z)/\mbox{Im}\,z
>0$ of the Green function.

The important fact is that $a$ is a complex number which does not
depend on $\omega_x$. Thus, writing $\E_x(\ldots) := \int \ldots
\rho(\omega_x)\,d\omega_x$, we find that
\begin{equation} \label{eq:fracint}
\E_x(|G_{\omega,\lambda}(x,x;z)|^s) \le
\frac{\|\rho\|_{\infty}}{\lambda^s} \int_{\mbox{supp}\,\rho}
\frac{d\omega_x}{|\frac{a}{\lambda} +\omega_x|^s} \le
\frac{C(\rho,s)}{\lambda^s},
\end{equation}
with $C(\rho,s)$ independent of $\lambda$ and $a$, and thus
independent of $\hat{\omega}$, $z$ and $x$.

The proof of (\ref{eq:apriori}) for $x\not= y$ is based on the same
idea, replacing the rank-one-perturbation arguments above with
rank-two-perturbation arguments. Write $\omega = (\hat{\omega},
\omega_x, \omega_y)$, $P=P_{e_x}+P_{e_y}$ and
\[
h_{\omega,\lambda} = h_{\hat{\omega},\lambda} + \lambda \omega_x
P_{e_x} + \lambda \omega_y P_{e_y}.
\]
Using the resolvent identity similar to above one arrives at
\begin{equation} \label{eq:krein}
P(h_{\omega,\lambda}-z)^{-1} P = \left( A+\lambda \left(
\begin{array}{cc} \omega_x & 0 \\ 0 & \omega_y \end{array} \right)
\right)^{-1},
\end{equation}
where
\[
A = (P(h_{\hat{\omega},\lambda}-z)^{-1} P)^{-1},
\]
both to be read as identities for $2\times 2$-matrices in the range
of $P$. This is a special case of the {\it Krein formula} which
characterizes the resolvents of finite-rank perturbations of general
self adjoint operators. For the matrix $A$ one can check that Im$\,A
= \frac{1}{2i}(A-A^*)<0$ if Im$\,z>0$ and Im$\,A>0$ if Im$\,z<0$. It
is also independent of $\omega_x$ and $\omega_y$.

Using that $G_{\omega,\lambda}(x,y;z)$ is one of the matrix-elements
of $P(h_{\omega,\lambda}-z)^{-1} P$, we find
\begin{eqnarray*}
\E_{x,y}(|G_{\omega,\lambda}(x,y;z)|^s) & \le &
\E_{x,y}\left(\left\|\left(A+ \lambda \left(
\begin{array}{cc} \omega_x & 0 \\ 0 & \omega_y \end{array}
\right)\right)^{-1}\right\|^s\right) \\
& = & \lambda^{-s} \E_{x,y}\left(\left\|\left(-\frac{1}{\lambda}A
-\left(
\begin{array}{cc} \omega_x & 0 \\ 0 & \omega_y \end{array}
\right)\right)^{-1}\right\|^s\right)\\
& \le & \frac{\|\rho\|^2_{\infty}}{\lambda^s} \int_{-r}^r
\int_{-r}^r
\left\| \left( -\frac{1}{\lambda}A - \left( \begin{array}{cc} \omega_x & 0 \\
0 & \omega_y \\ \end{array} \right) \right)^{-1}
\right\|^s\,d\omega_x\, d\omega_y,
\end{eqnarray*}
where $[-r,r]$ is an interval containing supp$\,\rho$. In the double
integral we change variables to
\[
u=\frac{1}{2}(\omega_x+\omega_y), \quad
v=\frac{1}{2}(\omega_x-\omega_y),
\]
which gives a Jacobian factor of $2$. As $(\omega_x,\omega_y)\in
[-r,r]^2$ implies $(u,v)\in [-r,r]^2$ we arrive at the bound
\begin{eqnarray*}
\lefteqn{\E_{x,y}(|G_{\omega,\lambda}(x,y;z)|^s)} \\
& \le & \frac{2\|\rho\|_{\infty}^2}{\lambda^s} \int_{-r}^r
\int_{-r}^r \left\| \left( -\frac{1}{\lambda} A + \left(
\begin{array}{cc} -v & 0 \\ 0 & v \end{array} \right)
-uI\right)^{-1} \right\|^s \,du\,dv \\
& \le & \frac{4r\|\rho\|^2_{\infty}}{\lambda^s} C(r,s) =
\frac{C(s,\rho)}{\lambda^s}.
\end{eqnarray*}
That the latter bound is uniform in $x$, $y$ and $z$ as well as in
the random parameters $(\omega_u)_{u\in \Z^d\setminus \{x,y\}}$
follows from the fact that the matrix
\[ -\frac{1}{\lambda} A + \left( \begin{array}{cc} -v & 0 \\ 0 & v
\end{array} \right) \]
has either positive or negative imaginary part and the following
general result:

For every $s\in (0,1)$ and $r>0$ there exists $C(r,s)<\infty$ such
that
\begin{equation} \label{eq:2by2fracbound}
\int_{-r}^r \|(B-uI)^{-1}\|^s\,du \le C(r,s)
\end{equation}
for all $2\times 2$-matrices $B$ such that either Im$\,B \ge 0$ or
Im$\,B\le 0$.

Let us reproduce an elementary proof of this fact, e.g.\ Lemma~4.1
in \cite{HJS},  starting with the observation that, by Schur's
Theorem, $B$ may be assumed upper triangular. We also may assume
without loss that Im$\,B \ge 0$.

Thus
\begin{equation} \label{eq:Aform}
B = \left( \begin{array}{cc} b_{11} & b_{12} \\ 0 & b_{22}
\end{array} \right)
\end{equation}
and
\begin{equation} \label{eq:disinvers}
(B-uI)^{-1} = \left( \begin{array}{cc} \frac{1}{b_{11}-u} & -
\frac{b_{12}}{(b_{11}-u)(b_{22}-u)} \\ 0 & \frac{1}{b_{22}-u}
\end{array} \right).
\end{equation}
The bound (\ref{eq:2by2fracbound}) follows if we can establish a
corresponding fractional integral bound for the absolute value of
each entry of (\ref{eq:disinvers}) separately. For the diagonal
entries this is obvious.

We bound the upper right entry of (\ref{eq:disinvers}) by
\begin{eqnarray} \label{eq:upperright}
\left| \frac{b_{12}}{(b_{11}-u)(b_{22}-u)}\right| & \le & \frac{|b_{12}|}{|\mbox{Im}\,((b_{11}-u)(b_{22}-u))|} \nonumber \\
& = & \frac{1}{\left| u \frac{\mbox{\footnotesize Im}\,b_{11}
+\mbox{\footnotesize Im}\,b_{22}}{|b_{12}|}
-\frac{\mbox{\footnotesize Im}(b_{11}b_{22})}{|b_{12}|} \right|}.
\end{eqnarray}
The positive matrix
\[
\mbox{Im}\,B = \left( \begin{array}{cc} \mbox{Im}\,b_{11} &
\frac{1}{2i}b_{12} \\ -\frac{1}{2i}\bar{b}_{12} & \mbox{Im}\,b_{22}
\end{array} \right)
\]
has positive determinant, i.e.\ det Im$\,B = \mbox{Im}\,b_{11}
\mbox{Im}\,b_{22} -|b_{12}|^2/4$. We thus get
\[
\left| \frac{\mbox{Im}\,b_{11} + \mbox{Im}\,b_{22}}{b_{12}}\right|^2
\ge \frac{2\mbox{Im}\,b_{11} \mbox{Im}\,b_{22}}{|b_{12}|^2} \ge
\frac{1}{2}.
\]
The latter allows to conclude the required integral bound for
(\ref{eq:upperright}).

\end{proof}

The other result needed for the proof of Theorem~\ref{thm1} is

\begin{lemma}[Decoupling Lemma] \label{lem2}
For a compactly supported and bounded density function $\rho$ as
above there exists a constant $C_2<\infty$ such that
\begin{equation} \label{eq:decoupling}
\frac{\int \frac{1}{|v-\beta|^s} \rho(v)\,dv}{\int
\frac{|v-\eta|^s}{|v-\beta|^s} \rho(v)\,dv} \le C_2
\end{equation}
uniformly in $\eta, \beta \in \C$.
\end{lemma}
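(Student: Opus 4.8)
The plan is to reduce (\ref{eq:decoupling}) to a single lower bound on the denominator. Write $d\mu(v)=\rho(v)\,dv$; this is a probability measure supported in some symmetric bounded interval $[-r,r]$, and $\|\rho\|_{\infty}<\infty$. Abbreviate the numerator and denominator of the left-hand side of (\ref{eq:decoupling}) by
\[
N(\beta):=\int\frac{\rho(v)}{|v-\beta|^s}\,dv,\qquad D(\eta,\beta):=\int\frac{|v-\eta|^s}{|v-\beta|^s}\,\rho(v)\,dv .
\]
Since $0<s<1$ and $\rho$ is bounded with compact support, $|v-\beta|^{-s}\le|v-\mbox{Re}\,\beta|^{-s}$ has an integrable singularity, so $0<N(\beta)<\infty$ for every $\beta\in\C$; it therefore suffices to exhibit a constant $t_0=t_0(s,\rho)>0$ with
\begin{equation}\label{eq:dec-key}
\int_{\{v\,:\,|v-\eta|<t_0\}}\frac{\rho(v)}{|v-\beta|^s}\,dv\ \le\ \tfrac12\,N(\beta)\qquad\mbox{for all }\eta,\beta\in\C .
\end{equation}
Indeed, on the set $|v-\eta|\ge t_0$ one has $|v-\eta|^s\ge t_0^s$, hence $D(\eta,\beta)\ge t_0^s\big(N(\beta)-\int_{\{|v-\eta|<t_0\}}|v-\beta|^{-s}\rho(v)\,dv\big)\ge\tfrac12 t_0^s\,N(\beta)$ by (\ref{eq:dec-key}), and (\ref{eq:decoupling}) follows with $C_2=2t_0^{-s}$. (Equivalently, normalizing $d\nu_\beta:=N(\beta)^{-1}|v-\beta|^{-s}\,d\mu$ to a probability measure, (\ref{eq:dec-key}) just says that $\nu_\beta$ puts mass at most $1/2$ on every interval of length $2t_0$, whence $\int|v-\eta|^s\,d\nu_\beta\ge t_0^s/2$.)

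To verify (\ref{eq:dec-key}) I would split according to $\delta:=\mbox{dist}(\beta,[-r,r])$, using that, because $[-r,r]$ has length $2r$, every $v\in[-r,r]$ satisfies $\delta\le|v-\beta|\le\delta+2r$. In the \emph{far regime} $\delta\ge 2r$ this reads $\delta\le|v-\beta|\le 2\delta$ on $\mbox{supp}\,\rho$, so $|v-\beta|^{-s}$ is constant there up to the factor $2^s$; writing $I:=[\mbox{Re}\,\eta-t_0,\ \mbox{Re}\,\eta+t_0]\supseteq\{v:|v-\eta|<t_0\}$,
\[
\int_I\frac{\rho(v)}{|v-\beta|^s}\,dv\ \le\ \delta^{-s}\mu(I)\ \le\ \delta^{-s}\|\rho\|_{\infty}\,2t_0,\qquad N(\beta)\ \ge\ (2\delta)^{-s}\mu([-r,r])=(2\delta)^{-s},
\]
so the first integral is $\le 2^{s+1}\|\rho\|_{\infty}t_0\,N(\beta)$, which is $\le\tfrac12 N(\beta)$ as soon as $t_0\le 2^{-(s+2)}\|\rho\|_{\infty}^{-1}$.

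In the \emph{near regime} $\delta\le 2r$ one uses instead the crude bounds $N(\beta)\ge(\delta+2r)^{-s}\ge(4r)^{-s}$ and, via $|v-\beta|\ge|v-\mbox{Re}\,\beta|$ together with the fact that $\int_I|v-\mbox{Re}\,\beta|^{-s}\,dv$ is largest when $I$ is centered at $\mbox{Re}\,\beta$,
\[
\int_I\frac{\rho(v)}{|v-\beta|^s}\,dv\ \le\ \|\rho\|_{\infty}\int_{-t_0}^{t_0}|w|^{-s}\,dw\ =\ \|\rho\|_{\infty}\,\frac{2\,t_0^{1-s}}{1-s},
\]
which is $\le\tfrac12(4r)^{-s}\le\tfrac12 N(\beta)$ once $t_0^{1-s}\le\frac{1-s}{4(4r)^s\|\rho\|_{\infty}}$. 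Taking $t_0$ to be the smaller of the two thresholds yields (\ref{eq:dec-key}) for all $\eta,\beta\in\C$, hence (\ref{eq:decoupling}) with a $C_2$ depending only on $s$, $\|\rho\|_{\infty}$ and the diameter of $\mbox{supp}\,\rho$.

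The only genuine obstacle is the far regime $\delta\ge 2r$: there the naive attempt to bound $\int_{\{|v-\eta|<t_0\}}|v-\beta|^{-s}\rho(v)\,dv$ by a $\beta$-independent constant fails, since $N(\beta)$ itself decays like $\delta^{-s}$ as $\beta\to\infty$; the fix is precisely the observation that over a \emph{bounded} support $|v-\beta|^{-s}$ varies only by a bounded factor, so numerator and denominator decay at the same rate and all that survives is the Lebesgue measure $\mu(I)\le\|\rho\|_{\infty}|I|$ of a short interval around $\eta$. Note also that (\ref{eq:decoupling}) contains no random parameters whatsoever — $\eta$ and $\beta$ are arbitrary complex numbers — so uniformity in $\eta,\beta$ is the only thing to check; in the application $\eta,\beta$ will be built from diagonal Green functions and the Herglotz property keeps them off the real axis, but the lemma itself needs none of that.
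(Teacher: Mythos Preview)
Your argument is correct. The reduction to the non-concentration estimate (\ref{eq:dec-key}) is clean, and both regimes are handled properly: in the far regime you correctly exploit that $|v-\beta|^{-s}$ oscillates only by a factor $2^s$ over the support, while in the near regime the bound $|v-\beta|\ge|v-\mathrm{Re}\,\beta|$ together with the rearrangement observation that $\int_I|w|^{-s}\,dw$ is maximal for $I$ centered at the origin gives the needed smallness.

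The paper takes a different, softer route. Rather than producing an explicit $t_0$, it argues by continuity and compactness: both numerator $N(\beta)$ and denominator $D(\eta,\beta)$ are continuous, strictly positive, and finite for all $(\eta,\beta)\in\C^2$, and the ratio has finite limits as $|\eta|\to\infty$ or $|\beta|\to\infty$, so the supremum over the (one-point compactified) parameter space is attained and finite. This is shorter to state but non-constructive, and the ``finite limits at infinity'' step still needs the same distinction you made --- that for large $|\beta|$ the weight $|v-\beta|^{-s}$ is essentially constant on $\mathrm{supp}\,\rho$. Your approach has the advantage of yielding an explicit $C_2=2t_0^{-s}$ depending only on $s$, $\|\rho\|_\infty$, and the diameter of $\mathrm{supp}\,\rho$, which is exactly the dependence one wants when tracking constants through the iteration in Theorem~\ref{thm1}.
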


This can be understood as a consequence of the following two facts:
(i) The two integrals on the left hand side of (\ref{eq:decoupling})
are continuous functions of $\eta$ and $\beta$. As both of them
neither vanish nor diverge, the same is true for the ratio of the
integrals. (ii) As $|\beta|$ and $|\eta|$ become large, the left
hand side of (\ref{eq:decoupling}) has finite limits. This combines
to give a uniform bound in $\beta$ and $\eta$. The details are left
as an exercise, or can be found in \cite{Graf94}.

We are now prepared to complete the proof of Theorem~\ref{thm1}:

\begin{proof}
Given the a-priori bound from Lemma~\ref{lem1} we may assume $y\not=
x$. Then
\begin{eqnarray} \label{eq:expansion}
0 & = & \langle e_x, e_y \rangle \\ \nonumber & = & \langle e_x,
(h_{\omega,\lambda}-z)^{-1} (h_{\omega,\lambda}-z) e_y \rangle \\
\nonumber & = & \Big\langle e_x, (h_{\omega,\lambda}-z)^{-1} \Big(-
\sum_{u:|u-y|=1} e_u + (\lambda \omega_y-z)e_y \Big) \Big\rangle \\
\nonumber & = & - \sum_{u:|u-y|=1} G_{\omega,\lambda}(x,u;z) +
(\lambda \omega_y-z) G_{\omega,\lambda}(x,y;z).
\end{eqnarray}

Note that $G_{\omega,\lambda}(x,y;z)$ is the upper left entry  of
the matrix on the left hand side of the Krein formula
(\ref{eq:krein}). Explicitly inverting the right hand side of
(\ref{eq:krein}) we find that
\[ G_{\omega,\lambda}(x,y;z) = \frac{\alpha}{\lambda \omega_y -\beta},\]
where $\alpha$ and $\beta$ do not depend on $\omega_y$ (and it will
not matter that they depend on $\lambda$). Using Lemma~\ref{lem2},
the bound $(\sum_n |a_n|)^s \le \sum_n |a_n|^s$ and
(\ref{eq:expansion}) we find
\begin{eqnarray} \label{eq:itstep}
\E(|G_{\omega,\lambda}(x,y;z)|^s) & = & \frac{1}{\lambda^s} \E \Big(
\Big| \frac{\alpha}{\omega_y-\frac{\beta}{\lambda}} \Big|^s \Big) \\
\nonumber & \le & \frac{C_2}{\lambda^s} \E \Big( |\alpha|^s
\frac{|\omega_y-\frac{z}{\lambda}|^s}{|\omega_y -
\frac{\beta}{\lambda}|^s} \Big) \\ \nonumber & = &
\frac{C_2}{\lambda^s} \E (|\lambda \omega_y -z|^s
|G_{\omega,\lambda}(x,y;z)|^s) \\ \nonumber & \le &
\frac{C_2}{\lambda^s} \sum_{u:|u-y|=1}
\E(|G_{\omega,\lambda}(x,u;z)|^s).
\end{eqnarray}
If none of the lattice sites $u$ are equal to $x$, then the argument
can be iterated. For given $x$ and $y$ one can iterate $|x-y|$
times, in each step picking up a factor $2dC_2/\lambda^s$ after a
maximum is taken over the $2d$ terms in the sums over next
neighbors. This results in a bound
\[
\E(|G_{\omega,\lambda}(x,y;z)|^s) \le \left( \frac{2dC_2}{\lambda^s}
\right)^{|x-y|} \sup_{u\in \Z^d} \E(|G_{\omega,\lambda}(x,u;z)|^s).
\]
For the last term we use the a-priori bound $C_1/\lambda^s$ provided
by Lemma~\ref{lem1}. We get the exponential decay in
(\ref{eq:fmdecay}) for $\lambda\ge \lambda_0$ if we choose
$\lambda_0$ such that $2dC_2/\lambda_0^s < 1$.

\end{proof}

We conclude this section by remarking that the exponential decay
bound found in Theorem~\ref{thm1} also holds for finite volume
restrictions of the Anderson Hamiltonian. More precisely, let $L\in
\N$ and $\Lambda_L := [-L,L]^d \cap \Z^d$. By
$h_{\omega,\lambda}^{\Lambda_L}$ and
$G_{\omega,\lambda}^{\Lambda_L}$ we denote the restriction of
$h_{\omega,\lambda}$ to $\ell^2(\Lambda_L)$ as well as its Green
function. By the same proof as above one finds that, for $\lambda
\ge \lambda_0$,
\begin{equation} \label{eq:finitevolgreen}
\E(|G_{\omega,\lambda}^{\Lambda_L}(x,y;z)|^s) \le C e^{-\mu|x-y|},
\end{equation}
where the constants $C<\infty$ and $\mu>0$ are now also uniform in
$L$.

Moreover, in the finite volume case the bound
(\ref{eq:finitevolgreen}) is uniform in $z\in \C$, allowing for real
energy. The reason for this is that the operators
$h_{\omega,\lambda}^{\Lambda_L}$ are finite-dimensional and that any
given real number $E$ is almost surely not one of their eigenvalues,
which implicitly follows from the above proof. In the finite volume
case this also holds for the a-priori bound in Lemma~\ref{lem1}.
This explains why such bounds play a role in the FMM similar to the
role played by Wegner estimates in localization proofs via MSA. They
demonstrate that eigenvalues are sensitive to the disorder
parameters. A good way to think of the main idea behind the FMM is
that eigenvalues are singularities of the resolvent which move
linearly under the random parameters. Thus the Green function can be
made integrable by taking fractional moments.

\section{From Fractional Moment Bounds to Localization}
\label{sec:Grafmethod}

We will now discuss methods which show that exponential decay of
fractional moments of Green's function as shown in
Theorem~\ref{thm1} implies spectral as well as dynamical
localization. For the sake of stating a general result of this form
we will absorb the disorder parameter into the random parameters
$\omega_x$ (re-scaling their distribution as in (\ref{eq:scaling})).
Thus we consider the Anderson Hamiltonian in its original form
(\ref{eq:Anderson}) with single-site distribution $\mu$ satisfying
(\ref{eq:density}).

From now on we will generally leave the dependence of various
quantities on the random variable $\omega$ implicit and write
$h=h_{\omega}$, $G=G_{\omega}$, etc.

Our goal is to prove

\begin{theorem} \label{thm2}
Let $I \subset \R$ be an open bounded interval. If there exist $s\in
(0,1)$, $C<\infty$ and $\mu>0$ such that
\begin{equation} \label{eq:fmdecay1}
\E(|G(x,y;E+i\varepsilon)|^s) \le C e^{-\mu|x-y|}
\end{equation}
uniformly in $E\in I$ and $\varepsilon>0$, then dynamical
localization in the form (\ref{eq:dynloc}) holds on the interval
$I$.
\end{theorem}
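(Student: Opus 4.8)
The plan is to deduce dynamical localization from the fractional moment bound \eqref{eq:fmdecay1} by first passing from fractional moments back to a second-moment (``$L^2$'') estimate on the Green function, and then using the Combes--Thomas-type relation between the unitary evolution projected onto $I$ and boundary values of the resolvent. First I would reduce the supremum over $t$ and the spectral projection $\chi_I(h)$ to a resolvent object: by a standard argument (essentially Stone's formula combined with the representation of smoothed spectral measures via the imaginary part of the resolvent), one bounds
\[
\E\left( \sup_{t\in\R} |\langle e_j, e^{-ith}\chi_I(h) e_k\rangle| \right)
\le C\, \E\left( \liminf_{\varepsilon\downarrow 0} \int_I |G(j,k;E+i\varepsilon)|^2 \, \frac{\varepsilon}{\pi}\, dE \right) + (\text{boundary terms}),
\]
the point being that $|\langle e_j, e^{-ith}\chi_I(h) e_k\rangle|$ can be written via the spectral measure $d\mu_{j,k}(E) = \langle e_j, dE_h(E) e_k\rangle$ and $\sup_t$ of its Fourier transform is controlled by the total variation of $\mu_{j,k}$ restricted to $I$, which in turn is controlled by $\int_I \mathrm{Im}\,G(j,k;E+i\varepsilon)$-type quantities in the limit $\varepsilon\downarrow 0$. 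One must be a little careful about the open interval $I$ versus its closure, handling the endpoints of $I$ separately, which is why boundedness of $I$ is assumed.

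Next I would convert the exponential decay of $\E(|G|^s)$ with $0<s<1$ into exponential decay of $\E(|G|^2)$ — or at least of $\E(\varepsilon|G|^2)$ uniformly in $\varepsilon$, which is what the previous step needs. The tool here is the a priori bound of Lemma~\ref{lem1}, which gives $\E_{j,k}(|G(j,k;z)|^s)$ bounded uniformly in $z\in\C\setminus\R$; combined with the rank-two perturbation (Krein) representation $G(j,k;z) = \alpha/(\lambda\omega_k - \beta)$ with $\alpha,\beta$ independent of $\omega_k$, one shows that the full second moment $\E(\varepsilon |G(j,k;E+i\varepsilon)|^2)$ is controlled by $\E(|G(j,k;E+i\varepsilon)|^s)^{\,}$ raised to a suitable power, via Hölder and the explicit $\omega_k$-integration (the integrand $|\omega_k - \beta/\lambda|^{-2}$ is not integrable, but $\varepsilon$ times it, or the fractional power of it, is — this is exactly the ``eigenvalues move linearly'' mechanism). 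Integrating over $E\in I$ (here $|I|<\infty$ matters) and using \eqref{eq:fmdecay1} then yields
\[
\E\left( \int_I \varepsilon\, |G(j,k;E+i\varepsilon)|^2 \, dE \right) \le C' e^{-\mu'|j-k|},
\]
uniformly in $\varepsilon>0$, for suitable $C'<\infty$, $\mu'>0$. Combining with the first step and taking $\varepsilon\downarrow 0$ (justified by Fatou, since we only need a lower bound going the right way) gives \eqref{eq:dynloc}.

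The main obstacle I expect is the second step — promoting the fractional ($s<1$) moment bound to the second ($s=2$) moment bound that controls the spectral measure. The naive Hölder interpolation between $s<1$ and $s=2$ fails because there is no uniform bound on $\E(|G|^2)$ (it genuinely diverges as $\varepsilon\downarrow 0$). The correct device is to keep the regularizing factor $\varepsilon$ (or, equivalently, work with $\mathrm{Im}\,G$ and its relation to $|G|^2$ through the resolvent identity $\mathrm{Im}\,G(j,k;z) = \varepsilon \sum_\ell G(j,\ell;z)\overline{G(\ell,k;z)}$-type expansions, or directly bound the spectral measure mass $\mu_{j,k}(I)$) and exploit that, after conditioning on $\hat\omega$, $G(j,k;E+i\varepsilon)$ as a function of $\omega_k$ has the explicit form $\alpha/(\lambda\omega_k-\beta)$ with $\mathrm{Im}(\beta/\lambda)$ bounded below by a constant times $\varepsilon$; then a direct computation shows $\E_k(\varepsilon|G|^2) \le C \E_k(|G|^s)^{(2-s)/(1-s)}\cdot(\text{something bounded})$ or a similar estimate, and one feeds this into the decay bound. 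This conditioning-and-explicit-integration step, already used in the proof of Lemma~\ref{lem1}, is the technical heart; everything else is soft functional calculus and Fatou's lemma.
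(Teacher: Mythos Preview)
Your overall architecture matches the paper's proof (Graf's method): control $\sup_t|\langle e_j,e^{-ith}\chi_I(h)e_k\rangle|$ by the total variation $|\mu_{j,k}|(I)$, represent the latter through a Poisson-kernel limit of resolvents, and feed in a bound converting fractional moments to second moments with the regularizing factor $\varepsilon$. Two points need correction.

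First, a minor one: the resolvent object that arises is not $\varepsilon|G(j,k;E+i\varepsilon)|^2$ but the matrix element $\varepsilon\langle e_j,(h-E-i\varepsilon)^{-1}(h-E+i\varepsilon)^{-1}e_k\rangle=\varepsilon\sum_z G(j,z;E+i\varepsilon)G(z,k;E-i\varepsilon)$. The paper handles this by Cauchy--Schwarz on the expectation, producing $\sum_z\bigl(\E\,\varepsilon|G(j,z)|^2\bigr)^{1/2}\bigl(\E\,\varepsilon|G(z,k)|^2\bigr)^{1/2}$, then applies the second-moment bound to each factor and sums the resulting exponentials over $z$. You allude to such expansions parenthetically, but they are not optional.

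Second, and this is the real gap: your proposed mechanism for the key estimate $\varepsilon\,\E_x(|G(x,y;z)|^2)\le C\,\E_x(|G(x,y;z)|^s)$ (Proposition~\ref{prop:secondmoment}) does not work as stated. Neither a H\"older interpolation nor the observation ``$\mathrm{Im}(\beta/\lambda)\gtrsim\varepsilon$'' yields this; the former fails for the reason you yourself note, and the latter, even if true, controls only the $\omega$-integral of $|G|^2$ and says nothing about how the prefactor $|\alpha|^2$ compares to $|\alpha|^s$. The paper's argument (due to Graf) is genuinely different and non-obvious: one introduces an auxiliary perturbation $h^{(\alpha)}=h+\alpha P_{e_x}$, derives \emph{two} separate bounds on $\varepsilon|G^{(\alpha)}(x,y)|^2$---one from the explicit formula $G^{(\alpha)}(x,y)=(\alpha+G(x,x)^{-1})^{-1}G(x,y)/G(x,x)$, the other from $\varepsilon\sum_{y'}|G^{(\alpha)}(x,y')|^2=|\mathrm{Im}\,G^{(\alpha)}(x,x)|$---interpolates between them via the elementary inequality $\min(1,t^2)\le t^s$, and then removes the auxiliary $\alpha$ by a re-sampling identity $\int\!\!\int f(\omega_x+\alpha)\rho(\omega_x+\alpha)\rho(\omega_x)\,d\alpha\,d\omega_x=\int f(\omega_x)\rho(\omega_x)\,d\omega_x$. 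This produces the \emph{linear} bound in $\E_x(|G|^s)$, which is what lets you integrate over $\hat\omega$ without a Jensen obstruction.
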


As a first consequence, by Theorem~\ref{thm1}  this implies that at
sufficiently large disorder $\lambda$ the Anderson model is
dynamically localized in the entire spectrum. In
Section~\ref{sec:bandedge} below, we will also use the criterion
provided by Theorem~\ref{thm2} to prove band edge localization.

The most direct way to conclude spectral localization, i.e.\ pure
point spectrum with exponentially decaying eigenfunctions, from
bounds such as (\ref{eq:fmdecay1}) is by the Simon-Wolff method. It
was developed in \cite{Simon/Wolff} to serve a similar purpose in
the context of multiscale analysis, where it showed that the Green
function bounds established in \cite{Frohlich/Spencer} indeed imply
spectral localization. A short argument, showing that the
Simon-Wolff criterion also can be combined with (\ref{eq:fmdecay1})
to show spectral localization, is provided in
\cite{Aizenman/Molchanov}.

Here we will instead discuss the proof of Theorem~\ref{thm2}, i.e.\
focus on how (\ref{eq:fmdecay1}) implies dynamical localization. We
have two reasons for doing so: First, dynamical localization is the
physically more relevant property. Second, as we will show at the
end of this section, dynamical localization implies spectral
localization with a straightforward argument using the RAGE theorem.

There are two substantially different arguments which prove
Theorem~\ref{thm2}. In this section we will present a modification
of an argument provided by Graf in \cite{Graf94}. This version of
the argument has recently also been used in \cite{HJS} to prove
dynamical localization for the so-called {\it unitary Anderson
model}.

The second method, via the use of eigenfunction correlators, will be
discussed in the next section.

Graf's argument starts with the realization that fractional moments
of Green's functions of the Anderson model can be used to bound the
second moment of Green's function as long as a small factor (the
imaginary part of the energy) is introduced to control the
singularities of Green's function at real energy.

\begin{proposition} \label{prop:secondmoment}
For every $s\in (0,1)$ there exists a constant $C_1<\infty$ only
depending on $s$ and $\rho$ such that
\begin{equation} \label{eq:secondmoment}
|\mbox{\rm Im}\,z| \,\E_x(|G(x,y;z)|^2) \le C_1 \E_x(|G(x,y;z)|^s)
\end{equation}
for all $z\in \C\setminus \R$ and $x, y\in \Z^d$.
\end{proposition}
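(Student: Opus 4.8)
The plan is to fix $x$ and reduce to a rank-one calculation in the variable $\omega_x$, exactly as in the proof of Lemma~\ref{lem1}. Writing $\omega = (\hat\omega,\omega_x)$ and using the resolvent identity to separate the $\omega_x$-dependence, one has from the Krein (here rank-one) formula that for $x\neq y$
\[
 G(x,y;z) = \frac{\alpha}{\omega_x - \beta},
\]
where $\alpha,\beta\in\C$ depend on $\hat\omega$ and $z$ but not on $\omega_x$, while for $x=y$ one has $G(x,x;z) = 1/(\omega_x - \beta')$ with $\beta'$ independent of $\omega_x$ (this is (\ref{eq:solveG}) after absorbing the disorder parameter). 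The key analytic input is the Herglotz property noted after (\ref{eq:solveG}): $\mathrm{Im}\,\beta$ (resp.\ $\mathrm{Im}\,\beta'$) has a sign opposite to $\mathrm{Im}\,z$, and more quantitatively $|\mathrm{Im}(\omega_x-\beta)| \ge |\mathrm{Im}\,z|\cdot(\text{something})$ — but actually the cleanest route avoids tracking $\mathrm{Im}\,\beta$ and instead uses the elementary pointwise bound below.

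The heart of the matter is the following one-variable estimate: for $s\in(0,1)$ and any bounded compactly supported density $\rho$, there is $C_1=C_1(s,\rho)$ with
\[
 |\mathrm{Im}\,w|\int \frac{\rho(v)}{|v-w|^2}\,dv \;\le\; C_1 \int \frac{\rho(v)}{|v-w|^s}\,dv
 \qquad\text{for all }w\in\C\setminus\R.
\]
Granting this, I apply it with $w=\beta$ and multiply through by $|\alpha|^2$ on the left and $|\alpha|^s$ on the right; since $|\alpha|^2 \le (\sup|G|\cdot|\alpha|)\cdot|\alpha|^s$... — more simply, note $|\alpha|^{2-s}$ is bounded because $|G(x,y;z)|\le 1/|\mathrm{Im}\,z|$ is not uniform, so instead I should factor differently: write $|G(x,y;z)|^2 = |G(x,y;z)|^s\cdot|G(x,y;z)|^{2-s}$ is the wrong split. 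The right move is to keep $\alpha$ inside: $|\mathrm{Im}\,z|\,\E_x(|G|^2) = |\mathrm{Im}\,z|\,|\alpha|^2\int \rho(v)/|v-\beta|^2\,dv$ and apply the displayed inequality to bound this by $C_1|\alpha|^2\int\rho(v)/|v-\beta|^s\,dv = C_1|\alpha|^{2-s}\,\E_x(|G|^s)$; so one needs $|\alpha|^{2-s}$ bounded uniformly. Here I use that $|\alpha|$ is itself a matrix element of a resolvent-type object that is bounded: in fact $|G(x,y;z)|\cdot|\omega_x-\beta| = |\alpha|$ and integrating, or better, $\alpha$ equals a product of off-diagonal resolvent entries of $h_{\hat\omega}$ restricted appropriately, which are bounded by $1$ in a suitable sense via the Combes–Thomas-free bound $|G(x,y;z)|\le|\mathrm{Im}\,z|^{-1}$ — not uniform. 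The clean fix: the inequality I really want has $\alpha$ built in, i.e.\ apply the one-variable lemma to the function $v\mapsto \alpha/(v-\beta)$, giving directly $|\mathrm{Im}\,z|\int|\alpha/(v-\beta)|^2\rho\,dv \le C_1\int|\alpha/(v-\beta)|^s\rho\,dv$ provided $|\mathrm{Im}\,z|$ on the left is replaced by $|\mathrm{Im}(\beta)|$; and $|\mathrm{Im}\,\beta|\ge|\mathrm{Im}\,z|$ holds by the Herglotz/monotonicity property. That is the correct chain and it requires no boundedness of $\alpha$.

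So the steps, in order, are: (1) fix $x$, use the resolvent/Krein identity to write $G(x,y;z)$ as an affine-linear function of $1/(\omega_x-\beta)$ times a constant $\alpha$; (2) record the Herglotz fact $\mathrm{sgn}\,\mathrm{Im}\,\beta = -\,\mathrm{sgn}\,\mathrm{Im}\,z$ together with $|\mathrm{Im}\,\beta|\ge|\mathrm{Im}\,z|$; (3) prove the one-variable inequality $|\mathrm{Im}\,w|\int\rho(v)|v-w|^{-2}dv \le C_1\int\rho(v)|v-w|^{-s}dv$ by a direct computation — split the integral into $|v-\mathrm{Re}\,w|\le|\mathrm{Im}\,w|$ and its complement, on the first region bound $|v-w|^{-2}\le|\mathrm{Im}\,w|^{-2}$ and $|v-w|^{-s}\le$ a constant away from nothing is needed since... — here one uses $\|\rho\|_\infty<\infty$ on the near region (where $|v-w|^{-2}\le|\mathrm{Im}\,w|^{-2}$ so the left side contributes $\lesssim \|\rho\|_\infty |\mathrm{Im}\,w|^{-1}\cdot|\mathrm{Im}\,w| = \|\rho\|_\infty$, while the right side is $\ge$ a positive constant because... this needs the denominator not to blow up, true since $|v-w|\ge|\mathrm{Im}\,w|$) and on the far region $|v-w|\sim|v-\mathrm{Re}\,w|$ so both integrals are comparable to the same elementary integral with the extra $|\mathrm{Im}\,w|$ absorbed; (4) chain everything together and integrate the resulting $\hat\omega$-uniform bound over the remaining variables if needed (not needed, since the statement is about $\E_x$). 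The main obstacle is step (3): getting the constant genuinely uniform in $w$ over all of $\C\setminus\R$, which forces the two-region split and a careful comparison — exactly the same flavor of elementary-but-fiddly estimate as in the a-priori bound and the decoupling lemma, and indeed it can be quoted from the same sources (\cite{Aizenman/Molchanov, Graf94, HJS}). I would present (3) as a short self-contained computation and keep steps (1)–(2) to a single paragraph referencing the Krein formula (\ref{eq:krein}) and the Herglotz remark already in the text.
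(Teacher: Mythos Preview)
Your approach has a genuine gap at exactly the point where you sense trouble and try several fixes. After writing $G(x,y;z)=\alpha/(\omega_x-\beta)$ with $\alpha,\beta$ independent of $\omega_x$, your one-variable lemma
\[
|\mathrm{Im}\,w|\int \frac{\rho(v)}{|v-w|^2}\,dv \le C_1 \int \frac{\rho(v)}{|v-w|^s}\,dv
\]
is correct, but applying it with $w=\beta$ yields
\[
|\mathrm{Im}\,z|\,\E_x(|G|^2) \le |\mathrm{Im}\,\beta|\,|\alpha|^2\!\int\!\frac{\rho}{|v-\beta|^2}\,dv \le C_1\,|\alpha|^{2-s}\,\E_x(|G|^s),
\]
and the factor $|\alpha|^{2-s}$ is \emph{not} uniformly bounded. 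Your ``clean fix'' --- applying the one-variable lemma to $v\mapsto\alpha/(v-\beta)$ --- is illusory: the lemma compares $\int\rho|v-\beta|^{-2}$ with $\int\rho|v-\beta|^{-s}$, and multiplying one side by $|\alpha|^2$ and the other by $|\alpha|^s$ is only licit when $|\alpha|\le 1$. Since $\alpha=G_{\hat\omega}(x,y;z)/G_{\hat\omega}(x,x;z)$, there is no such control (take $\hat\omega$ for which $h_{\hat\omega}$ has a near-resonance at $y$). The Herglotz bound $|\mathrm{Im}\,\beta|\ge|\mathrm{Im}\,z|$ is fine but does not touch $\alpha$.

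The paper's proof supplies precisely the missing mechanism for converting the exponent $2$ into $s$ without a leftover $|\alpha|^{2-s}$. It introduces an auxiliary coupling $h^{(\alpha)}=h_\omega+\alpha P_{e_x}$ and establishes two pointwise bounds for $|\mathrm{Im}\,z|\,|G^{(\alpha)}(x,y;z)|^2$: one carrying the factor $|G_\omega(x,y;z)|^2/|G_\omega(x,x;z)|^2$ and one without it (the latter via the identity $\sum_{y'}|G(x,y';z)|^2=|\mathrm{Im}\,G(x,x;z)|/|\mathrm{Im}\,z|$). These are then interpolated using $\min(1,t^2)\le t^s$, which is what produces $|G_\omega(x,y;z)|^s$ with the correct exponent and no stray power of the ratio. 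A re-sampling identity in the pair $(\omega_x,\alpha)$ then converts the $\alpha$-average back to $\E_x$, and a short auxiliary lemma (Lemma~\ref{lem:graf}) controls the remaining Poisson-type integral. The interpolation step is the idea your outline lacks; without it, the approach via a single one-variable inequality cannot close.
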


Here $\E_x$ denotes averaging over $\omega_x$ as in the proof of
Lemma~\ref{lem1}. Integrating over the remaining variables, we see
that (\ref{eq:secondmoment}) also holds with $\E_x$ replaced by
$\E$. Our proof follows the proof of Lemma~3 in \cite{Graf94} almost
line by line.

\begin{proof}
As in the proof of Lemma~\ref{lem1} write $\omega = (\hat{\omega},
\omega_x)$. Keep $\hat{\omega}$ fixed and consider the Hamiltonian
\[ h^{(\alpha)} = h_{(\hat{\omega}, \omega_x+\alpha)} = h_{\omega} + \alpha P_{e_x}\]
obtained by ``wiggling the potential at $x$''. Its Green function
will be denoted by $G^{(\alpha)}$. Similar to (\ref{eq:resid}) to
(\ref{eq:solveG}) we find
\[
(h_{\omega}-z)^{-1} = (h^{(\alpha)}-z)^{-1} + \alpha
(h_{\omega}-z)^{-1} P_{e_x} (h^{(\alpha)}-z)^{-1},
\]
and
\begin{eqnarray} \label{eq:alphagreen}
G^{(\alpha)}(x,y;z) & = & \frac{G_{\omega}(x,y;z)}{1+\alpha G_{\omega}(x,x;z)} \\
& = & \frac{1}{\alpha + G_{\omega}(x,x;z)^{-1}} \cdot
\frac{G_{\omega}(x,y;z)}{G_{\omega}(x,x;z)}. \nonumber
\end{eqnarray}

For the special case $x=y$ and $\tilde{\alpha} = -
\mbox{Re}\,G_{\omega}(x,x;z)^{-1}$ we get from (\ref{eq:alphagreen})
that
\[ \left| \frac{1}{\mbox{Im}\,G(x,x;z)^{-1}} \right| = \left| G^{(\tilde{\alpha})}(x,x;z) \right| \le \frac{1}{|\mbox{Im}\,z|}, \]
i.e.\ $|\mbox{Im}\,G(x,x;z)^{-1}| \ge| \mbox{Im}\,z|$. Inserting
this into (\ref{eq:alphagreen}) gives
\begin{equation} \label{eq:firstbound}
|\mbox{Im}\,z| |G^{(\alpha)}(x,y;z)|^2 \le
\frac{|\mbox{Im}\,G_{\omega}(x,x;z)^{-1}|}{|\alpha +
G_{\omega}(x,x;z)|} \cdot
\frac{|G_{\omega}(x,y;z)|^2}{|G_{\omega}(x,x;z)|^2}.
\end{equation}

On the other hand, we can bound the same expression by
\begin{eqnarray} \label{eq:secondbound}
|\mbox{Im}\,z| |G^{(\alpha)}(x,y;z)|^2 & \le & |\mbox{Im}\,z| \sum_{y'\in \Z^d} |G^{(\alpha)}(x,y';z)|^2 \\
& = &| \mbox{Im}\,z| \langle e_x, (h^{(\alpha)}-\overline{z})^{-1} (h^{(\alpha)}-z)^{-1} e_x \rangle \nonumber \\
& = & |\mbox{Im}\,z| \langle e_x, \frac{1}{z-\overline{z}} [ (h^{(\alpha)}-z)^{-1} - (h^{(\alpha)}- \overline{z})^{-1} ] e_x \rangle \nonumber \\
& = & \left| \mbox{Im}\, G^{(\alpha)}(x,x;z) \right| \nonumber \\
& = & \frac{ |\mbox{Im}\, G_{\omega}(x,x;z)^{-1}|}{ |\alpha +
G_{\omega}(x,x;z)^{-1}|^2}, \nonumber
\end{eqnarray}
where the last step used (\ref{eq:alphagreen}) with $x=y$.

For $t\ge 0$ one has $\min(1,t^2) \le t^s$. Using this to
interpolate between (\ref{eq:firstbound}) and (\ref{eq:secondbound})
we get
\begin{equation} \label{eq:thirdbound}
|\mbox{Im}\,z| |G^{(\alpha)}(x,y;z)|^2 \le
\frac{|\mbox{Im}\,G_{\omega}(x,x;z)^{-1}|}{|\alpha +
G_{\omega}(x,x;z)^{-1}|^2} \cdot
\frac{|G_{\omega}(x,y;z)|^s}{|G_{\omega}(x,x;z)|^s}.
\end{equation}

We will now use the following ``re-sampling trick'', which has the
effect of creating an additional random variable (here $\alpha$) to
average over. For a non-negative Borel function $f$ on $\R$,
\begin{eqnarray} \label{eq:resampling}
\lefteqn{ \int \int f(\omega_x + \alpha) \rho(\omega_x+\alpha)\,d\alpha\, \rho(\omega_x)\,d\omega_x} \\
& = & \int \int f(\omega_x+\alpha) \rho(\omega_x+\alpha) \rho(\omega_x) \,d\omega_x \, d\alpha \nonumber \\
& = & \int \int f(\omega_x) \rho(\omega_x) \rho(\omega_x-\alpha) \,d\omega_x\, d\alpha \nonumber \\
& = & \int f(\omega_x) \rho(\omega_x) \left( \int \rho(\omega_x-\alpha)\,d\alpha \right)\,d\omega_x \nonumber \\
& = & \int f(\omega_x) \rho(\omega_x)\,d\omega_x, \nonumber
\end{eqnarray}
where the integration order was interchanged in the first and third
steps and translation invariance of Lebesgue measure was used in the
second.

Choose $f(\omega_x) = |G_{(\hat{\omega}, \omega_x)}(x,y;z)|^2$, then
(\ref{eq:resampling}) and (\ref{eq:thirdbound}) yield
\begin{eqnarray} \label{eq:fourthbound}
\lefteqn{ |\mbox{Im}\,z| \E_x (|G_{\omega}(x,y;z)|^2) } \\
& = & |\mbox{Im}\,z| \E_x \left( \int |G^{(\alpha)}(x,y;z)|^2 \rho(\omega_x+\alpha)\,d\alpha \right) \nonumber \\
& \le & \E_x \left( |\mbox{Im}\,G_{\omega}(x,x;z)^{-1}|
\frac{|G_{\omega}(x,y;z)|^s}{G_{\omega}(x,x;z)|^s} \int
\frac{\rho(\omega_x+\alpha)}{|\alpha + G_{\omega}(x,x;z)^{-1}|^2}
\,d\alpha \right). \nonumber
\end{eqnarray}

We now use Lemma~\ref{lem:graf} below with
$w=G_{\omega}(x,x;z)^{-1}$ to conclude
\[ |\mbox{Im}\, z| \E_x(|G_{\omega}(x,x;z)|^2) \le C \E_x (|G_{\omega}(x,y;z)|^s) \]
with a constant $C<\infty$ which only depends on supp$\,\rho$, but
not on $x$, $y$ and $z$.
\end{proof}

In the above proof we have used

\begin{lemma} \label{lem:graf}
There exists a constant $C=C(\rho)<\infty$ such that
\[ |\mbox{Im}\,w| \cdot |w|^s \int \frac{\rho(\omega_x+\alpha)}{|\alpha+w|^2} \,d\alpha \le C \]
uniformly in $w\in \C$ and $\omega_x \in \,\mbox{supp}\,\rho$.
\end{lemma}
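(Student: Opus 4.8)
The plan is to reduce the claimed bound to an elementary estimate on a one-dimensional integral of the form $\int \rho(\omega_x+\alpha)/|\alpha+w|^2\,d\alpha$, and then exploit the fact that $\rho$ is bounded with compact support. First I would change variables in the integral by setting $\beta = \omega_x + \alpha$, so that $\alpha + w = \beta + (w - \omega_x)$; writing $\zeta := w - \omega_x$ this turns the quantity to be bounded into
\[
|\mbox{Im}\,w| \cdot |w|^s \int \frac{\rho(\beta)}{|\beta + \zeta|^2}\,d\beta .
\]
Note $\mbox{Im}\,\zeta = \mbox{Im}\,w$ since $\omega_x$ is real, so the prefactor $|\mbox{Im}\,w|$ equals $|\mbox{Im}\,\zeta|$. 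The key sub-estimate is then: for any $\zeta \in \C\setminus\R$,
\[
|\mbox{Im}\,\zeta| \int \frac{\rho(\beta)}{|\beta+\zeta|^2}\,d\beta \;\le\; \pi \|\rho\|_\infty ,
\]
which follows because $|\mbox{Im}\,\zeta|/|\beta+\zeta|^2$ is a Poisson kernel in the variable $\beta$ (up to the factor $\pi$), and $\int_{\R} |\mbox{Im}\,\zeta|/((\beta+\mbox{Re}\,\zeta)^2 + (\mbox{Im}\,\zeta)^2)\,d\beta = \pi$. So the integral term, together with the $|\mbox{Im}\,w|$ factor, is bounded by $\pi\|\rho\|_\infty$ uniformly in $w$.

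It remains to control the leftover factor $|w|^s$. This is the one point requiring a little care, since $w = G_\omega(x,x;z)^{-1}$ can in principle be large, so $|w|^s$ alone is not bounded. The resolution is that when $|w|$ is large the integral term is correspondingly small: more precisely, I would split according to whether $|\zeta| = |w - \omega_x| \le R+1$ or $|\zeta| > R+1$, where $[-R,R] \supset \mbox{supp}\,\rho$. In the first regime $|w| \le |\zeta| + |\omega_x| \le 2R+1$, so $|w|^s \le (2R+1)^s$ is bounded and we are done by the Poisson-kernel estimate. In the second regime, for $\beta \in \mbox{supp}\,\rho$ we have $|\beta + \zeta| \ge |\zeta| - R \ge |\zeta|/(R+1)\cdot 1 \ge \tfrac12|\zeta|$ once $|\zeta|\ge 2R$, hence
\[
\int \frac{\rho(\beta)}{|\beta+\zeta|^2}\,d\beta \le \frac{4}{|\zeta|^2}.
\]
Combined with $|\mbox{Im}\,w| \le |w| \le |\zeta| + R \le 2|\zeta|$ and $|w|^s \le (2|\zeta|)^s$, the whole expression is bounded by $4\cdot 2^{1+s} |\zeta|^{s-1}$, which is bounded for $|\zeta|$ large since $s < 1$. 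Taking the larger of the two constants gives a bound $C = C(\rho)$ depending only on $\|\rho\|_\infty$ and $R = \sup\{|t| : t \in \mbox{supp}\,\rho\}$, uniformly in $w \in \C$ and $\omega_x \in \mbox{supp}\,\rho$, as claimed.

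The main obstacle, such as it is, is the handling of the unbounded factor $|w|^s$: the Poisson-kernel bound alone does not suffice, and one must observe the compensating decay of the integral for large $|w|$, which is exactly why the restriction $s<1$ enters. Everything else is a routine change of variables plus the standard computation $\int_\R \eta/(t^2+\eta^2)\,dt = \pi$.
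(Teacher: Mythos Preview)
Your proof is correct. It differs from the paper's argument in how the factor $|w|^s$ is handled. The paper avoids a case split by inserting the subadditivity bound $|w|^s \le |\alpha|^s + |\alpha+w|^s$ (valid for $0<s<1$) directly under the integral, which reduces the problem to two estimates: the Poisson-kernel bound applied to the weight $|\alpha|^s \rho(\omega_x+\alpha)$, and the bound
\[
|\mbox{Im}\,w|\int \frac{\rho(\omega_x+\alpha)}{|\alpha+w|^{2-s}}\,d\alpha \le \min\!\Big(\frac{1}{|\mbox{Im}\,w|^{1-s}},\,C\|\rho\|_\infty|\mbox{Im}\,w|^s\Big),
\]
whose minimum is uniformly bounded. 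Your route instead shifts the integration variable and then splits on the size of $\zeta = w-\omega_x$: for small $|\zeta|$ the Poisson bound handles everything since $|w|$ is bounded, and for large $|\zeta|$ the compact support of $\rho$ forces $|\beta+\zeta|\gtrsim|\zeta|$ so the integral decays like $|\zeta|^{-2}$, beating the growth $|w|^{1+s}\lesssim|\zeta|^{1+s}$. Both arguments exploit the same two ingredients (Poisson kernel plus compact support of $\rho$) and both need $s<1$; the paper's version is slightly slicker in avoiding an explicit dichotomy, while yours makes the mechanism behind the large-$|w|$ compensation more transparent. One minor point: you drift between the thresholds $|\zeta|>R+1$ and $|\zeta|\ge 2R$ mid-argument---pick one (say $|\zeta|\ge 2R$) and use it consistently.
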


\begin{proof}
Using $|w|^s \le |\alpha|^s + |\alpha+w|^s$, we need two estimates:

(i)
\begin{eqnarray*}
|\mbox{Im}\,w| \int \frac{|\alpha|^s \rho(\omega_x +\alpha)}{|\alpha+w|^2} \,d\alpha & \le &  \pi \| |\alpha|^s \rho(\omega_x+\alpha)\|_{\infty} \\
& \le & \pi (|\omega_x|^s \|\rho\|_{\infty} + \| |\lambda|^s
\rho(\lambda)\|_{\infty}).
\end{eqnarray*}

(ii)
\begin{eqnarray*}
|\mbox{Im}\,w| \int \frac{\rho(\omega_x+\alpha)}{|\alpha+w|^{2-s}} \,d\alpha & \le & \min( \frac{1}{|\mbox{Im}\,w|^{1-s}}, C \|\rho\|_{\infty} |\mbox{Im}\,w|^s) \\
& \le & C \|\rho\|_{\infty}^{1-s}.
\end{eqnarray*}

\end{proof}

We now complete the proof of Theorem~\ref{thm2}:

\begin{proof}
Consider the mixed spectral measures $\mu_{x,y}$ of $h$, the complex
Borel measures defined by
\begin{equation} \label{eq:specmeas}
\mu_{x,y}(B) = \langle e_x, \chi_B(h) e_y \rangle
\end{equation} for Borel sets $B\subset \R$. The total variation $|\mu_{x,y}|$ of $\mu_{x,y}$ is a regular bounded Borel measure which can be characterized by
\begin{eqnarray} \label{eq:totvar}
|\mu_{x,y}|(B) & = & \sup_{\small{\begin{array}{c} g:\R\to\C\;\mbox{Borel} \\ |g|\le 1 \end{array}}} \left| \int g(\lambda)\,d\mu_{x,y}(\lambda) \right| \\
& = & \sup_{|g|\le 1} |\langle e_x, g(h)\chi_B(h) e_y \rangle|,
\nonumber
\end{eqnarray}
e.g.\ \cite{Rudin}. The particular choice $g_t(x) = e^{-itx}$ in
(\ref{eq:totvar}) shows that
\begin{equation} \label{eq:dynlocspecial}
 |\mu_{x,y}|(I) \ge \sup_{t\in \R} |\langle e_x, e^{-ith} \chi_I(h) e_y \rangle |.
 \end{equation}
Therefore Theorem~\ref{thm2} will follow from a corresponding
exponential decay bound for $\E(|\mu_{x,y}|(I))$.

As $I$ is an open bounded interval, it follows from Lusin's Theorem
(\cite{Rudin}) that one can replace Borel functions in
(\ref{eq:totvar}) by continuous functions with compact support in
$I$,
\begin{equation} \label{eq:totvar2}
|\mu_{x,y}|(I) = \sup_{\small{\begin{array}{c} g \in C_c(I) \\
|g|\le 1 \end{array}}} |\langle e_x, g(h) e_y \rangle |.
\end{equation}

For $g\in C_c(I)$ it follows by elementary analysis (using that $g$ is bounded and uniformly continuous) that, uniformly in $\lambda \in \R$,
\[ g(\lambda) = \lim_{\varepsilon \to 0+} \frac{\varepsilon}{\pi} \int \frac{g(E)}{(\lambda -E)^2 +\varepsilon^2}\,dE.\]
By the spectral theorem this implies
\[ \langle e_x, g(h) e_y \rangle = \lim_{\varepsilon\to 0+} \frac{\varepsilon}{\pi} \int_I g(E) \langle e_x, (h-E-i\varepsilon)^{-1} (h-E+i\varepsilon)^{-1} e_y \rangle \,dE. \]
This allows to estimate the expected value of (\ref{eq:totvar2}) by
\begin{eqnarray*}
\lefteqn{\E(|\mu_{x,y}|(I))} \\
& \le & \E \left( \liminf_{\varepsilon\to 0+} \frac{\varepsilon}{\pi} \int_I \sum_{z\in \Z^d} | \langle e_x, (h-E-i\varepsilon)^{-1} e_z \rangle | |\langle e_z, (h-E+i\varepsilon)^{-1} e_y \rangle |\,dE \right) \\
& \le & \liminf_{\varepsilon\to 0+} \frac{1}{\pi} \int_I \sum_z \left( \E( \varepsilon |\langle e_x, (h-E-i\varepsilon)^{-1} e_z \rangle |^2 ) \right)^{1/2} \\
& & \mbox{} \cdot \left( \E( \varepsilon |\langle e_z,
(h-E+i\varepsilon)^{-1} e_y \rangle |^2) \right)^{1/2}\,dE,
\end{eqnarray*}
where, in this order, Fatou, Fubini and Cauchy-Schwarz (on $\E$)
have been used. Now Proposition~\ref{prop:secondmoment} can be
applied, allowing to bound further by
\begin{eqnarray*}
& \le & \liminf_{\varepsilon\to 0+} \frac{1}{\pi} \int_I \sum_z (\E(| G(x,z;E+i\varepsilon)|^s))^{1/2} (\E(| G(z,y;E-i\varepsilon)|^s )^{1/2} \,dE \\
& \le & \frac{C_1 C |I|}{\pi} \sum_z e^{-\mu |x-z|/2} e^{-\mu
|z-y|/2}.
\end{eqnarray*}
In the last step the assumption of Theorem~\ref{thm2} was used
(which also applies to $|G(z,y;E-i\varepsilon)| =
|G(y,z;E+i\varepsilon)|$). The elementary bound, based on the
triangle inequality,
\[ e^{-\mu|x-z|/2} e^{-\mu|z-y|/2} \le e^{-\mu|x-z|/4} e^{-\mu|x-y|/4} e^{-\mu|z-y|/4} \]
and another use of Cauchy-Schwarz (on the $z$-summation) complete
the proof of Theorem~\ref{thm2}.

\end{proof}

It deserves mentioning here that we have actually proven a stronger result than dynamical localization in the form (\ref{eq:dynloc}). The above proof shows that for an open interval $I$ on which (\ref{eq:fmdecay1}) holds there are constants $C<\infty$ and $\mu>0$ such that
\begin{equation} \label{eq:Borelloc}
\E(|\mu_{x,y}|(I)) = \E \big( \sup_{\tiny \begin{array}{cc} g:\R\to \C \,\mbox{Borel} \\ |g|\le 1 \end{array}} | \langle e_x, g(h) \chi_I(h) e_y \rangle | \big) \le C e^{-\mu|x-y|}
\end{equation}
for all $x,y \in \Z^d$. An interesting special case is $g=1$, where (\ref{eq:Borelloc}) establishes exponential decay of correlations in the spectral projection $\chi_I(h)$. Consequences of this for the conductivity of an electron gas in response to an electric field have been discussed in \cite{Aizenman/Graf}. Another consequence is mentioned at the end of this section.

Next we show that dynamical localization
implies pure point spectrum via the RAGE-Theorem. The underlying
idea is very simple: The RAGE-Theorem characterizes states in the
continuous spectral subspace as scattering states (in time-mean).
Dynamical localization excludes scattering states and thus
continuous spectrum.

\begin{proposition} \label{prop:dynspec}
Suppose that dynamical localization in the form (\ref{eq:dynloc})
holds in an open interval $I$. Then $h_{\omega}$ almost surely has
pure point spectrum in $I$.
\end{proposition}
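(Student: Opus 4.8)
The plan is to derive the statement from the RAGE theorem by showing that, almost surely, every vector of the form $\chi_I(h_\omega)e_k$ with $k\in\Z^d$ lies in the pure point spectral subspace $\mathcal{H}_{\mathrm{pp}}(h_\omega)$ of $h_\omega$. Granting this, one concludes at once: the canonical basis $\{e_k\}_{k\in\Z^d}$ is total in $\ell^2(\Z^d)$ and $\chi_I(h_\omega)$ is a bounded orthogonal projection, so the closed linear span of $\{\chi_I(h_\omega)e_k:k\in\Z^d\}$ equals $\mbox{Ran}\,\chi_I(h_\omega)$; since $\mathcal{H}_{\mathrm{pp}}(h_\omega)$ is closed and contains each $\chi_I(h_\omega)e_k$, it contains $\mbox{Ran}\,\chi_I(h_\omega)$, which is precisely the statement that $h_\omega$ has neither absolutely continuous nor singular continuous spectrum in $I$.

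First I would isolate a single good event. Since $|\langle e_j,e^{-ith_\omega}\chi_I(h_\omega)e_k\rangle|\le1$, one has $\sup_t|\langle e_j,e^{-ith_\omega}\chi_I(h_\omega)e_k\rangle|^2\le\sup_t|\langle e_j,e^{-ith_\omega}\chi_I(h_\omega)e_k\rangle|$, so the dynamical localization bound (\ref{eq:dynloc}) together with monotone convergence yields, for each fixed $k$,
\[ \E\Big(\sum_{j\in\Z^d}\sup_{t\in\R}\big|\langle e_j,e^{-ith_\omega}\chi_I(h_\omega)e_k\rangle\big|\Big)=\sum_{j\in\Z^d}\E\big(\sup_{t\in\R}|\langle e_j,e^{-ith_\omega}\chi_I(h_\omega)e_k\rangle|\big)\le C\sum_{j\in\Z^d}e^{-\mu|j-k|}<\infty. \]
Hence for each $k$ the inner sum is finite almost surely, and intersecting over the countably many $k\in\Z^d$ gives an event $\Omega_0$ with $\PP(\Omega_0)=1$ on which $\sum_{j\in\Z^d}\sup_{t\in\R}|\langle e_j,e^{-ith_\omega}\chi_I(h_\omega)e_k\rangle|<\infty$ for every $k$.

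Next, fixing $\omega\in\Omega_0$ and writing $h=h_\omega$, $\psi_k=\chi_I(h)e_k$, I would note that for every finite $\Lambda\subset\Z^d$,
\[ \sup_{t\in\R}\big\|\chi_{\Z^d\setminus\Lambda}\,e^{-ith}\psi_k\big\|^2=\sup_{t\in\R}\sum_{j\notin\Lambda}|\langle e_j,e^{-ith}\psi_k\rangle|^2\le\sum_{j\notin\Lambda}\sup_{t\in\R}|\langle e_j,e^{-ith}\psi_k\rangle|, \]
using $|a|^2\le|a|$ for $|a|\le1$; by the previous step the right-hand side tends to $0$ as $\Lambda$ increases to $\Z^d$. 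Thus the time orbit $\{e^{-ith}\psi_k:t\in\R\}$ is uniformly tight, hence precompact, in $\ell^2(\Z^d)$. The RAGE theorem, in the time-averaged form recalled before the statement (every vector of the continuous spectral subspace $\mathcal{H}_{\mathrm c}(h)$ leaves each finite region in Cesàro mean), then forces $\psi_k\in\mathcal{H}_{\mathrm{pp}}(h)$ for all $k$; together with the first paragraph this proves the claim for every $\omega\in\Omega_0$, i.e.\ almost surely.

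The one step deserving real care is the implication ``precompact orbit $\Rightarrow$ pure point vector''. A priori $\psi_k=\psi_k^{\mathrm{pp}}+\psi_k^{\mathrm c}$ with $\psi_k^{\mathrm c}\in\mathcal{H}_{\mathrm c}(h)$, and one must show $\psi_k^{\mathrm c}=0$. Because $\chi_{\mathrm c}(h)$ is a bounded operator it maps the precompact orbit of $\psi_k$ onto the precompact, hence tight, orbit of $\psi_k^{\mathrm c}$: for every $\e>0$ there is a finite $\Lambda$ with $\sup_t\|\chi_{\Z^d\setminus\Lambda}e^{-ith}\psi_k^{\mathrm c}\|<\e$. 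On the other hand RAGE gives $\frac{1}{T}\int_0^T\|\chi_\Lambda e^{-ith}\psi_k^{\mathrm c}\|^2\,dt\to0$ as $T\to\infty$; writing
\[ \|\psi_k^{\mathrm c}\|^2=\frac{1}{T}\int_0^T\big(\|\chi_\Lambda e^{-ith}\psi_k^{\mathrm c}\|^2+\|\chi_{\Z^d\setminus\Lambda}e^{-ith}\psi_k^{\mathrm c}\|^2\big)\,dt \]
and letting first $T\to\infty$ and then $\e\to0$ forces $\psi_k^{\mathrm c}=0$. Everything else — the Fubini/Borel--Cantelli bookkeeping of the second paragraph and the density/closedness argument of the first — is routine.
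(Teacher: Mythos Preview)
Your proof is correct and follows essentially the same approach as the paper: both combine the RAGE theorem with the summable exponential decay from dynamical localization (via the $|a|^2\le|a|$ trick for $|a|\le 1$) to show that each $\chi_I(h_\omega)e_k$ has no continuous spectral component. The only organizational difference is that the paper takes the expectation first and shows $\E(\|P_{\mathrm{cont}}(h_\omega)\chi_I(h_\omega)\psi\|^2)=0$ directly from the RAGE formula, whereas you first extract an almost-sure event on which the orbits are uniformly tight and then argue deterministically via precompactness.
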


\begin{proof}
For a discrete Schr\"odinger operator $h=h_0+V$ in $\ell^2(\Z^d)$
let $P_{cont}(h)$ be the projection onto its continuous spectral
subspace. Then the RAGE-Theorem, e.g.\ \cite{CFKS}, says that for
every $\psi \in \ell^2(\Z^d)$,
\begin{equation} \label{eq:rage}
\|P_{cont}(h) \chi_I(h) \psi\|^2 = \lim_{R\to\infty}
\lim_{T\to\infty} \int_0^T \frac{dt}{T} \|\chi_{\{|x|\ge R\}}
e^{-ith} \chi_I(h) \psi \|^2.
\end{equation}
If $\psi$ has finite support, say supp$\,\psi \subset \{|x|\le r\}$,
then
\begin{eqnarray*}
\|\chi_{\{|x|\ge R\}} e^{-ith} \chi_I(h) \psi\|^2 & \le &
\|\chi_{\{|x|\ge R\}} e^{-ith} \chi_I(h) \chi_{\{|x|\le r\}} \|
\|\psi\|^2 \\
& \le & \sum_{|x|\ge R, |y|\le r} | \langle e_x, e^{-ith} \chi_I(h)
e_y \rangle | \|\psi\|^2,
\end{eqnarray*}
where dropping a square is allowed as $\|\chi_{\{|x|\ge R\}}
e^{-ith} \chi_I(h) \chi_{\{|x|\le r\}} \| \le 1$.

Taking expectations in (\ref{eq:rage}) implies, after using Fatou
and Fubini,
\begin{eqnarray} \label{eq:ragebound}
\lefteqn{\E (\|P_{cont}(h_{\omega}) \chi_I(h_{\omega}) \psi\|^2)} \\
\nonumber & \le & \lim_{R\to\infty, T\to\infty} \int_0^T
\frac{dt}{T} \sum_{|x|\ge R, |y|\le r} \E\left(|\langle e_x,
e^{-ith_{\omega}} \chi_I(h_{\omega}) e_y \rangle |\right)
\|\psi\|^2.
\end{eqnarray}
By (\ref{eq:dynloc}) we have $\E(|\langle e_x, e^{-ith_{\omega}}
\chi_I(h_{\omega}) e_y \rangle|) \le Ce^{-\mu |x-y|}$ uniformly in
$t$, which bounds the right hand side of (\ref{eq:ragebound}) by
\[
\le \lim_{R\to\infty} \tilde{C} \sum_{|x|\ge R, |y|\le r}
e^{-\mu|x-y|} = 0.
\]
We conclude that $P_{cont}(h_{\omega})
\chi_I(h_{\omega}) \psi = 0$ for almost every $\omega$ and every
$\psi$ of finite support. The latter are dense in $\ell^2(\Z^d)$ and
thus $P_{cont}(h_{\omega}) \chi_{I}(h_{\omega}) =0$ almost surely,
meaning that the spectrum in $I$ is pure point.

\end{proof}

We note that the above proof of pure point spectrum does not imply exponential decay of corresponding
eigenfunctions. It is shown in \cite{Aizenman/Molchanov} how this follows directly from exponential decay of fractional moments (\ref{eq:fmdecay1}), using the Simon-Wolff-method \cite{Simon/Wolff}. It can also be deduced from (\ref{eq:Borelloc}) by considering $g(h)= \delta_E(h)$, $E\in I$, using the result from \cite{Simon:Cyclic} that almost surely all eigenvalues of  $h_{\omega}$ in $I$ are non-degenerate. For details on this see Section~2.5 of \cite{AENSS06}, where a corresponding argument for the continuum Anderson model is provided which also applies to the discrete Anderson model considered here.

\section{Finite Volume Methods} \label{sec:efcor}

The localization proof provided in Section~\ref{sec:Grafmethod} proceeds directly in {\it infinite volume}, i.e.\ does not require to consider restrictions of the hamiltonian $h$ to finite subsets of $\Z^d$. However, it is also possible to work in finite volume, prove the relevant bounds with volume-independent constants, and then deduce localization by taking the infinite volume limit (sometimes referred to as ``thermodynamical limit''). This has conceptual advantages such as having to only deal with discrete spectra, thus allowing to express functions of the hamiltonian by eigenfunction expansions and, as described at the end of Section~\ref{sec:largedisorder}, to directly study Green's function at real energy. Moreover, finite volume methods have proven very robust under generalizations, for example in the
extension to continuum Anderson models which we will discuss in
Section~\ref{sec:continuum}. For these reasons we will use this section to provide a different proof of dynamical localization, i.e.\ Theorem~\ref{thm2} above, using finite volume methods.

Many of the ideas involved here can already be found in the Kunz-Souillard approach to localization \cite{KS} for the one-dimensional Anderson model. They were first combined with the fractional moment method in \cite{Aizenman94} to prove dynamical localization for the multi-dimensional Anderson model. A central object are so-called {\it finite volume eigenfunction correlators}, arising from eigenfunction expansions. Eigenfunction correlators are
also used in similar form in proofs of dynamical localization via
multiscale analysis, see \cite{Stollmann} or \cite{Klein} and
references therein. 

The methods to be described here are not completely disjoint from the methods of Section~\ref{sec:Grafmethod}. As before, we consider the mixed spectral measures $\mu_{x,y}$ of $h$
introduced in (\ref{eq:specmeas}) as well as their total
variation $|\mu_{x,y}|$ given by (\ref{eq:totvar}). As will become clear in (\ref{eq:efcorbound}) below, $|\mu_{x,y}|$ can be considered as an {\it infinite volume eigenfunction correlator} for $h$. We will find bounds for it by finding bounds for finite volume eigenfunction correlators which hold uniformly in the volume.

Let 
$h_{\omega}^{\Lambda_L}$ be the restriction of $h_{\omega}$ to $\Lambda_L = [-L,L]^d
\cap \Z^d$ and denote its Green function by $G_{\omega}^{\Lambda_L}$.

\begin{proposition} \label{prop:totvarbound}
Let $0<s<1$ and $I$ an open bounded interval. Then there exists
$C=C(s,\rho,d)<\infty$ such that
\begin{equation} \label{eq:totvarbound}
\E(|\mu_{x,y}|(I)) \le C \liminf_{L\to\infty} \left( \int_I
\E(|G_{\omega}^{\Lambda_L}(x,y;E)|^s) \,dE \right)^{\frac{1}{2-s}}.
\end{equation}
\end{proposition}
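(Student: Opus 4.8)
The plan is to express the total variation $|\mu_{x,y}|(I)$ in terms of a finite-volume eigenfunction correlator, bound it uniformly in $L$, and then pass to the limit. First I would fix the finite volume $\Lambda_L$ and use the eigenfunction expansion of $h_\omega^{\Lambda_L}$: writing $\{\psi_n\}$ for an orthonormal eigenbasis with eigenvalues $E_n$, the natural object is the \emph{finite volume eigenfunction correlator}
\[
Q_L(x,y;I) := \sum_{n:\,E_n\in I} |\psi_n(x)|\,|\psi_n(y)|,
\]
which satisfies $|\langle e_x, g(h_\omega^{\Lambda_L})\chi_I(h_\omega^{\Lambda_L}) e_y\rangle| \le Q_L(x,y;I)$ for every Borel $g$ with $|g|\le 1$, hence (taking $g=\chi_B$, $B\subset I$, and then the supremum as in~(\ref{eq:totvar})) controls the total variation of the finite-volume mixed spectral measure. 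The first key step, then, is to show that the infinite-volume quantity $\E(|\mu_{x,y}|(I))$ is bounded by $\liminf_{L\to\infty}\E(Q_L(x,y;I'))$ for a slightly enlarged interval $I'$; this is a strong-resolvent-convergence argument (restrictions $h_\omega^{\Lambda_L}$ converge strongly to $h_\omega$, spectral projections of open intervals converge appropriately) combined with Fatou.

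The second key step is to bound $\E(Q_L(x,y;I))$ by the fractional-moment integral on the right of~(\ref{eq:totvarbound}). Here I would relate the correlator to the Green function via the contour/Stone-formula representation: for each eigenpair, $|\psi_n(x)|^2 = \lim_{\varepsilon\to 0}\frac{\varepsilon}{\pi}|G_\omega^{\Lambda_L}(x,x;E_n+i\varepsilon)|^2$-type identities, so that $Q_L(x,y;I) \le \liminf_{\varepsilon\to 0}\frac{\varepsilon}{\pi}\int_I |G_\omega^{\Lambda_L}(x,y;E+i\varepsilon)|\,(\text{diagonal factors})\,dE$ after a Cauchy--Schwarz split across the spectral decomposition, analogously to the computation in the proof of Theorem~\ref{thm2}. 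Applying Proposition~\ref{prop:secondmoment} (which also holds in finite volume with the same constant) converts the $\varepsilon|G|^2$ factors into $|G|^s$ factors. The essential interpolation point is that the off-diagonal Green's function enters with an exponent that, after Cauchy--Schwarz and the use of the a-priori bound on the diagonal terms, produces the power $\tfrac{1}{2-s}$: one writes $|G(x,y)| = |G(x,y)|^{s/(2-s)}\cdot|G(x,y)|^{(2-2s)/(2-s)}$, applies Hölder with exponents $\tfrac{2-s}{s}$ and $\tfrac{2-s}{2-2s}$, and absorbs the second factor using the finite-volume a-priori bound of Lemma~\ref{lem1} (valid uniformly in $L$ and down to real energy in finite volume). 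This yields
\[
\E(Q_L(x,y;I)) \le C\Big(\int_I \E(|G_\omega^{\Lambda_L}(x,y;E)|^s)\,dE\Big)^{\frac{1}{2-s}}
\]
with $C$ depending only on $s$, $\rho$, $d$ (and $|I|$).

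Combining the two steps and taking $\liminf_{L\to\infty}$ gives~(\ref{eq:totvarbound}). The main obstacle I expect is the careful bookkeeping in the second step — getting the exponent $\tfrac{1}{2-s}$ exactly right requires choosing the Hölder split precisely and making sure the diagonal factors $1/|G(x,x;E\pm i\varepsilon)|^s$ produced by Proposition~\ref{prop:secondmoment} are controlled uniformly; one uses that in finite volume, almost surely $E$ is not an eigenvalue, so $|G^{\Lambda_L}(x,x;E)|$ is finite, but its reciprocal must be bounded after the $\omega_x$-average, which is exactly the kind of estimate the a-priori bound machinery supplies. A secondary technical point is justifying the interchange of the $\varepsilon\to 0$ limit with expectation and with the $E$-integration, which I would handle with Fatou and the finite-volume a-priori bounds exactly as in the proof of Theorem~\ref{thm2}.
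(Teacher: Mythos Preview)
Your first step --- reducing $\E(|\mu_{x,y}|(I))$ to $\liminf_{L\to\infty}\E(Q_L(x,y;I))$ via strong resolvent convergence and Fatou --- is correct and matches the paper's argument (the enlargement to $I'$ is not needed since $I$ is open and one works through the characterization~(\ref{eq:totvar2}) with $g\in C_c(I)$).

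The second step, however, has a genuine gap. Your H\"older split $|G|=|G|^{s/(2-s)}\cdot|G|^{(2-2s)/(2-s)}$ with exponents $(2-s)/s$ and $(2-s)/(2-2s)$ is vacuous: raising each factor to its H\"older exponent returns $|G|^1$ in both terms, so the inequality is trivial. Moreover, the a-priori bound of Lemma~\ref{lem1} controls $\E(|G|^s)$ for $s<1$, not $\E(|G|)$, so ``absorbing the second factor'' is impossible as written --- first moments of Green's function are not known to be finite, which is precisely why fractional moments are used. Finally, the Stone-formula route you sketch produces $\sum_{E_n\in I}|\psi_n(x)|^2|\psi_n(y)|^2$ from $\lim_{\varepsilon\to 0}\tfrac{\varepsilon}{\pi}\int_I|G(x,y;E+i\varepsilon)|^2\,dE$, not $Q_L(x,y;I)$ itself; there is no clear way to recover the correlator with exponent $1$ from this together with Proposition~\ref{prop:secondmoment}.

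The paper gets the exponent $1/(2-s)$ by a different mechanism. It introduces \emph{fractional} eigenfunction correlators $Q_L(x,y;I,r)=\sum_{E_n\in I}|\psi_n(x)|^{2-r}|\psi_n(y)|^r$ and interpolates $r=1$ between $r=s$ and $r=2$ by H\"older on both the eigenvalue sum and the expectation; the factor $(\E Q_L(x,y;I,2))^{(1-s)/(2-s)}$ is then dropped using the deterministic bound $Q_L(x,y;I,2)=\sum|\psi_n(y)|^2\le 1$. This is what produces the $1/(2-s)$. The remaining task is to bound $\E Q_L(x,y;I,s)$ by $C\int_I\E(|G^{\Lambda_L}(x,y;E)|^s)\,dE$, and here the crucial input --- absent from your proposal --- is a rank-one perturbation identity (Proposition~\ref{prop:efcor} in Appendix~\ref{sec:appendix}): for the family $h_x^L+vP_{e_x}$ one has
\[
\int_{\R}\frac{Q_{L,v}(x,y;I,s)}{|v|^s}\,dv=\int_I|G^{\Lambda_L}(x,y;E)|^s\,dE,
\]
which is proved by the change of variables $v\mapsto E_k(v)$ from spectral-flow under the rank-one coupling. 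A resampling of $\omega_x$ (using that $\rho(\alpha)\le C\int\rho(u)|u-\alpha|^{-s}du$) then converts the left side into an upper bound for $\E Q_L(x,y;I,s)$. Proposition~\ref{prop:secondmoment} plays no role here.
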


Results of this form were first used in implicit form in \cite{Aizenman94} and later stated more explicitly in \cite{ASFH01}.
The exact statement given here as well as its proof
below and in Appendix~\ref{sec:appendix} follow notes provided to us in private communication by M.\ Aizenman and S.\ Warzel. They used similar results also in \cite{AW09}.

Based on (\ref{eq:dynlocspecial}), we see that
Proposition~\ref{prop:totvarbound} may be applied to provide a proof
of dynamical localization in $I$ in situations where it can be shown
that
\begin{equation} \label{eq:finitevolgreen2}
\E(|G_{\omega}^{\Lambda_L}(x,y;E)|^s) \le C e^{-\mu|x-y|}
\end{equation}
with constants which are uniform in $L$ and $E\in I$. In fact,
dynamical localization follows under the somewhat weaker assumption
that the energy average over $I$ of the fractional moments of
Green's function is exponentially decaying. However, in all our
applications we have uniform pointwise bounds available. For
example, as discussed at the end of Section~\ref{sec:largedisorder},
a bound of the form (\ref{eq:finitevolgreen2}) holds on the entire
spectrum for sufficiently large disorder, thus providing a second
proof of dynamical localization in this regime.

\begin{proof}[Proof of Proposition~\ref{prop:totvarbound}]
We start by reducing the claim (\ref{eq:totvarbound}) to properties
of finite-volume spectral measures. We again use the
characterization (\ref{eq:totvar2}) of $|\mu_{x,y}(I)|$ for open
bounded intervals $I$. Strong resolvent convergence of $h^L$ to $h$
implies for continuous $g$ of compact support that $\langle e_x,
g(h^{\Lambda_L}) e_y \rangle \to \langle e_x, g(h)e_y \rangle$ and
thus, by (\ref{eq:totvar2}),
\begin{equation} \label{eq:src}
|\mu_{x,y}|(I) \le \liminf_{L\to\infty} \sup_{|g|\le 1} |\langle
e_x, g(h^{\Lambda_L}) e_y \rangle |.
\end{equation}
Here the regularity assumption on $g$ can be dropped since
$h^{\Lambda_L}$ has discrete spectrum.

Let $h_x^L$ be the restriction of $h^{\Lambda_L}$ to the reducing subspace ${\mathcal H}_x$ for $h^{\Lambda_L}$ generated by $e_x$ and let $P_x$ be the orthogonal projection onto ${\mathcal H}_x$. Then $e_x$ is a cyclic vector for $h_x^L$ and all eigenvalues $E$ of $h_x^L$ are simple. Thus we may label the corresponding normalized eigenvectors by $\psi_E^L$. We use the notation $\psi_E^L$ also for $\psi_x^L \oplus 0$ in $\ell^2(\Lambda_L) = {\mathcal H}_x \oplus {\mathcal H}_x^{\perp}$.

By expanding into eigenvectors we get
\begin{eqnarray*}
|\langle e_x, g(h^{\Lambda_L}) e_y \rangle | & = & | \langle e_x, g(h_x^L) P_x e_y \rangle | \\
& = &
\Big| \sum_{E\in I\cap \sigma(h_x^L)} g(E)
\langle e_x, \psi_E^L \rangle \langle \psi_E^L, e_y \rangle \Big| \\
& \le & \sum_{E\in I\cap \sigma(h_x^L)} |\psi_E^L(x)||\psi_E^L(y)| \\
& =: & Q_L(x,y;I),
\end{eqnarray*}
and, in particular,
\[
\sup_{|g|\le 1} | \langle e_x, g(h^{\Lambda_L}) e_y \rangle | \le Q_L(x,y;I).
\]
The latter will be referred to as {\it eigenfunction
correlators}. Using Fatou's lemma we conclude from
(\ref{eq:src}) that
\begin{equation} \label{eq:efcorbound}
\E(|\mu_{x,y}|(I)) \le \liminf_{L\to\infty} \E(Q_L(x,y;I)).
\end{equation}

In order to establish a relation to the fractional moments of
Green's function we will also introduce {\it fractional
eigenfunction correlators} through
\begin{equation} \label{eq:fracefcor}
Q_L(x,y;I,r) := \sum_{E\in I\cap \sigma(h_x^L)}
|\psi_E^L(x)|^{2-r} |\psi_E^L(y)|^r
\end{equation}
for $0<r \le 2$, noting that $Q_L(x,y;I)=Q_L(x,y;I,1)$. We claim
that for $0<s<1$,
\begin{equation} \label{eq:interpol}
\E Q_L(x,y;I) \le (\E Q_L(x,y;I,s))^{\frac{1}{2-s}}.
\end{equation}
To see this, interpolate $s<1<2$ via $1=\frac{s}{p}+\frac{2}{q}$
with the conjugate exponents $p=2-s$ and $q=\frac{2-s}{1-s}$.
Applying H\"older to expectation as well as to summation yields
\[
\E Q_L(x,y;I,1) \le (\E Q_L(x,y;I,s))^{\frac{1}{2-s}} (\E Q_L(x,y;I,2))^{\frac{1-s}{2-s}}.
\]
This implies (\ref{eq:interpol}) after observing that $Q_L(x,y;I,2)
= \sum_{E\in I\cap \sigma(h_x^L)} |\psi_E^L(y)|^2 \le 1$.

We will now be able to relate the fractional eigenfunction
correlators to fractional moments of Green's function by showing
that there exists a constant $C=C(s,\rho,d)$ such that
\begin{equation} \label{eq:greenefcor}
\E Q_L(x,y;I,s) \le C \int_I \E(|G^{\Lambda_L}(x,y;E)|^s)\,dE.
\end{equation}
This, combined with (\ref{eq:efcorbound}) and (\ref{eq:interpol}),
implies (\ref{eq:totvarbound}).

In the proof of (\ref{eq:greenefcor}) we will use the fractional
eigenfunction correlators $Q_{L,v}(x,y;I,s)$ which are defined as in
(\ref{eq:fracefcor}), but with the summation being over the
eigenvalues and eigenfunctions of $h_x^L+vP_{e_x}$. Note that, as $e_x$ is a cyclic vector for $h_x^L$, $h_x^L+v P_{e_x}$ is the same as the restriction of $h^{\Lambda_L} +vP_{e_x}$ to ${\mathcal H}_x$ and that $e_x$ is a cyclic vector for this operator for all values of $v\in \R$. This makes Proposition~\ref{prop:efcor} in Appendix~\ref{sec:appendix} applicable to our situation, which we will now use to finish the proof of Proposition~\ref{prop:totvarbound} by invoking a {\it resampling
argument}.

For this note that $\int
\frac{\rho(u)}{|u-\alpha|^s}\,du$ is continuous and non-vanishing as
a function of $\alpha\in \R$. Thus there exists a constant
$C=C(s,\rho)<\infty$ such that
\begin{equation} \label{eq:resample}
\frac{\rho(\alpha)}{\int \frac{\rho(u)}{|u-\alpha|^s}\,du} \le C
\quad \mbox{for all $\alpha\in \R$}.
\end{equation}

Writing $\omega = (\hat{\omega}, \omega_x)$ and denoting the
expectation over $\hat{\omega}$ by $\hat{\E}$, we get
\begin{eqnarray} \label{eq:resample2}
\E(Q_L^{\omega}(x,y;I,s)) & = & \hat{\E} \int_{\R}
Q_L^{(\hat{\omega}, \omega_x)}(x,y;I,s) \rho(\omega_x)\,d\omega_x \\
\nonumber & \le & C\hat{\E} \int_{\R} \left( \int_R
Q_L^{(\hat{\omega}, \omega_x)}(x,y;I,s)
\frac{d\omega_x}{|u-\omega_x|^s} \right) \rho(u)\,du.
\end{eqnarray}
After the change of variable $\omega_x \mapsto v:=\omega_x-u$ we see
that the inner integral is equal to $\int
Q_{L,v}^{(\hat{\omega},u)}(x,y;I,s)\,\frac{dv}{|v|^s}$. By
Proposition~\ref{prop:efcor} this coincides with $\int_I
|G_{(\hat{\omega},u)}^{\Lambda_L}(x,y;E)|^s\,dE$. Inserting into
(\ref{eq:resample2}) we arrive at
\begin{eqnarray*}
\E(Q_L^{\omega}(x,y;I,s)) & \le & C \hat{\E} \int_{\R} \int_I |G_{(\hat{\omega}, u)}^{\Lambda_L}(x,y;E)|^s\,dE \\
& = & C \E \int_I |G_{\omega}^{\Lambda_L}(x,y;E)|^s\,dE.
\end{eqnarray*}

\end{proof}

\section{Lifshits Tails and Band Edge Localization} \label{sec:bandedge}

\subsection{Band edge localization}

So far the only regime in which we have proven localization is the case of large disorder $\lambda >>1$ in (\ref{eq:Andersondis}). In this section we consider the Anderson model in its original form (\ref{eq:Anderson}), i.e.\ at fixed disorder. Our goal is a proof of localization at energies near the bottom of the spectrum. The arguments involved can be easily modified to show the same near the upper spectral edge.

For notational convenience we will assume that the density $\rho$ of the distribution of the $\omega_x$ satisfies supp$\,\rho = [0,\omega_{max}]$. We also write $E_0=-2d$, which according to Theorem~\ref{thm:charspectrum} becomes the bottom of the almost sure spectrum
\[ \Sigma = [E_0, 2d+\omega_{max}] \]
of $h_{\omega}$.

Our localization proof will again proceed via showing exponential decay of the fractional moments of Green's function:

\begin{theorem} \label{thm3}
For every $s\in (0,1)$ there exist $\delta>0$, $\mu>0$ and
$C<\infty$ such that
\begin{equation} \label{eq:fmdecay2}
\E(|G_{\omega}(x,y;E+i\epsilon)|^s) \le Ce^{-\mu|x-y|}
\end{equation}
for all $x,y \in \Z^d$, $E\in [E_0, E_0+\delta)$ and $\epsilon>0$.
\end{theorem}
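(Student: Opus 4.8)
The plan is to run the same iteration scheme that proved Theorem~\ref{thm1}, but where large disorder $\lambda$ is no longer available to make the per-step factor small. Instead, the smallness must come from two new inputs: a low-energy \emph{a priori} bound expressing that $(h_0 - E)$ is ``large'' (invertible with controlled norm) when $E < E_0 = -2d$, and \emph{Lifshits-tail behavior} of the integrated density of states near $E_0$, which says that the finite-volume Hamiltonian $h_\omega^{\Lambda_L}$ has, with overwhelming probability, no spectrum near $E_0$ on a box of size $L \sim (\text{distance to }E_0)^{-1/2}$ or similar. Concretely, I would first establish a deterministic resolvent bound: for $E \in [E_0, E_0+\delta)$ and any $\omega$ with all $\omega_n$ in a small box $[0,\eta]$, the operator $h_\omega - (E+i\epsilon)$ restricted to a finite cube is invertible with norm bounded in terms of $\eta$ and the gap, because $h_0 \ge E_0 = -2d$ and adding a nonnegative potential only pushes the spectrum up. This gives exponential decay of the finite-volume Green function on the (rare) good event where the potential is small throughout the cube.

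Second, I would set up a geometric resolvent expansion: starting from $0 = \langle e_x, e_y\rangle$ expanded via the resolvent identity as in \eqref{eq:expansion}, iterate the decoupling estimate \eqref{eq:itstep}, but now instead of iterating one site at a time, iterate across \emph{blocks} of side length $\ell$. On each block one uses either (i) the a priori fractional-moment bound from Lemma~\ref{lem1} (always valid, giving a constant factor), or (ii) on the good event that the potential is everywhere small on the block, the deterministic exponential decay of the block Green function. The Lifshits-tails input is precisely what makes the ``bad'' event (potential not small somewhere in an $\ell$-block) have probability small enough to beat the combinatorial and a-priori-bound factors $(2dC_1)^{\ell^d}$ picked up when the good event fails — this is the standard mechanism by which Lifshits tails feed into a Combes--Thomas / decoupling argument. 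Choosing $\delta$ small forces the relevant block size $\ell$ to be large, which simultaneously makes the deterministic decay over a good block strong and makes the bad-event probability super-polynomially small in $\ell$; balancing these yields a net contraction factor strictly less than $1$ per block, and hence exponential decay in $|x-y|$ after $\sim |x-y|/\ell$ iterations.

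Third, I would pass from the finite-volume bound \eqref{eq:finitevolgreen} (established with $L$-uniform constants, exactly as in the remark at the end of Section~\ref{sec:largedisorder}) to the infinite-volume statement \eqref{eq:fmdecay2} by taking $L\to\infty$, using strong resolvent convergence and Fatou, together with the a priori bound to control the limit. The energy parameter $E+i\epsilon$ with $\epsilon>0$ causes no trouble since all bounds are uniform in $\epsilon$; indeed in finite volume one can even allow $\epsilon = 0$.

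\textbf{Main obstacle.} The crux is the quantitative interplay between the block size $\ell$, the energy window width $\delta$, and the probability of the ``bad'' event, i.e.\ making the Lifshits-tails estimate do real work. One needs the probability that $h_\omega^{\Lambda_\ell}$ has spectrum within $\delta$ of $E_0$ to decay faster than any power of $\ell$ — fast enough to overwhelm the factor $(2dC_1)^{\ell^d}$ coming from using the crude a priori bound on a bad block — and this requires either citing a Lifshits-tail / Wegner-type estimate from the literature or proving a suitable large-deviation bound for the bottom of the spectrum of $h_0 + V_\omega$ on a box. Correctly tracking that the relevant scale is $\ell \sim \delta^{-1/2}$ (from the Combes--Thomas bound, where decay rate over a good block is $\sim \ell\,\sqrt{\delta}$) and checking that at this scale the bad-event probability is indeed small enough is the delicate bookkeeping step; the rest of the argument is a routine adaptation of the proof of Theorem~\ref{thm1}.
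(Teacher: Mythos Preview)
Your overall architecture matches the paper's: an initial length estimate on a box of side $L$ (Lifshits tails plus Combes--Thomas), followed by an iteration across blocks of that size. But two steps in your execution are off.

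First, your good/bad decomposition is inverted. You take the good event to be ``all $\omega_n$ in a small box $[0,\eta]$'' and call it rare. Since $\mbox{supp}\,\rho = [0,\omega_{max}]$, the potential is always nonnegative, so $h_\omega^{\Lambda_L} \ge h_0^{\Lambda_L}$ holds deterministically; but $\inf\sigma(h_0^{\Lambda_L}) = E_0 + O(L^{-2})$, which gives no useful gap at fixed $\delta$. The correct good event (Lemma~\ref{lem4}) is $\{\inf \sigma(h_\omega^{\Lambda_L}) > E_0 + L^{-\beta}\}$, which has \emph{high} probability; on it Combes--Thomas gives decay $e^{-cL^{1-\beta}}$ across the box. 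On the rare bad event one applies H\"older: the a-priori bound of Lemma~\ref{lem1} contributes a single constant, and the small probability $e^{-\eta L^{\beta d/2}}$ does the work. There is no factor $(2dC_1)^{\ell^d}$ to overcome; the combinatorics are only polynomial in $L$. Balancing the two exponents gives $\beta = 2/(d+2)$ and $\delta \sim L_0^{-2/(d+2)}$, i.e.\ $L_0 \sim \delta^{-(d+2)/2}$ rather than $\delta^{-1/2}$. Your ``Main obstacle'' paragraph has the direction right, but the first paragraph and the claimed scaling do not.

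Second, the iteration cannot be carried out by replaying \eqref{eq:itstep} on blocks. The paper uses a \emph{double} geometric resolvent identity (decoupling at layers $L$ and $L{+}1$) to factor $G_\omega(0,y)$ as a sum of products $G_\omega^{\Lambda_L}(0,u)\, G_\omega(u',v)\, G_\omega^{\Lambda_{L+1}^c}(v',y)$. Averaging over $\omega_{u'},\omega_v$ via Lemma~\ref{lem1} eliminates the middle factor, and stochastic independence of the two outer ones (disjoint regions) splits the expectation into a product, with the first factor controlled by Lemma~\ref{lem5}. But after one step you hold $\E|G_\omega^{\Lambda_{L+1}^c}(v',y)|^s$ on a \emph{depleted} domain; naive iteration would produce increasingly complicated geometries with no uniform control. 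The paper closes this loop with Lemma~\ref{lem:fullbound}, which bounds the depleted Green function back in terms of the full $\E|G_\omega(\cdot,y)|^s$, restoring translation invariance for the next step. That lemma, and the decoupling argument behind it, is the genuine replacement for Lemma~\ref{lem2} in this regime, and your proposal supplies no analogue.
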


As discussed at the end of Section~\ref{sec:largedisorder}, our
methods again yield a bound on the finite volume Green function as
in (\ref{eq:finitevolgreen}), uniform in the volume and in $E\in
[E_0,E_0+\delta]$, allowing for $\epsilon=0$. Thus we can conclude spectral and dynamical localization at the bottom of the spectrum from either of the methods discussed in Sections~\ref{sec:Grafmethod} of \ref{sec:efcor}, working in infinite volume or in finite volume.

Compared to the case of large disorder, we face essentially two new
difficulties, which are illustrated by the shortcomings of
Lemmas~\ref{lem1} and \ref{lem2}. The a-priori bound from
Lemma~\ref{lem1} is still valid and will be used. But, as the
disorder $\lambda$ is fixed, we can not hope that the a-priori bound
also provides a ``smallness mechanism'', which can be used to
iteratively prove exponential decay. We will again proceed by
iteration, but a different mechanism will be needed to get it
started. Also, it will turn out that we need a different decoupling
argument. Lemma~\ref{lem2}, which was used in this context in the
proof of Theorem~\ref{thm1}, is too case-specific and will not work
for the geometric situations which we will encounter here.

\subsection{Lifshits tails}

Physically, the new smallness mechanism is provided by the fact that
the bottom of the spectrum $E_0 =\inf \Sigma$ is a {\it fluctuation
boundary}. This means that finite volume restrictions of
$h_{\omega}$ rarely have eigenvalues close to $0$. To make this precise, as
before let $h_{\omega}^{\Lambda_L}$ be the restriction of
$h_{\omega}$ to $\ell^2(\Lambda_L)$, $\Lambda_L = [-L,L]^d \cap
\Z^d$.

\begin{lemma} \label{lem4}
For every $\beta\in (0,1)$ there are $\eta>0$ and $C<\infty$ such
that
\begin{equation} \label{eq:fluctbound}
\PP(\inf \sigma(h_{\omega}^{\Lambda_L}) \le E_0 + L^{-\beta}) \le CL^d
e^{-\eta L^{\beta d/2}}
\end{equation}
for all $L\in \N$.
\end{lemma}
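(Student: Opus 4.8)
The plan is to bound the probability that the Dirichlet-type restriction $h_\omega^{\Lambda_L}$ has spectrum near $E_0=-2d$ by comparing with the free Laplacian restricted to $\Lambda_L$ and exploiting that the random potential is non-negative on $\mathrm{supp}\,\rho=[0,\omega_{max}]$. Since $h_\omega^{\Lambda_L} = h_0^{\Lambda_L} + V_\omega^{\Lambda_L}$ with $V_\omega^{\Lambda_L}\ge 0$, for any normalized $\psi\in\ell^2(\Lambda_L)$ we have
\[
\langle \psi, h_\omega^{\Lambda_L}\psi\rangle = \langle \psi, h_0^{\Lambda_L}\psi\rangle + \sum_{n\in\Lambda_L}\omega_n|\psi(n)|^2 \ge E_0 + \sum_{n\in\Lambda_L}\omega_n|\psi(n)|^2.
\]
So if $\inf\sigma(h_\omega^{\Lambda_L})\le E_0+L^{-\beta}$, there is a normalized $\psi$ with $\langle\psi,h_0^{\Lambda_L}\psi\rangle \le E_0 + L^{-\beta}$ \emph{and} $\sum_n \omega_n|\psi(n)|^2\le L^{-\beta}$. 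The first condition forces $\psi$ to be spectrally concentrated near the bottom of $h_0^{\Lambda_L}$; the second then says the potential is "untypically small" on the support of such a $\psi$.

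The key steps, in order: (1) Spectral-projection step. Let $P$ be the spectral projection of $h_0^{\Lambda_L}$ onto $[E_0, E_0+2L^{-\beta}]$ (or a comparable window). A standard variational/IMS-type estimate shows that any $\psi$ with $\langle\psi,h_0^{\Lambda_L}\psi\rangle\le E_0+L^{-\beta}$ satisfies $\|(1-P)\psi\|^2 \le 1/2$, hence $\|P\psi\|^2\ge 1/2$. (2) Delocalization of low-lying free eigenfunctions. The eigenfunctions of $h_0^{\Lambda_L}$ in this low window are (products of) the lowest Dirichlet modes on $[-L,L]$, which are spread out: one shows $\mathrm{tr}\,P \le C L^{d(1-\beta/2)}$ (counting lattice momenta with $\sum_j(1-\cos k_j)\lesssim L^{-\beta}$, i.e.\ $|k_j|\lesssim L^{-\beta/2}$), and more importantly that $\|P\psi\|_\infty$ is controlled so that the mass of $P\psi$ cannot concentrate on few sites. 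Concretely, one extracts a deterministic set $S\subset\Lambda_L$ with $|S|\ge cL^{d}$ (a positive fraction, or at least $\gtrsim L^{\beta d/2}$ sites) on which any unit vector in $\mathrm{Ran}\,P$ carries a total mass bounded below by a constant. (3) Probabilistic step. On that set $S$, the event $\sum_{n\in S}\omega_n|\psi(n)|^2\le L^{-\beta}$ forces $\omega_n$ to be small (of order $L^{-\beta}$, hence eventually below any fixed threshold) for "most" $n\in S$; since the $\omega_n$ are i.i.d.\ with $\mathrm{supp}\,\rho=[0,\omega_{max}]$, the probability that, say, a fraction of the sites in $S$ have $\omega_n\le\epsilon$ is at most $q^{|S|}$ for some $q=q(\epsilon)<1$. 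A union bound over a covering of $\mathrm{Ran}\,P$ by finitely many $\psi$'s (or over the $\le CL^d$ choices of "where the mass sits") introduces only the polynomial prefactor $CL^d$, giving (\ref{eq:fluctbound}) with $\eta$ coming from $|S|\gtrsim L^{\beta d/2}$ and $\log(1/q)$.

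The main obstacle I anticipate is step (2): making precise, with the correct $L^{\beta d/2}$ exponent, the statement that low-energy eigenfunctions of the discrete Dirichlet Laplacian on $\Lambda_L$ are uniformly delocalized over a set of $\gtrsim L^{\beta d/2}$ sites on which \emph{any} low-lying state retains a definite fraction of its $\ell^2$-mass. The bound on $\mathrm{tr}\,P$ is easy from counting momenta, but converting a projection of small rank into a robust lower bound on the number of sites carrying the mass — uniformly over the (possibly high-dimensional when $d\ge 2$) range of $P$ — is the delicate point; it is essentially the quantitative heart of the Lifshits-tail estimate and is where one must invoke either an explicit description of the lowest product sine-modes on the cube together with a Chebyshev/Markov argument, or a Combes--Thomas / Temple-type inequality. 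Everything else (the variational splitting in step (1) and the large-deviation estimate in step (3)) is routine once step (2) is in hand.
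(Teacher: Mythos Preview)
The paper does not give a full proof of this lemma; it only outlines the argument and refers to \cite{Kirsch} and \cite{Stollmann} for details. The outline it gives, however, is structurally different from yours. Rather than projecting the (unknown) ground state onto the low-lying spectral subspace of $h_0^{\Lambda_L}$, the standard argument fixes the \emph{constant} function $\varphi(n)=|\Lambda_L|^{-1/2}$ as a test vector and applies Temple's inequality (or, alternatively, analytic perturbation theory) to bound $E_1(h_\omega^{\Lambda_L})$ from below directly. The outcome is roughly $E_1 \ge E_0 + c\,\bar\omega -$ (controllable error), with $\bar\omega = |\Lambda_L|^{-1}\sum_{n\in\Lambda_L}\omega_n$; the bound (\ref{eq:fluctbound}) then follows from a standard large-deviation estimate on the empirical mean $\bar\omega$. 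The ``uncertainty principle'' step you correctly flag as the crux is handled there by controlling the spectral gap needed in Temple's inequality, not by delocalization estimates on a whole spectral subspace.

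Your route has a gap that is sharper than the difficulty you already anticipate in step~(2). Suppose $\psi$ is the ground state and $P$ is your projection onto the window $[E_0,E_0+2L^{-\beta}]$. You correctly get $\|(1-P)\psi\|^2\le\tfrac12$ from the kinetic bound, and the $\ell^\infty$ estimate $\|P\psi\|_\infty\lesssim L^{-\beta d/4}$ follows from $\mathrm{tr}\,P\lesssim L^{d(1-\beta/2)}$ together with $\|\phi_k\|_\infty\lesssim L^{-d/2}$ for the product sine modes. But the potential-energy constraint $\sum_n\omega_n|\psi(n)|^2\le L^{-\beta}$ is on $\psi$, not on $P\psi$, and you cannot transfer it: the piece $(1-P)\psi$ has $\ell^2$-norm of order one and may carry potential energy of order $\omega_{\max}$, so $\sum_n\omega_n|P\psi(n)|^2$ is only bounded by a constant, not by $L^{-\beta}$. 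Enlarging the spectral window to make $\|(1-P)\psi\|$ small destroys the $\ell^\infty$ bound on $P\psi$. Thus the link between ``small potential energy'' and ``many sites with small $\omega_n$'' breaks, and the large-deviation step~(3) does not follow as written. Temple's inequality (or the perturbative variant in \cite{Stollmann}) avoids this entirely by never trying to locate the true ground state: it produces a lower bound on $E_1$ from a single, explicit, $\omega$-independent test vector, and the probabilistic input reduces to a large-deviation bound for a sum of i.i.d.\ variables.
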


To illustrate why this means that small eigenvalues are rare, let us
assume that the $|\Lambda_L| =(2L+1)^d$ eigenvalues of
$h_{\omega}^L$ are uniformly distributed in $\Sigma$. Then the
smallest eigenvalue should be no larger than  $C/L^d$. But by
(\ref{eq:fluctbound}) this is extremely rare for large $L$. In fact,
the methods used to prove (\ref{eq:fluctbound}) can also be used to
prove that the integrated density of states $N(E)$ of $h_{\omega}$
satisfies Lifshits-tail asymptotics near the bottom of the spectrum:
\begin{equation} \label{eq:lifshits}
N(E) \le C e^{-\eta |E-E_0|^{-d/2}},
\end{equation}
which is much ``thinner'' near $E_0$ than the corresponding IDS
$N_0(E) = C|E-E_0|^{d/2}$ of the Laplacian $h_0$.

For detailed proofs of Lemma~\ref{lem4} as well as
(\ref{eq:lifshits}) we refer to \cite{Kirsch} or \cite{Stollmann},
with the latter working in the setting of the continuum Anderson
model (but applicable to the discrete model as well). Here we only
briefly outline the reasons behind Lemma~\ref{lem4}. By the
variational principle
\begin{eqnarray} \label{eq:variation}
\inf \sigma(h_{\omega}^{\Lambda_L}-E_0) & = & \inf_{\|\varphi\|=1}
\langle (h_{\omega}^{\Lambda_L}-E_0) \varphi, \varphi \rangle \\ \nonumber
& = & \inf_{\|\varphi\|=1} \Big( \langle (h_0^{\Lambda_L}-E_0) \varphi,
\varphi \rangle + \sum_{i\in \Lambda_L} \omega_i |\varphi(i)|^2
\Big).
\end{eqnarray}

Note that both terms on the right hand side of (\ref{eq:variation}),
the kinetic and potential energy, are non-negative. In order to find
a low lying eigenvalue, they both need to be small. By reasons of
the uncertainty principle, small kinetic energy requires that
$\varphi$ is approximately constant, $\varphi(i) \sim C =
|\Lambda_L|^{-1/2}$, to be normalized. For such $\varphi$ the
potential energy is approximately $\sum_{i\in \Lambda_L} \omega_i/
|\Lambda_L|$, which by the central limit theorem with large probability is close to the expected
value $\E(\omega_0)>0$. The event $\sum_i \omega_i /|\Lambda_L| <
L^{-\beta} <\E(\omega_0)$ is a large deviation and has probability
exponentially small in $|\Lambda_L|$.

The weakest part of the above heuristics is the reference to the
uncertainty principle. Slightly different ways to make this rigorous
are provided in \cite{Kirsch} and \cite{Stollmann}, both requiring
arguments which control the separation of the second lowest
eigenvalue from the lowest eigenvalue. The proof provided in
\cite{Kirsch} (going back to work from the 1980s) uses Temple's
inequality in this context, while \cite{Stollmann} uses an argument
based on analytic perturbation theory.

In the context of proving Theorem~\ref{thm3}, Lemma~\ref{lem4}
provides a first step, a so-called {\it initial length estimate}:

\begin{lemma}[Initial Length Estimate] \label{lem5}
For every $s\in (0,1)$ there exist $C<\infty$ and $\eta>0$ such that
\begin{equation} \label{eq:ile}
\E(|G^{\Lambda_L}(x,y;E+i\epsilon)|^s) \le CL^d e^{-\eta
L^{d/(d+2)}}
\end{equation}
for all $L\in \N$, $x$, $y \in \Lambda_L$ with $|x-y| \ge L/2$, $E\in
[E_0,E_0+\frac{1}{2}L^{-2/(d+2)}]$ and $\epsilon>0$.
\end{lemma}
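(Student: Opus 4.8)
The plan is to split the probability space according to whether the finite-volume operator $h_\omega^{\Lambda_L}$ has spectrum close to $E_0$. Fix once and for all $\beta := 2/(d+2)$ and put
\[ B := \{\omega: \inf\sigma(h_\omega^{\Lambda_L}) \le E_0 + L^{-\beta}\}. \]
On the complement $B^c$ the bound is deterministic: for $E \in [E_0, E_0 + \tfrac12 L^{-\beta}]$ and $\omega \in B^c$, using $\sigma(h_\omega^{\Lambda_L}) \subset \R$ we get
\[ \mathrm{dist}(E+i\epsilon, \sigma(h_\omega^{\Lambda_L})) \ge \mathrm{dist}(E, \sigma(h_\omega^{\Lambda_L})) \ge \tfrac12 L^{-\beta}. \]
The Combes-Thomas estimate for discrete Schr\"odinger operators, which gives exponential decay of $|G^{\Lambda_L}(x,y;z)|$ in $|x-y|$ at a rate proportional to $\mathrm{dist}(z,\sigma(h_\omega^{\Lambda_L}))$ with a prefactor of order $\mathrm{dist}(z,\sigma(h_\omega^{\Lambda_L}))^{-1}$ and constants depending only on $d$ (hence uniform in $L$, in $\omega$ and under restriction to $\Lambda_L$), then yields, for a suitable $c=c(d)>0$ and all sufficiently large $L$,
\[ |G_\omega^{\Lambda_L}(x,y;E+i\epsilon)| \le C L^{\beta} e^{-c L^{-\beta}|x-y|} \le C L^{\beta} e^{-\frac{c}{2} L^{1-\beta}} = C L^{\beta} e^{-\frac{c}{2} L^{d/(d+2)}}, \]
where we used $|x-y| \ge L/2$ and $1-\beta = d/(d+2)$. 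As $\beta s < 1 \le d$, this gives $\E\!\big(\mathbf{1}_{B^c}|G_\omega^{\Lambda_L}(x,y;E+i\epsilon)|^s\big) \le C L^d e^{-\eta_1 L^{d/(d+2)}}$.

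For the event $B$ we invoke the large-deviation input of Lemma~\ref{lem4}, which with this $\beta$ gives $\PP(B) \le C L^d e^{-\eta L^{\beta d/2}} = C L^d e^{-\eta L^{d/(d+2)}}$; note that $\beta = 2/(d+2)$ is exactly the value making $\beta d/2$ equal to $1-\beta$, i.e.\ balancing the Lifshits exponent of Lemma~\ref{lem4} against the Combes-Thomas decay rate. Since on $B$ we control the Green function only in expectation, we combine $\PP(B)$ with the a-priori bound rather than a pointwise estimate: fixing any $s' \in (s,1)$, the finite-volume form of Lemma~\ref{lem1} gives $\E(|G_\omega^{\Lambda_L}(x,y;E+i\epsilon)|^{s'}) \le C_1(s',\rho)$, and H\"older's inequality with the conjugate exponents $s'/s$ and $s'/(s'-s)$ yields
\[ \E\!\big(\mathbf{1}_B |G_\omega^{\Lambda_L}(x,y;E+i\epsilon)|^s\big) \le \big(\E|G_\omega^{\Lambda_L}(x,y;E+i\epsilon)|^{s'}\big)^{s/s'} \PP(B)^{1-s/s'} \le C L^d e^{-\eta_2 L^{d/(d+2)}}. \]
Adding the two contributions and taking $\eta := \min\{\eta_1,\eta_2\}$ proves (\ref{eq:ile}); the finitely many small $L$ excluded by the ``sufficiently large $L$'' restriction above are absorbed into $C$ via the a-priori bound $\E(|G_\omega^{\Lambda_L}(x,y;E+i\epsilon)|^s) \le C_1$.

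I expect the Combes-Thomas step to be the only delicate point, and the one genuine input not already available in the excerpt. What is needed is a deterministic off-diagonal resolvent bound at an energy lying in a spectral gap of size $\delta = \tfrac12 L^{-\beta} \to 0$, with exponential rate $\asymp \delta$ and prefactor polynomial in $\delta^{-1}$: the \emph{linear} dependence of the rate on the (small) gap is precisely what makes the estimate survive multiplication by $|x-y| \ge L/2$ and still beat the threshold $L^{d/(d+2)}$. The point requiring care is to check that the constant $c(d)$ in that estimate is genuinely independent of $L$ and of the bounded random potential (it is, by the usual conjugation argument $h \mapsto e^{\theta F} h e^{-\theta F}$ with $F(\cdot) = |\cdot - y|$ and $\theta \asymp \delta$). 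Everything else — the forced choice $\beta = 2/(d+2)$, the bad-event probability from Lemma~\ref{lem4}, and the H\"older interpolation against the a-priori bound of Lemma~\ref{lem1} — is routine.
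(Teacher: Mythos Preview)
Your proof is correct and follows essentially the same approach as the paper: split the expectation according to whether $\inf\sigma(h_\omega^{\Lambda_L})$ lies within $L^{-\beta}$ of $E_0$, apply Combes--Thomas on the complement, combine the a-priori bound (Lemma~\ref{lem1}) with the Lifshits probability bound (Lemma~\ref{lem4}) via H\"older on the rare event, and balance the exponents by taking $\beta = 2/(d+2)$. The only cosmetic differences are that the paper leaves $\beta$ free until the end and phrases H\"older with exponents $p,q$ satisfying $sp<1$ (equivalent to your choice $s' = sp \in (s,1)$); your labeling of the ``bad'' event is in fact cleaner than the paper's, which contains a harmless swap of the names $\Omega_G$ and $\Omega_B$.
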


\begin{proof}
Let $\beta \in (0,1)$ and, motivated by Lemma~\ref{lem4}, define the
``good'' and ``bad'' sets as $\Omega_G := \{\omega: \inf
\sigma(h_{\omega}^{\Lambda_L}-E_0) \le L^{-\beta}\}$ and $\Omega_B :=
\Omega_G^c$. Then
\begin{equation} \label{eq:goodbad}
\E(|G^{\Lambda_L}(x,y;E+i\epsilon)|^s) =
\E(|G^{\Lambda_L}(x,y;E+i\epsilon)|^s \chi_{\Omega_G}) +
\E(|G^{\Lambda_L}(x,y;E+i\epsilon)|^s \chi_{\Omega_B}).
\end{equation}

Pick $p>1$ sufficiently small such that $sp<1$ and let $q$ be
conjugate to $p$, $\frac{1}{p}+\frac{1}{q}=1$. H\"older applied to
the second term on the right hand side of (\ref{eq:goodbad}) gives
\begin{eqnarray} \label{eq:bad}
\E(|G^{\Lambda_L}(x,y;E+i\epsilon)|^s \chi_{\Omega_B}) & \le &
\left(
\E(|G^{\Lambda_L}(x,y;E+i\epsilon)|^{sp} )\right)^{1/p}  \PP(\Omega_B)^{1/q} \\
\nonumber & \le & CL^{d/q} e^{-\frac{\eta}{q} L^{\beta d/2}},
\end{eqnarray}
where we have used the a-priori bound from Lemma~\ref{lem1} as well
as the probability bound from Lemma~\ref{lem4}. The first term on
the right hand side of (\ref{eq:goodbad}) concerns the event where
$E$ has distance at least $\frac{1}{2}L^{-\beta}$ from the bottom of
the spectrum, which allows to conclude exponential decay of
$|G^{\Lambda_L}(x,y;E+i\epsilon)|$ in $|x-y|$ from a Combes-Thomas
estimate (see e.g.\ \cite{Kirsch}), giving
\begin{eqnarray} \label{eq:good}
\E(|G^{\Lambda_L}(x,y;E+i\epsilon)|^s \chi_{\Omega_G}) & \le & C
L^{\beta s} e^{-s\eta|x-y|/L^{\beta}} \\ \nonumber & \le & C
L^{\beta s} e^{-\eta_1 L^{1-\beta}}
\end{eqnarray}
for constants $\eta>0$, $\eta_1>0$ and $C<\infty$.

The choice $\beta = 2/(2+d)$ leads to equal exponents in
(\ref{eq:bad}) and (\ref{eq:good}), which combine to give
(\ref{eq:ile}).
\end{proof}

\subsection{Geometric decoupling}

We will eventually fix $L=L_0$, choosing $L_0$ such that
the right hand side of (\ref{eq:ile}) is sufficiently small (how
small still to be determined). After making this choice we will pick
$\delta = \frac{1}{2}L_0^{-2/(d+2)}$, thus determining the interval
$[E_0,E_0+\delta]$ in which Theorem~\ref{thm3} establishes localization.
In order to derive the exponential decay bound (\ref{eq:fmdecay2})
from this we have to develop a decoupling method which will allow to
proceed iteratively, splitting the path from $x$ to $y$ into
segments of length $L_0$.

The description of this so-called {\it geometric decoupling method}
will fill the remainder of this section. Our argument will closely
follow a construction introduced in \cite{ASFH01}.

In addition to $h_{\omega}^{\Lambda_L}$, consider
$h_{\omega}^{\Lambda_L^c}$, the restriction of $h_{\omega}$ to
$\ell^2(\Lambda_L^c)$, where $\Lambda_L^c = \Z^d \setminus
\Lambda_L$. Let
\[ h_{\omega}^{(L)} = h_{\omega}^{\Lambda_L} \oplus h_{\omega}^{\Lambda_L^c}.\]
This means that
\begin{equation} \label{eq:decouple}
h_{\omega} = h_{\omega}^{(L)} + T^{(L)},
\end{equation}
where $T^{(L)}$ is the operator containing the ``hopping terms''
introduced by the discrete Laplacian between sites of $\Lambda_L$
and $\Lambda_L^c$. We write $G_{\omega}^{(L)}(z) = (h_{\omega}^{(L)}-z)^{-1}$.

More precisely, the matrix-elements of $T^{(L)}$ are
\begin{equation} \label{eq:hopping}
T^{(L)}(u,u') = \left\{ \begin{array}{ll} -1, & \mbox{if $(u,u') \in
\Gamma_L$}, \\ 0, & \mbox{else}, \end{array} \right.
\end{equation}
where $\Gamma_L$ is the boundary of $\Lambda_L$ defined as the set
of pairs $(u,u')$ with $|u-u'|=1$ and $u\in \Lambda_L$, $u'\in
\Lambda_L^c$, or vice versa.

We now perform a {\it double decoupling} of the resolvent
$G_{\omega} = G_{\omega}(z)$ by using the
resolvent equation for (\ref{eq:decouple}) twice, first at $L$ and
then at $L+1$:

\begin{eqnarray} \label{eq:doubledecoup}
G_{\omega} & = & G_{\omega}^{(L)} - G_{\omega}^{(L)} T^{(L)}
G_{\omega} \\ \nonumber & = & G_{\omega}^{(L)} - G_{\omega}^{(L)}
T^{(L)} G_{\omega}^{(L+1)} + G_{\omega}^{(L)} T^{(L)} G_{\omega}
T^{(L+1)} G_{\omega}^{(L+1)}.
\end{eqnarray}

Equations of this form are often referred to as {\it geometric resolvent
identities}.

By translation invariance it suffices to prove (\ref{eq:fmdecay2})
for $x=0$. If $|y|\ge L+2$, then the first two terms on the right
hand side of (\ref{eq:doubledecoup}) do not contribute to the
matrix-element $G_{\omega}(0,y;z)$ and thus
\begin{eqnarray*}
G_{\omega}(0,y;z) & = & \langle e_0, G_{\omega}^{(L)} T^{(L)}
G_{\omega} T^{(L+1)} G_{\omega}^{(L+1)} e_y \rangle \\ \nonumber & =
& \sum_{(u,u')\in \Gamma_L} \sum_{(v,v')\in \Gamma_{L+1}}
G_{\omega}^{(L)}(0,u;z) G_{\omega}(u',v;z)
G_{\omega}^{(L+1)}(v',y;z).
\end{eqnarray*}

For $s\in (0,1)$ we get
\begin{eqnarray} \label{eq:expdecouple}
\lefteqn{\E(|G_{\omega}(0,y;z)|^s)} \\ \nonumber & \le &
\sum_{\begin{array}{cc} (u,u')\in \Gamma_L \\ (v,v')\in \Gamma_{L+1}
\end{array}} \E \left( |G_{\omega}^{\Lambda_L}(0,u;z)
G_{\omega}(u',v;z) G_{\omega}^{\Lambda_{L+1}^c}(v',y;z)|^s \right).
\end{eqnarray}
Here we have replaced $G_{\omega}^{(L)}$ by $G_{\omega}^{\Lambda_L}$
as $0$ and $u$ are both in $\Lambda_L$. Similarly,
$G_{\omega}^{(L+1)}$ was replaced by $G_{\omega}^{\Lambda_{L+1}^c}$.
For fixed $(u,u')$ and $(v,v')$ consider the corresponding term on
the right hand side of (\ref{eq:expdecouple}) and note that the first
and last of the three factors are independent of $\omega_{u'}$ and
$\omega_v$. Thus, in taking the expectation we may integrate over
$\omega_{u'}$ and $\omega_v$ first and use Lemma~\ref{lem1} to
conclude
\begin{eqnarray} \label{eq:bigtrick}
\lefteqn{\E \left( |G_{\omega}^{\Lambda_L}(0,u;z) G_{\omega}(u',v;z) G_{\omega}^{\Lambda_{L+1}^c}(v',y;z)|^s \right)} \\
& \le & C \E \left( |G_{\omega}^{\Lambda_L}(0,u;z)|^s |G_{\omega}^{\Lambda_{L+1}^c}(v',y;z)|^s \right) \label{eq:bt2} \\
& = & C \E ( |G_{\omega}^{\Lambda_L}(0,u;z)|^s ) \E (
|G_{\omega}^{\Lambda_{L+1}^c}(v',y;z)|^s ). \label{eq:bt3}
\end{eqnarray}
In the last step we have used that the remaining two factors in
(\ref{eq:bt2}) are stochastically independent. Now let
$z=E+i\epsilon$ with $E\in [E_0, E_0+\frac{1}{2}L^{-2/(2+d)}]$. Then we may
estimate the first factor in (\ref{eq:bt3}) by the bound obtained in
Lemma~\ref{lem5} and, after inserting into (\ref{eq:expdecouple}),
find
\begin{equation} \label{eq:firststep}
\E(|G_{\omega}(0,y;z)|^s) \le CL^{2d-1} e^{-\eta L^{d/(d+2)}}
\sum_{\|v'\|_{\infty}=L+2} \E (
|G_{\omega}^{\Lambda_{L+1}^c}(v',y;z)|^s ).
\end{equation}

We want to use (\ref{eq:firststep}) as the first step in an iteration. The second step would consist in finding a bound for $\E (|G_{\omega}^{\Lambda_{L+1}^c}(v',y;z)|^s$ similar to the bound for $\E(|G_{\omega}(0,y;z)|^s)$ given by (\ref{eq:firststep}), with $v'$ serving as the new origin. A problem arises from the fact that the underlying domain is not any longer $\Z^d$, but $\Lambda_{L+1}^c$. Iterating this would result in more and more complex geometries and we would be faced with the problem to check if all the constants involved in the estimates leading to (\ref{eq:firststep}) can be chosen uniform in those geometries.

An elegant way around this is the following result of \cite{ASFH01}, see Lemma~2.3 there, which allows to bound the depleted Green function $G_{\omega}^{(L+1)}$ in terms of the full Green function $G_{\omega}$:

\begin{lemma} \label{lem:fullbound}
There exists a constant $C=C(s,\rho)<\infty$ such that
\[
\E(|G_{\omega}^{(L+1)}(v',y;z)|^s) \le \E(|G_{\omega}(v',y;z)|^s) + C \sum_{\|u'\|_{\infty} = L+2} \E(|G_{\omega}(u',y;z)|^s).
\]
\end{lemma}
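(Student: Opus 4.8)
The plan is to compare $G_\omega^{(L+1)} = G_\omega^{\Lambda_{L+1}} \oplus G_\omega^{\Lambda_{L+1}^c}$ with the full resolvent $G_\omega$ via the resolvent identity for the decomposition $h_\omega = h_\omega^{(L+1)} + T^{(L+1)}$, and then to use the a priori bound of Lemma~\ref{lem1} to absorb the ``extra'' randomness introduced by the hopping term. First I would write, using $G_\omega = G_\omega^{(L+1)} - G_\omega^{(L+1)} T^{(L+1)} G_\omega$, that since $v' \in \Lambda_{L+1}^c$,
\[
G_\omega^{(L+1)}(v',y;z) = G_\omega(v',y;z) + \langle e_{v'}, G_\omega^{(L+1)} T^{(L+1)} G_\omega e_y \rangle,
\]
and expand the last term over the boundary $\Gamma_{L+1}$: it equals $\sum_{(u,u')\in\Gamma_{L+1}} G_\omega^{(L+1)}(v',\cdot;z) (\text{at the endpoint in }\Lambda_{L+1}^c) \cdot (-1) \cdot G_\omega(\cdot,y;z)$, where the first factor is a depleted Green function connecting $v'$ to a site $w$ adjacent to the boundary and the second is the full Green function from the matching site to $y$. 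Applying $|\sum a_n|^s \le \sum |a_n|^s$ and taking expectations reduces the claim to bounding each term $\E(|G_\omega^{(L+1)}(v',w;z)|^s\,|G_\omega(w',y;z)|^s)$ by a constant times $\E(|G_\omega(w',y;z)|^s)$, summed over $w'$ on the shell $\|w'\|_\infty = L+2$.

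The key point for each term is that $G_\omega^{(L+1)}(v',w;z)$ with $w\in\Lambda_{L+1}^c$ depends on $\omega$ only through the coordinates indexed by $\Lambda_{L+1}^c$, whereas the site $w'$ adjacent to $w$ across the boundary lies in $\Lambda_{L+1}$. One must locate a single site at which to condition. The natural choice is a site $w' \in \Lambda_{L+1}$ adjacent to the boundary (so $\|w'\|_\infty = L+1$, or after relabeling the shell $L+2$ as in the statement): the depleted factor $G_\omega^{(L+1)}(v',w;z)$ does not depend on $\omega_{w'}$, while $G_\omega(w',y;z)$, being a diagonal-type Green function in its first index $w'$, can be controlled by the a priori bound of Lemma~\ref{lem1} after integrating out $\omega_{w'}$. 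Concretely, conditioning on $\hat\omega = (\omega_u)_{u\ne w'}$ and using that $\E_{w'}(|G_\omega(w',y;z)|^s) \le C_1$ uniformly (the $\lambda=1$ case of Lemma~\ref{lem1}, or rather one needs the off-diagonal bound; note $G(w',y)$ with $w'$ the first index is handled exactly as the $x\ne y$ case of Lemma~\ref{lem1} when $w'\ne y$, and as the $x=y$ case when $w'=y$), one gets
\[
\E\big(|G_\omega^{(L+1)}(v',w;z)|^s\,|G_\omega(w',y;z)|^s\big) \le C_1\,\E\big(|G_\omega^{(L+1)}(v',w;z)|^s\big).
\]
This is the wrong direction, so instead I would condition the other way: fix $\hat\omega$ excluding $\omega_{w'}$, integrate out $\omega_{w'}$ from the depleted factor — but the depleted factor does not contain $\omega_{w'}$, so conditioning cannot help there. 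The correct move, following Lemma~2.3 of \cite{ASFH01}, is to integrate out $\omega_{v'}$: the factor $G_\omega(w',y;z)$ is independent of $\omega_{v'}$ (since $v'\notin\{w',y\}$ for $|y|$ large and $v'$ fixed), and $\E_{v'}(|G_\omega^{(L+1)}(v',w;z)|^s) \le C_1$ by Lemma~\ref{lem1} applied to the operator $h_\omega^{(L+1)}$ (whose $\Lambda_{L+1}^c$-block is an Anderson-type operator to which the a priori bound applies verbatim). This yields
\[
\E\big(|G_\omega^{(L+1)}(v',w;z)|^s\,|G_\omega(w',y;z)|^s\big) \le C_1\,\E\big(|G_\omega(w',y;z)|^s\big),
\]
and summing over the finitely many boundary pairs $(w,w')$ — there are $O(L^{d-1})$ of them, but the $w'$ all lie on the shell $\|w'\|_\infty = L+2$ — gives the stated bound with $C = C_1 \cdot \#\{\text{boundary terms per shell site}\}$, absorbed into the constant.

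The main obstacle is the bookkeeping of which Green functions depend on which random variables, and in particular arranging that $v'$ is distinct from $w'$ and from $y$ so that the independence needed for the conditioning is genuinely available; this is where one uses $|y|$ large (here $\|v'\|_\infty = L+2$ is fixed while $y$ is far away, and $w'$ is on the shell $L+2$, so one needs the case analysis $v' = w'$ handled separately — when they coincide, the depleted factor $G_\omega^{(L+1)}(v',w;z)$ still omits $\omega_{w'}$ only if $w\ne w'$, which holds since $w\in\Lambda_{L+1}^c$, $w'\in\Lambda_{L+1}$). A secondary technical point is to confirm that Lemma~\ref{lem1} applies to the depleted operator $h_\omega^{(L+1)}$: its proof used only rank-one (or rank-two) perturbation in the random coordinates together with the Herglotz property of the diagonal Green function, both of which hold for any self-adjoint restriction of $h_\omega$ to a subset of $\Z^d$ containing the relevant site, so this is routine. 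Everything else — the resolvent expansion, the subadditivity $|\sum a_n|^s\le\sum|a_n|^s$, and the final summation over the boundary — is mechanical.
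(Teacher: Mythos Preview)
Your resolvent identity and boundary expansion are the same as the paper's, and you correctly identify that the whole proof hinges on the product bound
\[
\E\big(|G_\omega^{(L+1)}(v',u;z)|^s\,|G_\omega(u',y;z)|^s\big) \le C\,\E\big(|G_\omega(u',y;z)|^s\big).
\]
But your argument for this bound has a genuine gap. You claim that ``the factor $G_\omega(w',y;z)$ is independent of $\omega_{v'}$ (since $v'\notin\{w',y\}$)''. This is false: $G_\omega(w',y;z) = \langle e_{w'}, (h_\omega-z)^{-1} e_y\rangle$ is a matrix element of the \emph{full} resolvent, and $h_\omega$ depends on every coupling $\omega_n$, including $\omega_{v'}$. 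The condition $v'\notin\{w',y\}$ is irrelevant --- the infinite-volume Green function between any two sites depends on the entire potential. So the move ``integrate out $\omega_{v'}$ from the depleted factor while treating $G_\omega(w',y;z)$ as a constant in $\omega_{v'}$'' is not available, and your inequality does not follow. (Your first attempt, conditioning on $\omega_{w'}$, was at least legitimate since the depleted factor genuinely omits $\omega_{w'}$; it just gave the wrong direction, as you noted.)

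This is exactly why the paper cannot get by with the a-priori bound alone. The paper's outline (following Lemma~2.3 of \cite{ASFH01}) obtains the product bound by a Krein-type formula that tracks the joint dependence of \emph{both} factors on all four variables $\omega_{v'},\omega_u,\omega_{u'},\omega_y$ simultaneously, combined with a two-variable analogue of the decoupling Lemma~\ref{lem2} (for functions that are separately fractional-linear in each variable). The point is that although both factors depend on $\omega_{v'}$ and $\omega_u$, this dependence is structured enough that averaging over the relevant pair of variables decouples the product while leaving behind only the full Green's function factor. Your proposal misses this mechanism; a bare conditioning argument cannot produce the needed inequality.
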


The proof of this starts from the geometric resolvent identity $G_{\omega}^{(L+1)} = G_{\omega} + G_{\omega}^{(L+1)} T^{(L+1)} G_{\omega}$. In the resulting Green function expansion over $(u,u') \in \Gamma_{L+1}$ crucial use is made of the bound
\[
\E(|G_{\omega}^{(L+1)}(v',u;z)|^s |G_{\omega}(u',y;z)|^s) \le C \E(|G_{\omega}(u',y;z)|^s).
\]
The proof of this uses another special case of the Krein formula similar to (\ref{eq:krein}) (but this time tracking the dependence on all four variables $\omega_{v'}$, $\omega_u$, $\omega_{u'}$ and $\omega_y$) and a decoupling argument similar to Lemma~\ref{lem2} (but for functions of two variables which are linear separately in each variable). For the details we refer to \cite{ASFH01}.

Lemma~\ref{lem:fullbound} combines with (\ref{eq:firststep}) to yield
\begin{equation} \label{eq:itstart}
\E(|G_{\omega}(0,y;z)|^s) \le C L^{4d} e^{-\eta L^{d/(d+2)}} \sup_{\|u_1\|_{\infty} \le L+2} \E(|G_{\omega}(u_1,y;z)|^s)
\end{equation}
for some constant $C<\infty$. With that constant we fix $L=L_0$ such that $\rho:= C L_0^{4d} e^{-\eta L_0^{d/(d+2)}} <1$. We also
choose $\delta := \frac{1}{2} L_0^{-2/(2+d)}$ now. For $E\in
[E_0, E_0+\delta]$ we can use (\ref{eq:itstart}) to start an iteration,
\[ \E (|G_{\omega}(u_1,y;z)|^s) \le \rho \sup_{\|u_2\|_{\infty} \le 2(L_0+2)} \E(|G_{\omega}(u_2,y;z)|^s), \]
and so forth. This iteration can be carried out approximately $|y|/L_0$ times before the chains $u_1$, $u_2$, \ldots may reach $y$. After this number of steps we use the a-priori bound
from Lemma~\ref{lem1} to bound the last fractional moment in the
chain. We have proven Theorem~\ref{thm3} with exponential decay rate
$\mu = |\log \rho|/L_0$.

\section{The Continuum Anderson Model} \label{sec:continuum}

It took somewhat more than a decade to find a generalization of the
fractional moment method to continuum Anderson models. Our goal in
this section is to explain why this took so long and how it was
eventually done. Here our presentation will be less self-contained than in previous sections. We will outline the new ideas which were needed and refer to the literature for details.

The main difficulty is that the rather elementary arguments from
rank-one and rank-two perturbation theory, which worked so well for
the discrete Anderson model, fall far short of applying in the
continuum. In the latter, each single site potential is a
perturbation of infinite rank, which at best has certain compactness properties relative to the Laplacian. To make the
central ideas behind the fractional moment method work in this
setting required a much deeper understanding of some of the
operator-theoretic aspects involved. Here we will follow the works
\cite{AENSS06} and \cite{BNSS06}, where these questions were
settled. Earlier work in \cite{HKK} extended certain aspects of the
fractional moment method to continuum models, but still relied on
finite-rank perturbation arguments by, for example, considering
continuum models with random point interactions.

For our presentation here we choose to work with the deterministic
background operator
\begin{equation} \label{eq:contbackground}
H_0 = -\Delta + V_0
\end{equation}
in $L^2(\R^d)$, where $V_0$ is a real-valued, $\Z^d$-periodic
potential in $L^{\infty}(\R^d)$. Let $E_0 := \inf \sigma(H_0)$
denote its spectral minimum.

A continuum Anderson-type model is then given by
\begin{equation} \label{eq:contAnderson}
H_{\omega} = H_0 - \sum_{n\in \Z^d} \omega_n U_n,
\end{equation}
where $\omega = (\omega_n)_{n\in \Z^d}$ is an array of i.i.d.\
random variables with bounded density $\rho$ such that supp$\,\rho =
[0,\omega_{max}]$.

The single-site potentials $U_n(x) = U(x-n)$ are translates of a
non-negative bump function $U$ characterized by the existence of
$0<r_1 \le r_2<\infty$ and $0< c_1 \le c_2 <\infty$ such that
\begin{equation} \label{eq:bump}
c_1 \chi_{\{|x|\le r_1\}} \le U \le c_2 \chi_{\{|x|\le r_2\}}.
\end{equation}

The spectrum of $H_{\omega}$ is almost surely deterministic,
\[ \sigma(H_{\omega}) = \Sigma \quad \mbox{a.s.}, \]
and
\[ E_1 := \inf \Sigma = \inf \sigma(H_0-\omega_{max}\sum_n U_n)\]
is characterized by choosing all couplings maximal and thus, due to
our sign-convention, the potential minimal. It can be shown under
the assumption (\ref{eq:bump}) that the spectral minimum is strictly
decreased by the random potential: $E_1<E_0$.

We will use the notation $\chi_n =\chi_{\Lambda_1(n)}$, where
$\Lambda_1(n)$ refers to the unit cube in $\R^d$ centered at $n\in
\Z^d$.

The following theorem is a special case of a result in
\cite{BNSS06}. Similar results were first obtained in
\cite{AENSS06}, where a ``covering condition'' of the form
\begin{equation} \label{eq:cover}
U\ge c\chi_0, \quad c>0,
\end{equation}
was required for the single-site potential.

\begin{theorem} \label{thm4}
Let $d\le 3$ and $0<s<\frac{1}{3}$. Then there exist $\delta>0$,
$\mu>0$ and $C<\infty$ such that
\begin{equation} \label{eq:contloc}
\E (\|\chi_k (H_{\omega}-E-i\epsilon)^{-1} \chi_{\ell}\|^s) \le C
e^{-\mu |k-\ell|}
\end{equation}
for all $E\in [E_1,E_1+\delta]$, $\epsilon>0$ and $k, \ell \in
\Z^d$.
\end{theorem}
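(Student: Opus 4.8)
The plan is to mimic the band-edge argument for the discrete model (Section~\ref{sec:bandedge}) as closely as possible, adapting the three ingredients—an a-priori bound, a Lifshits-tail/initial-length estimate, and a geometric decoupling iteration—to the continuum setting, where all of these become substantially harder because the single-site potentials $U_n$ have infinite rank. First I would record the continuum analogue of the a-priori bound: for $0<s<1/3$ there is a finite constant such that $\E(\|\chi_k (H_\omega-z)^{-1}\chi_\ell\|^s)\le C$ uniformly in $z\in\C\setminus\R$ and $k,\ell\in\Z^d$. This is the genuinely new operator-theoretic input; unlike the discrete case it is not an elementary rank-one computation. The natural route, following \cite{AENSS06, BNSS06}, is to first reduce to a ``boundedness of fractional moments of the sandwiched resolvent'' statement using the positivity $U\ge c_1\chi_{\{|x|\le r_1\}}$ from (\ref{eq:bump}), then exploit Krein-type / Birman-Schwinger representations to write the relevant resolvent restricted to a single site as $(A+\omega_n B)^{-1}$ with $B\ge 0$ of finite rank after a suitable compact reduction, and finally invoke an operator-valued version of the $2\times2$ fractional-integrability bound (\ref{eq:2by2fracbound}). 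The restriction $s<1/3$ enters precisely here, from the Cauchy-Schwarz/interpolation steps needed to pass from the scalar estimate to the norm of the sandwiched resolvent; I would be careful to track how the three factors that later appear in the geometric decoupling each cost a power of $s$, forcing $3s<1$.

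Next I would establish the initial-length estimate, the continuum counterpart of Lemma~\ref{lem5}. The mechanism is the same: $E_1=\inf\Sigma$ is a fluctuation boundary, so by the Lifshits-tails analysis (here I would cite \cite{Stollmann, Kirsch}, which work directly in the continuum) the finite-volume restriction $H_\omega^{\Lambda_L}$ satisfies $\PP(\inf\sigma(H_\omega^{\Lambda_L})\le E_1+L^{-\beta})\le CL^d e^{-\eta L^{\beta d/2}}$. Splitting into good and bad events as in (\ref{eq:goodbad}), on the bad event one uses Hölder together with the a-priori bound (now the continuum one), and on the good event the energy is at distance $\gtrsim L^{-\beta}$ from the spectrum, so a Combes-Thomas estimate gives exponential decay of $\|\chi_k(H_\omega^{\Lambda_L}-z)^{-1}\chi_\ell\|$ in $|k-\ell|$. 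Optimizing $\beta$ exactly as in the proof of Lemma~\ref{lem5} (the choice $\beta=2/(d+2)$) yields
\[
\E(\|\chi_k(H_\omega^{\Lambda_L}-E-i\epsilon)^{-1}\chi_\ell\|^s)\le CL^d e^{-\eta L^{d/(d+2)}}
\]
for $|k-\ell|\ge L/2$ and $E\in[E_1,E_1+\tfrac12 L^{-2/(d+2)}]$. The continuum Combes-Thomas estimate is standard, so this step is mostly bookkeeping once the a-priori bound is in hand.

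The final step is the geometric decoupling iteration, following \cite{ASFH01, BNSS06}: decouple $H_\omega=H_\omega^{(L)}+T^{(L)}$ along the boundary of $\Lambda_L$, apply the geometric resolvent identity (a double decoupling at $L$ and $L+1$ as in (\ref{eq:doubledecoup})), and bound $\E(\|\chi_0(H_\omega-z)^{-1}\chi_y\|^s)$ by a sum over boundary pairs of products of three sandwiched resolvents. Two of these three factors live on disjoint regions and after integrating out the interface random variables (using the a-priori bound to absorb the middle factor) become stochastically independent; the first factor is controlled by the initial-length estimate. One also needs the continuum analogue of Lemma~\ref{lem:fullbound}, bounding the depleted Green function $G_\omega^{(L+1)}$ by the full one plus a boundary sum, to keep the geometry from becoming unmanageable under iteration—this again rests on a Krein-formula computation tracking all four interface variables and a decoupling lemma for bilinear-in-each-variable functions. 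Fixing $L=L_0$ large enough that the prefactor $\rho:=CL_0^{Cd}e^{-\eta L_0^{d/(d+2)}}<1$ and setting $\delta:=\tfrac12 L_0^{-2/(d+2)}$, the iteration runs about $|k-\ell|/L_0$ times before the chain reaches $\ell$, at which point the a-priori bound terminates it, giving (\ref{eq:contloc}) with $\mu=|\log\rho|/L_0$. The main obstacle, and the reason $d\le 3$ appears, is the continuum a-priori bound of the first paragraph: the relative-compactness arguments and the operator-norm version of the fractional-integrability estimate degrade with dimension, and making them work requires the detailed resolvent-smoothing analysis of \cite{AENSS06, BNSS06}, which I would invoke rather than reproduce.
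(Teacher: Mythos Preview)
Your three-step outline (a-priori bound, Lifshits-tail initial-length estimate, geometric decoupling iteration) is exactly the approach the paper follows, and the structure you describe matches Section~\ref{sec:continuum} closely. A few points of detail are off, however. First, the a-priori bound (\ref{eq:aprioricont}) is \emph{not} uniform in $z\in\C\setminus\R$: without the covering condition (\ref{eq:cover}) it is only established for real parts $E\in[E_1,E_2]$ with $E_2<E_0=\inf\sigma(H_0)$, and this restriction is essential to the relative-perturbation argument. Second, the mechanism behind the a-priori bound is not a reduction to ``finite rank after a suitable compact reduction''; the whole difficulty is that the single-site perturbation has infinite rank, and the correct substitute for (\ref{eq:2by2fracbound}) is the weak-$L^1$ bound of Theorem~\ref{thm5} for $M(A-v+i0)^{-1}M^*$ with $A$ maximally dissipative and $M$ Hilbert-Schmidt. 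The dimension restriction $d\le 3$ enters precisely because $U^{1/2}(-\Delta+1)^{-1}$ is Hilbert-Schmidt only in $d\le 3$. Third, the a-priori bound itself holds for all $0<s<1$; the restriction $s<1/3$ comes solely from the three-factor H\"older step in the geometric decoupling, as you in fact say later in your first paragraph, so your sentence ``$s<1/3$ enters precisely here'' (meaning in the a-priori bound) is misplaced. These are corrections of attribution rather than of strategy; the overall plan is the paper's.
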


In Theorem~\ref{thm4} we use the norm of the localized resolvent
$\chi_k (H_{\omega}-E-i\epsilon)^{-1} \chi_{\ell}$ (sometimes called
a ``smeared Green function'') as a continuum analogue of the
discrete Green function $G(x,y;E+i\epsilon)$. This has also been
found to be the correct object to consider in continuum extensions
of multiscale analysis.

Without going into the details here (which for the continuum case
can be done similar to what was described in
Section~\ref{sec:efcor}, see \cite{AENSS06}), we state that
exponential decay of fractional moments of the smeared Green
function, as established in (\ref{eq:contloc}), implies spectral and
dynamical localization:

\begin{coro} \label{cor:contloc}
Under the assumptions of Theorem~\ref{thm4} the following holds:

(a) For almost every $\omega$, $H_{\omega}$ has pure point spectrum
in $[E_1, E_1+\delta]$ with exponentially decaying eigenfunctions.

(b) There are constants $\mu>0$ and $C<\infty$ such that
\begin{equation} \label{eq:contdynloc}
\E \left( \sup_{|g|\le 1} \|\chi_k g(H_{\omega})
\chi_{[E_1,E_1+\delta]}(H_{\omega}) \chi_{\ell} \| \right) \le
Ce^{-\mu |k-\ell|}
\end{equation}
for all $k, \ell \in \Z^d$, with the supremum taken over Borel
functions $g:\R\to\C$.
\end{coro}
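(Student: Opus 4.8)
The plan is to derive both statements from a single exponential bound. Write $I:=[E_1,E_1+\delta]$ and, for $k,\ell\in\Z^d$, let $\mu_{k,\ell}$ denote the $\mathcal{B}(L^2(\R^d))$-valued spectral measure $B\mapsto\chi_k\,\chi_B(H_\omega)\,\chi_\ell$, with total variation $|\mu_{k,\ell}|$, a finite positive Borel measure. First I would prove that there are $C<\infty$ and $\mu>0$ with
\[ \E\big(|\mu_{k,\ell}|(I)\big)\le Ce^{-\mu|k-\ell|}\qquad\text{for all }k,\ell\in\Z^d. \]
Part~(b) is then immediate: exactly as in (\ref{eq:totvar})--(\ref{eq:Borelloc}), one has $\sup_{|g|\le1}\|\chi_k\,g(H_\omega)\,\chi_I(H_\omega)\,\chi_\ell\|\le|\mu_{k,\ell}|(I)$, which is (\ref{eq:contdynloc}). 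For part~(a) I would first run the RAGE argument of Proposition~\ref{prop:dynspec} with $e_x$ replaced by $\chi_x$, the partition of unity $\sum_n\chi_n=1$ in place of completeness of $\{e_x\}$, and $\chi_{\Lambda_1(x)}(H_\omega+i)^{-1}$ compact (Rellich) in place of the corresponding discrete fact; together with (\ref{eq:contdynloc}) this gives $P_{\mathrm{cont}}(H_\omega)\chi_I(H_\omega)=0$ almost surely, i.e.\ pure point spectrum in $I$. Exponential decay of the eigenfunctions I would extract from the displayed bound: Chebyshev and Borel--Cantelli (summing over $k$) show that almost surely, for every $\ell\in\Z^d$, $\sup_k e^{\mu|k-\ell|/2}\,|\mu_{k,\ell}|(I)<\infty$; if $\psi$ is a normalized eigenfunction with eigenvalue $E_\psi\in I$, choose $\ell_0$ maximizing $\|\chi_\ell\psi\|$ and use $|\mu_{k,\ell_0}|(I)\ge\|\mu_{k,\ell_0}(\{E_\psi\})\|=\|\chi_k\psi\|\,\|\chi_{\ell_0}\psi\|$ --- valid since $E_\psi$ is almost surely a simple eigenvalue (cf.\ \cite{Simon:Cyclic} and Section~2.5 of \cite{AENSS06}, as noted after Proposition~\ref{prop:dynspec}) --- to conclude $\|\chi_k\psi\|\le C_{\omega,\ell_0}\,\|\chi_{\ell_0}\psi\|^{-1}\,e^{-\mu|k-\ell_0|/2}$.

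To prove the displayed bound I would follow the finite-volume route of Section~\ref{sec:efcor}. Let $H_\omega^{\Lambda_L}$ be the Dirichlet restriction of $H_\omega$ to $L^2(\Lambda_L)$; it has compact resolvent, hence a discrete spectrum with an orthonormal eigenbasis $\{\psi_j^L\}$, eigenvalues $E_j^L$. Put
\[ Q_L(k,\ell;I):=\sum_{j:\,E_j^L\in I}\|\chi_k\psi_j^L\|\,\|\chi_\ell\psi_j^L\|,\qquad Q_L(k,\ell;I,s):=\sum_{j:\,E_j^L\in I}\|\chi_k\psi_j^L\|^{2-s}\,\|\chi_\ell\psi_j^L\|^{s}. \]
Since $\sup_{|g|\le1}\|\chi_k\,g(H_\omega^{\Lambda_L})\,\chi_I(H_\omega^{\Lambda_L})\,\chi_\ell\|\le Q_L(k,\ell;I)$, strong resolvent convergence $H_\omega^{\Lambda_L}\to H_\omega$ together with Fatou gives, as in (\ref{eq:src})--(\ref{eq:efcorbound}), $\E(|\mu_{k,\ell}|(I))\le\liminf_{L\to\infty}\E(Q_L(k,\ell;I))$. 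The H\"older interpolation of (\ref{eq:interpol}) then applies, the only change being that its $r=2$ endpoint is now $\sum_{j:\,E_j^L\in I}\|\chi_\ell\psi_j^L\|^2=\mathrm{Tr}\big(\chi_\ell\,\chi_I(H_\omega^{\Lambda_L})\,\chi_\ell\big)$, no longer bounded by $1$ but still finite and bounded uniformly in $L$ and $\omega$ (a local-density-of-states bound: $\chi_\ell\,f(H_\omega^{\Lambda_L})\,\chi_\ell$ is trace class for $f\in C_c(\R)$, with a bound coming from $H_\omega\ge E_1$ and a heat-kernel/Combes--Thomas estimate). Carrying this deterministic constant through yields $\E(Q_L(k,\ell;I))\le C\,\big(\E\,Q_L(k,\ell;I,s)\big)^{1/(2-s)}$.

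What remains --- and this is the crux --- is the continuum analogue of (\ref{eq:greenefcor}),
\[ \E\,Q_L(k,\ell;I,s)\le C\int_I\E\big(\|\chi_k\,(H_\omega^{\Lambda_L}-E)^{-1}\,\chi_\ell\|^s\big)\,dE, \]
which, combined with the finite-volume form of Theorem~\ref{thm4} (the bound (\ref{eq:contloc}) with $\Lambda_L$ in place of $\R^d$, uniform in $L$ and valid at $\epsilon=0$ --- a fixed real $E$ being almost surely not an eigenvalue of $H_\omega^{\Lambda_L}$, as the proof in \cite{BNSS06} also shows), closes the argument after letting $L\to\infty$ and invoking the interpolation and Fatou steps above. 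This is where the continuum genuinely departs from Section~\ref{sec:efcor}: each single-site bump $U_k$ is an \emph{infinite-rank} perturbation, so the exact rank-one spectral-averaging identity of Proposition~\ref{prop:efcor}, and the resampling passage (\ref{eq:resample2}) built on it, are unavailable. The substitute, worked out in \cite{AENSS06} and \cite{BNSS06}, rests on three ingredients: (i) the Birman--Schwinger factorization $U_k=U_k^{1/2}U_k^{1/2}$ together with elliptic regularity, used to compare $\|U_k^{1/2}(H_\omega-z)^{-1}U_\ell^{1/2}\|$ with the smeared Green function $\|\chi_k(H_\omega-z)^{-1}\chi_\ell\|$ --- the comparison that forces $d\le3$ and $s<\frac{1}{3}$ (and is simplest under the covering condition (\ref{eq:cover}) of \cite{AENSS06}); (ii) the fact that, although $U_k$ has infinite rank, $H_\omega$ depends on the \emph{scalar} $\omega_k$ affinely, so a Kotani--Simon-type spectral average in $\omega_k$ can stand in for the rank-one computation behind Proposition~\ref{prop:efcor}; and (iii) a continuum a-priori bound $\E\big(\|\chi_k(H_\omega^{\Lambda_L}-z)^{-1}\chi_\ell\|^s\big)\le C$, obtained by the same averaging machinery, playing the role of Lemma~\ref{lem1} in the H\"older step. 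I expect ingredients (i)--(iii) to be the main obstacle: replacing finite-rank perturbation theory by this Birman--Schwinger, elliptic-regularity and one-parameter spectral-averaging package is the technical heart of \cite{AENSS06} and \cite{BNSS06}, and is the source of the restrictions $d\le3$ and $s<\frac{1}{3}$. (An infinite-volume, Graf-style variant along the lines of Section~\ref{sec:Grafmethod} is also conceivable but runs into the same difficulty already at the continuum analogue of Proposition~\ref{prop:secondmoment}.)
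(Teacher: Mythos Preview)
Your proposal is correct and follows precisely the route the paper indicates: the paper does not actually prove Corollary~\ref{cor:contloc} but simply states that it ``can be done similar to what was described in Section~\ref{sec:efcor}, see \cite{AENSS06}'', and your outline is a faithful and accurate elaboration of exactly that strategy --- finite-volume eigenfunction correlators as in Section~\ref{sec:efcor}, with the rank-one machinery replaced by the Birman--Schwinger/spectral-averaging package of \cite{AENSS06,BNSS06}, and spectral localization deduced via RAGE plus the simplicity argument noted after Proposition~\ref{prop:dynspec}. You have also correctly identified the points where the continuum genuinely differs (the $r=2$ endpoint becoming a local trace bound, and the infinite-rank replacement for Proposition~\ref{prop:efcor}) and the origin of the restrictions $d\le 3$, $s<\tfrac{1}{3}$.
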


The overall approach to proving Theorem~\ref{thm4} is similar to the
proof of Theorem~\ref{thm3} in the previous section. The main steps
are:

\vspace{.3cm}

(i) {\bf A priori-bound:} It can be shown that to every $E_2 \in
(E_1,E_0)$ and $0<s<1$ there exists $C<\infty$ such that
\begin{equation} \label{eq:aprioricont}
\E (\|\chi_k (H_{\omega}-E-i\epsilon)^{-1} \chi_{\ell}\|^s) \le C
\end{equation}
uniformly in $E\in [E_1,E_2]$, $\epsilon>0$ and $k, \ell \in \Z^d$.

Note here that, as opposed to the discrete case Lemma~\ref{lem1},
the a-priori bound is only shown for energies below the spectrum of
the unperturbed operator $H_0$. This is a consequence of not
requiring the covering condition (\ref{eq:cover}) for the single-site potential. If a covering
condition holds, then it was shown in \cite{AENSS06} that the bound
(\ref{eq:aprioricont}) holds at all energies, with a constant $C$ on
the right which grows polynomially in $E$.

\vspace{.3cm}

(ii) {\bf Lifshits tails:} The bottom $E_1$ of the almost sure spectrum is again a
fluctuation boundary and close analogues to Lemmas~\ref{lem4} and
\ref{lem5} as well as Lifshits tail asymptotics (\ref{eq:lifshits}) of the IDS hold in
the continuum, see e.g.\ \cite{Stollmann} and \cite{AENSS06}. As in
the discrete case, this provides the start of an iterative procedure
for the proof of exponential decay in (\ref{eq:contloc}).

\vspace{.3cm}

(iii) {\bf Geometric decoupling:} The geometric decoupling procedure described at the end of
Section~\ref{sec:bandedge} can be carried out similarly in the
continuum. Additional technical difficulties arise mostly due to the
fact that the required geometric resolvent identities (compare
(\ref{eq:doubledecoup})) are less straightforward in the continuum.
One consequence of this is the restriction of Theorem~\ref{thm4} to
$s<1/3$, which is due to the need of an additional three-factor
H\"older bound used in the decoupling procedure. Also, elementary decoupling bounds such as Lemma~\ref{lem2} have to be replaced by a more systematic construction involving {\it resampling} of the random variables $\omega_n$ near the surfaces at which the decoupling is carried out. For details in the
setting of Theorem~\ref{thm4} see \cite{BNSS06}.

\vspace{.3cm}

The only one of the above three points which we want to address in
some more detail is the a-priori bound (\ref{eq:aprioricont}), as
the existence of such a bound can be seen as the crucial test for
the possibility of using the fractional moment method in the
continuum.

For simplicity, we only consider the ``diagonal'' case $k=\ell =0$
here and will assume the covering condition (\ref{eq:cover}). We
will discuss reasons why we could hope that
\begin{equation} \label{eq:Ubound}
\sup_{\varepsilon>0} \E(\|U(H_{\omega}-E-i\epsilon)^{-1}U\|^s) < \infty
\end{equation}
for energies near $\inf \Sigma$. Under the covering condition, this implies the same result with
$U$ replaced by $\chi_0$.

When trying to implement ideas similar to the ones used in the proof
of Lemma~\ref{lem1}, we are faced with having to find an analogue to
the Krein formula. It turns out that this is done by the identities
known from Birman-Schwinger theory. Write
\[ \omega = (\hat{\omega}, \omega_0), \quad H_{\omega} =
H_{\hat{\omega}} -\omega_0 U.\]

Then, at least formally, it is easy to derive by the resolvent
identity that
\begin{equation} \label{eq:BS}
U^{1/2} (H_{\omega}-z)^{-1} U^{1/2} = (A_{BS}-\omega_0 I)^{-1}
\end{equation}
in $L^2(\mbox{supp}\,U)$, with the Birman-Schwinger operator
\begin{equation} \label{eq:BSO}
A_{BS} = \left( U^{1/2} (H_{\hat{\omega}}-z)^{-1} U^{1/2}
\right)^{-1}.
\end{equation}

It can be justified that the inverses in (\ref{eq:BSO}) and
(\ref{eq:BS}) exist and that $A_{BS}$ is maximally dissipative. Here
an operator $A$ is called maximally dissipative if it is
dissipative, i.e.\ Im$\langle \phi, A\phi \rangle \ge 0$ for all
$\phi$ in its domain, and it has no proper dissipative extension.
This can also be characterized by the fact that $\{e^{itA}\}_{t\ge
0}$ is a contraction semigroup.

The identity (\ref{eq:BS}) looks promising since the right hand side
separates the dependence on $\omega_0$ from the dependence on
$\hat{\omega}$. Indeed, if the bound (\ref{eq:2by2fracbound}) could
be generalized from dissipative $2\times 2$-matrices to general
maximally dissipative operators $B$, then it would immediately give
us (\ref{eq:Ubound}). While (\ref{eq:2by2fracbound}) extends to
dissipative $N\times N$-matrices, the bound $C(r,s)$ on the right
will become $N$-dependent and diverge for $N\to\infty$, as is seen
by choosing $B$ to be a diagonal matrix with entries $1, \ldots, N$.
Thus it is not possible to directly extend (\ref{eq:2by2fracbound})
to the Hilbert space setting.

However, the extension to the Hilbert space setting becomes possible
if additional Hilbert-Schmidt multipliers are introduced. This is
most naturally stated in terms of a closely related
weak-$L^1$-bound:

\begin{theorem} \label{thm5}
Let ${\mathcal H}_0$ and ${\mathcal H}_1$ be separable Hilbert
spaces, let $A$ be maximally dissipative in ${\mathcal H}_0$, and
let $M:{\mathcal H}_0 \to {\mathcal H}_1$ be a Hilbert-Schmidt
operator. Then

(a) the boundary value
\[
M(A-v+i0)^{-1} M^* := \lim_{\epsilon\to 0} M(A-v+i\epsilon)^{-1} M^*
\]
exists in Hilbert-Schmidt norm for almost every $v\in \R$,

(b) there exists a constant $C < \infty$ (independent of $A$ and
$M$) such that
\begin{equation} \label{eq:weakL1}
|\{ v\in \R: \,\|M(A-v+i0)^{-1} M^*\|_{HS} > t\}| \le \frac{C
\|M\|_{HS}^2}{t}
\end{equation}
for all $t>0$.
\end{theorem}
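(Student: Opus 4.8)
The plan is to deduce both statements from the scalar (rank-one / $N=1$) case together with a spectral-averaging argument, reducing the Hilbert-space assertion to the bound $\int_{-r}^r |(b-v+i0)^{-1}|^s\,dv \le C(r,s)$ that we already have for dissipative scalars (the $2\times 2$ case (\ref{eq:2by2fracbound}) contains this). The key observation is that if $A$ is maximally dissipative then $-A$ generates a contraction semigroup, so by the Sz.-Nagy dilation theorem there is a Hilbert space $\mathcal K \supset \mathcal H_0$ and a self-adjoint operator $\widetilde A$ on $\mathcal K$ such that
\[
(A-z)^{-1} = P_{\mathcal H_0} (\widetilde A - z)^{-1} \big|_{\mathcal H_0} \quad \mbox{for all } \mbox{Im}\,z>0,
\]
where $P_{\mathcal H_0}$ is the orthogonal projection. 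Thus $M(A-z)^{-1}M^* = \widetilde M (\widetilde A - z)^{-1} \widetilde M^*$, where $\widetilde M = M P_{\mathcal H_0}: \mathcal K \to \mathcal H_1$ is still Hilbert-Schmidt with $\|\widetilde M\|_{HS} = \|M\|_{HS}$. So it suffices to prove (a) and (b) for self-adjoint $A$.

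For self-adjoint $A$, first I would handle the case $\mathcal H_1 = \C$, i.e.\ $M = \langle \phi, \cdot\rangle$ for $\phi \in \mathcal H_0$, so that $M(A-z)^{-1}M^* = \langle \phi, (A-z)^{-1}\phi\rangle = \int \frac{d\mu_\phi(\lambda)}{\lambda - z}$ is the Borel transform of the finite positive measure $\mu_\phi$ (the spectral measure of $A$ for $\phi$), with $\mu_\phi(\R) = \|\phi\|^2 = \|M\|_{HS}^2$. The boundary value exists a.e.\ by the classical theorem on boundary values of Herglotz functions (Fatou / de la Vallée Poussin), giving (a). For (b), this is exactly the weak-type $(1,1)$ bound for the Hilbert transform / Poisson integral of a measure: $|\{v: |\mbox{Im}\,(\int \frac{d\mu_\phi}{\lambda - v - i0})| > t\}| \le C\mu_\phi(\R)/t$ is the Boole-type estimate, and a matching bound for the real part follows from the Kolmogorov weak-$(1,1)$ inequality for the conjugate function; combining the two coordinates gives (\ref{eq:weakL1}) in the scalar case with a universal $C$. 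To pass to general Hilbert-Schmidt $M: \mathcal H_0 \to \mathcal H_1$, I would diagonalize: write $M = \sum_k s_k \langle \phi_k, \cdot\rangle \psi_k$ with orthonormal $\{\phi_k\}\subset\mathcal H_0$, $\{\psi_k\}\subset\mathcal H_1$, and $\sum_k s_k^2 = \|M\|_{HS}^2$. Then the matrix entries of $M(A-z)^{-1}M^*$ in the $\{\psi_k\}$ basis are $s_j s_k \langle \phi_j, (A-z)^{-1}\phi_k\rangle$, polarization identities express these through scalar Borel transforms of the measures $\mu_{\phi_j + \phi_k}$ etc., so the a.e.\ existence in (a) follows entry-by-entry plus a Hilbert-Schmidt-norm dominated-convergence argument using the monotone control $\|M(A-z)^{-1}M^*\|_{HS} \le \|M\|_{HS}^2 / \mbox{Im}\,z$ away from the boundary together with an $L^2_v$-type bound. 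The crucial quantitative input is the identity
\[
\int_{\R} \|M(A-v+i\epsilon)^{-1}M^*\|_{HS}^2\,dv
= \pi \sum_{j,k} s_j^2 s_k^2 \int \langle \phi_j, \mbox{Im}(A-v+i\epsilon)^{-1}\phi_j\rangle\,(\cdots)
\]
— more cleanly, one uses that for a single Hilbert-Schmidt $M$,
\[
\int_{\R} \mbox{tr}\big( M(A-v-i\epsilon)^{-1}M^* M(A-v+i\epsilon)^{-1}M^*\big)\,dv \le \pi\,\mbox{tr}\big((MM^*)^2\big)\cdot\mbox{(something)},
\]
but for the weak-$L^1$ bound the cleaner route is to avoid $L^2$ entirely.

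Here is the route I would actually carry out for (b), which I expect to be the main obstacle and is where the real content lies. Reduce (as above) to self-adjoint $A$ and finite-rank $M$ by truncating the singular values (monotone approximation), and write $F(z) := M(A-z)^{-1}M^*$, an $\mathcal H_1 \to \mathcal H_1$ operator-valued Herglotz function with $\mbox{Im}\,F(z) \ge 0$ for $\mbox{Im}\,z > 0$ and $\mbox{Im}\,z\cdot\|F(z)\|_{HS} \le \|M\|_{HS}^2$. The operator-valued Herglotz function has a representation $F(z) = \int \frac{d\Xi(\lambda)}{\lambda - z}$ with $\Xi$ a positive operator-valued measure whose total mass (trace) is $\mbox{tr}(MM^*) = \|M\|_{HS}^2$. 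The weak-$(1,1)$ bound for the boundary values then follows from the scalar Boole/Kolmogorov estimates applied after taking the trace: $\|F(v+i0)\|_{HS} \le \|F(v+i0)\|_{\mathrm{tr}}$ is false in general but $\|F(v+i0)\|_{HS}^2 = \mbox{tr}(F^*F)$, and one instead estimates directly $|\{v: \|F(v+i0)\|_{HS} > t\}|$ by decomposing $F = \mbox{Re}\,F + i\,\mbox{Im}\,F$ and bounding each: the imaginary part is the Poisson integral $\pi^{-1}\mbox{Im}\,F(v+i\epsilon) \to$ density of $\Xi$, and $|\{v: \|\mbox{Im}\,F(v+i0)\|_{HS} > t\}| \le \|\Xi\|_{\mathrm{tr}}/t = \|M\|_{HS}^2/t$ by a Hilbert-Schmidt-valued maximal/Chebyshev argument; the real part is handled by the operator-valued Kolmogorov inequality (or, entrywise, by summing scalar weak-$(1,1)$ bounds with the $\ell^2$ structure of the entries controlled by $\sum s_k^2$). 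The main obstacle is precisely getting the constant $C$ in (\ref{eq:weakL1}) to be independent of the rank of $M$ and of $A$; this forces one to work with the Hilbert-Schmidt norm (not trace norm, whose analogous estimate would blow up) and to use the polarization-plus-$\ell^2$ bookkeeping rather than a naive sum over matrix entries. Part (a), the a.e.\ existence of the boundary value in Hilbert-Schmidt norm, then follows from (b) by a standard argument: the $\epsilon\to 0$ limit exists a.e.\ entrywise (scalar Fatou), the Hilbert-Schmidt norms are uniformly bounded in $v$ off a small set by (b), and a dominated-convergence / Egorov argument upgrades entrywise convergence to Hilbert-Schmidt convergence a.e.
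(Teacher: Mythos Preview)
The paper does not actually prove Theorem~\ref{thm5}; it cites Naboko \cite{Naboko} (and an appendix of \cite{AENSS06}) and states that the proof is ``based on the weak-$L^1$-property of the Hilbert transform of Hilbert space-valued functions.'' So the benchmark to compare against is: reduce $M(A-z)^{-1}M^*$ to the Cauchy transform of a measure taking values in the Hilbert space $\mathcal{S}_2(\mathcal H_1)$ of Hilbert--Schmidt operators, and then invoke the vector-valued (Hilbert-space-valued) weak-$(1,1)$ inequality for the Hilbert transform.

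Your dilation step is correct and useful: the Sz.-Nagy dilation of the contraction semigroup $e^{itA}$ does give $(A-z)^{-1} = P_{\mathcal H_0}(\widetilde A - z)^{-1}|_{\mathcal H_0}$ for $\mathrm{Im}\,z>0$ with $\widetilde A$ self-adjoint, so reducing to self-adjoint $A$ is legitimate. Your treatment of $\mathrm{Im}\,F$ via the trace bound $\|\mathrm{Im}\,F\|_{HS}\le \mathrm{tr}(\mathrm{Im}\,F)$ and scalar Hardy--Littlewood/Chebyshev is also fine.

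The gap is in your handling of $\mathrm{Re}\,F$. You offer two routes: ``the operator-valued Kolmogorov inequality (or, entrywise, by summing scalar weak-$(1,1)$ bounds with the $\ell^2$ structure\ldots).'' The parenthetical entrywise route does \emph{not} work with a dimension-free constant: weak-$L^1$ is not a normed space, and summing weak-$(1,1)$ bounds for the individual matrix entries $s_js_k\langle\phi_j,(A-v-i0)^{-1}\phi_k\rangle$ and then taking the $\ell^2$-sum (to recover $\|\cdot\|_{HS}$) produces a constant that grows with the rank of $M$. No amount of ``$\ell^2$ bookkeeping'' repairs this, because the level sets for different entries need not be disjoint. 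The other route, the ``operator-valued Kolmogorov inequality,'' \emph{is} exactly the right ingredient --- but it is precisely the vector-valued weak-$(1,1)$ bound for the Hilbert transform that the paper points to as the heart of Naboko's proof. So you have identified the crucial input, but buried it as a throwaway alternative next to an approach that fails; the honest statement of your plan is that (b) reduces to this one nontrivial harmonic-analysis fact about $\mathcal{S}_2$-valued singular integrals.

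A smaller point on (a): deducing a.e.\ existence from (b) is not circular provided you first prove the weak-$L^1$ bound \emph{uniformly in $\epsilon>0$} and then argue convergence. But ``entrywise limit exists $+$ uniform HS bound on a large set'' gives only weak $\mathcal S_2$-convergence, not HS-norm convergence; you need an extra step (e.g.\ apply the weak-$(1,1)$ bound to differences $F(v+i\epsilon)-F(v+i\epsilon')$, or use that the tail of the singular-value expansion of $M$ contributes uniformly little). This is routine once (b) is in hand, but your Egorov sketch as written does not close it.
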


In (\ref{eq:weakL1}) $|\cdot|$ denotes Lebesgue measure and
$\|\cdot\|_{HS}$ the Hilbert-Schmidt norm.

Part (a) is well known in mathematical physics and has been
frequently used in scattering theory. Less well known is part (b),
which describes the value-distribution of the boundary values
guaranteed to exist by (a). This was proven in \cite{Naboko} (see
also an appendix in \cite{AENSS06} for a reproduction of the proof),
based on the weak-$L^1$-property of the Hilbert transform of Hilbert
space-valued functions, the latter being a quite classical result in
harmonic analysis.

The weak-$L^1$-bound (\ref{eq:weakL1}) can be turned into the
fractional moment bound
\begin{equation} \label{eq:HSfmbound}
\int \|M(A-v+i0)^{-1} M^*\|_{HS}^s \rho(v)\,dv \le C(s,\rho)
\|M\|_{HS}^{2s},
\end{equation}
where the constant $C(s,\rho)$ can be chosen uniform for all
Hilbert-Schmidt operators $M$ and maximally dissipative $A$. This is
done by the standard layer-cake integration argument: If $F(v) :=
\|M(A-v+i0)^{-1}M^*\|_{HS}$, then
\begin{eqnarray*}
\int |F(v)|^s \rho(v)\,dv & \le & \|\rho\|_{\infty}
\int_{\mbox{supp}\,\rho} |F(v)|^s\,dv \\
& = & \|\rho\|_{\infty} \int_0^{\infty} |\{v\in \mbox{supp}\,\rho:
|F(v)|^s >t\}|\,dt.
\end{eqnarray*}
By (\ref{eq:weakL1}) the integrand is bounded by $\min \{\tilde{C},
C\|M\|_{HS}^2/t^{1/s}\}$, where $\tilde{C} = |\mbox{supp}\,\rho|$.
Splitting the integral at the $t$-value where $\tilde{C} =
C\|M\|_{HS}^2/t^{1/s}$ leads to (\ref{eq:HSfmbound}).

When trying to use (\ref{eq:HSfmbound}) for a proof of
(\ref{eq:Ubound}) we see from (\ref{eq:BS}) that
\begin{equation} \label{eq:applfmbound}
U(H_{\omega}-z)^{-1} U = U^{1/2} (A_{BS}-\omega_0 I)^{-1} U^{1/2}.
\end{equation}

This leaves us with one more problem to deal with: The
multiplication operator $U^{1/2}$ in $L^2(\R^d)$ is {\it not}
Hilbert-Schmidt. In fact, multiplication operators with
non-vanishing functions in the continuum are {\it never} compact.

The key to solving this last problem is that $U^{1/2}$ is {\it
relatively} Hilbert-Schmidt with respect to $-\Delta$ (meaning that
$U^{1/2}(-\Delta+1)^{-1}$ is Hilbert-Schmidt), at least for $d\le
3$, see e.g.\ \cite{Simon}. Arguments as typical in relative
perturbation theory allow to split the left hand side of
(\ref{eq:applfmbound}) into a sum of terms, some of which trivially
satisfy a fractional moment bound, while others include additional
multipliers which lead to the Hilbert-Schmidt property required in
(\ref{eq:HSfmbound}). These arguments only work at energies below
the spectrum of the unperturbed operator $H_0$, which is the reason
for the corresponding assumption which we made when stating
(\ref{eq:aprioricont}).

For further details on these relative perturbation arguments as well
as on the ``off-diagonal'' case $k\not= \ell$ in
(\ref{eq:aprioricont}) we refer to \cite{BNSS06} and conclude our
sketch of the proof of Theorem~\ref{thm4} here.

\section{Open Problems and directions for future work} \label{sec:problems}

To conclude this introduction into the theory of Anderson
localization, we mention some open problems and discuss some wide
open issues which mathematicians need to understand better in the
future. Here we will not restrict ourselves to further developments
of the fractional moment method, but will address broader aspects of
the quantum mechanical description of disordered media. We will be
relatively brief here and note that a more complete and more
detailed recent discussion of open problems in this field can be found in
\cite{Banff}. In particular, we do not attempt here to give complete
references to related works.

\subsection{Singular distributions} \label{sec:singular}

Consider the discrete and continuous Anderson models $h_{\omega}$
and $H_{\omega}$, but allow for singular distributions of the random
coupling parameters $\omega_i$, $i\in \Z^d$. The most extreme case
would be the case of independent Bernoulli variables, i.e.\
$\PP(\omega_i=a)=p$, $\PP(\omega_i=b)=1-p$. This models the
physically interesting case of a two-component alloy. Both, the
fractional moments method and the Fr\"ohlich-Spencer multiscale
analysis, fail to provide localization proofs in this situation. The
reason for this is that both methods to a large extend use {\it
local averaging} arguments in the random parameters, as demonstrated
very clearly by the proof of Lemma~\ref{lem1} above. While it is
possible to deal with H\"older-continuous distributions, the
Bernoulli case it out of reach for the traditional approaches.

However, Bourgain and Kenig \cite{Bourgain/Kenig} have shown

\begin{theorem} \label{thm6}
Consider the continuum Anderson model $H_{\omega}$ defined by
(\ref{eq:contbackground}) and (\ref{eq:contAnderson}) with $V_0=0$
and independent Bernoulli random variables $(\omega_i)$. Then
$H_{\omega}$ is spectrally localized near $E_1 =\inf \Sigma$.
\end{theorem}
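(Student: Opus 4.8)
The plan is to follow the strategy of Bourgain and Kenig \cite{Bourgain/Kenig}, which replaces the local averaging in the random parameters (as used, e.g., in the proof of Lemma~\ref{lem1} and in all Wegner estimates) by a \emph{quantitative unique continuation principle} combined with a combinatorial Sperner-type counting argument, both fed into a version of multiscale analysis that only requires polynomially (in the scale) small exceptional probabilities. First I would establish the base case of the induction: since $E_1 = \inf\Sigma$ is a fluctuation boundary, Lifshits-tail asymptotics of the integrated density of states hold near $E_1$ also for Bernoulli couplings (the Dirichlet--Neumann bracketing and large-deviation reasoning sketched around \eqref{eq:variation} only uses $\E(\omega_0)>0$ and a lower tail, both available for Bernoulli). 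This yields, at some initial scale $L_0$, a bound $\PP(\operatorname{dist}(E,\sigma(H_\omega^{\Lambda_{L_0}}))\le L_0^{-\beta})$ small for $E$ near $E_1$, i.e.\ an initial length scale estimate.

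The heart of the argument, and what I expect to be the main obstacle, is the quantitative unique continuation estimate: if $\psi$ is an $L^2$-normalized solution of $H_\omega\psi = E\psi$ on a box $\Lambda_L$, with the sup-norm of the potential bounded uniformly in $\omega$, then $\|\psi\|_{L^2(\Lambda_1(n))} \ge \exp(-C(1+|n|)^{4/3}\log(1+|n|))$ for all $n\in\Lambda_L$, the characteristic exponent $4/3$ coming from Carleman estimates for $-\Delta$ perturbed by a bounded potential. This is the genuinely new analytic input of \cite{Bourgain/Kenig} and its proof is by far the most delicate step; everything else is a (careful) reorganization of known multiscale technology around it.

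Given the unique continuation bound, one obtains a substitute for the Wegner estimate through \emph{eigenvalue variation}: turning a single Bernoulli coupling $\omega_n$ from $0$ to $1$ adds $U_n$, which by \eqref{eq:bump} dominates $c_1\chi_{\{|x-n|\le r_1\}}$, and hence shifts a given simple eigenvalue of $H_\omega^{\Lambda_L}$ by at least $c_1\|\psi\|_{L^2(\Lambda_1(n))}^2 \ge \exp(-CL^{4/3}\log L)$. Choosing a sparse set of ``free'' sites inside $\Lambda_L$, conditioning on all other couplings, and counting configurations of the free sites by a Sperner-type lemma then bounds the probability that $\operatorname{dist}(E,\sigma(H_\omega^{\Lambda_L}))<\eta$ — only polynomially in $L$, but that is all that is needed. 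Feeding the initial scale estimate, the geometric resolvent identity (to glue ``good'' subcubes), and this counting-based resonance bound into the Bourgain--Kenig version of multiscale analysis, one propagates from scale $\ell$ to scale $L\approx \ell^{\alpha}$ the statement ``with probability $\ge 1-L^{-p}$ the box $\Lambda_L$ is $(\gamma,E)$-good'', for all $E$ in $[E_1,E_1+\delta]$ after possibly shrinking $\delta$. The polynomial probabilities force a more careful bookkeeping of the length scales and of the number of free sites than in the classical scheme, but the induction closes.

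Finally, the multiscale output — off-diagonal exponential decay of finite-volume resolvents, holding with good probability at every scale — implies spectral localization by standard arguments: a generalized eigenfunction expansion together with a Borel--Cantelli argument (or, equivalently, the Germinet--Klein bootstrap adapted to this setting) shows that for almost every $\omega$ the operator $H_\omega$ has pure point spectrum in $[E_1,E_1+\delta]$ with exponentially decaying eigenfunctions, which is the assertion of the theorem.
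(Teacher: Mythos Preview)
Your outline is correct and faithfully sketches the Bourgain--Kenig strategy. Note, however, that the paper itself does not give a proof of this theorem: it merely quotes the result from \cite{Bourgain/Kenig} and remarks that ``their proof is based on a substantial enhancement of the multiscale analysis approach and, in particular, a deeper understanding of the underlying averaging mechanisms (such as the role of the so-called Wegner estimates).'' Your proposal is exactly an expansion of that one-sentence description --- quantitative unique continuation via Carleman estimates, eigenvalue variation combined with a Sperner-type count as a Wegner substitute, and a multiscale scheme that tolerates merely polynomial exceptional probabilities --- so there is nothing to compare against in the paper beyond confirming that you have identified the right ingredients.
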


Their proof is based on a substantial enhancement of the multiscale
analysis approach and, in particular, a deeper understanding of the
underlying averaging mechanisms (such as the role of the so-called
{\it Wegner estimates}). It has also been shown in \cite{AGKW09} how
the argument provided in \cite{Bourgain/Kenig} can be used to handle
$(\omega_i)$ with arbitrary non-trivial distribution.

However, somewhat surprisingly, the same question remains open for
the {\it discrete} Anderson model (\ref{eq:Anderson}) with Bernoulli
distributed random couplings. The technical reason for this is that
\cite{Bourgain/Kenig} uses subtle unique continuation properties of
the eigenfunctions of Schr\"odinger operators which are not
available for lattice models.

More generally, one can easily imagine various other models of
random operators where the random parameters naturally have discrete
distribution and where the available mathematical methods fail to
prove localization. One such model would be discrete Laplacians on
random subgraphs of the edges of $\Z^d$. An open question is to
decide if in the supercritical percolation regime, where the graph
has a unique infinite component, the Laplacian has localized
spectrum. For a recent survey on these models see
\cite{Muller/Stollmann}

\subsection{Extended states}

Every list of open problems in random operator theory needs to
mention the somewhat embarrassing fact that mathematicians are still
far from understanding the physically conjectured extended states
regime in the three-dimensional Anderson model.

A proof of the existence of continuous (or absolutely continuous)
spectrum or of diffusive solutions to the time-dependent
Schr\"odinger equation for this model would be an important
break-through. Here we would like to mention another way to
characterize the Anderson transition from localized to extended
states, namely the {\it level statistics conjecture}. In fact, this
is how physicists can numerically distinguish the two regimes, which
provides the most important evidence for the correctness of the
physical heuristics explaining the transition.

According to the level statistics conjecture it is possible to
distinguish the localized and delocalized regimes by considering the
statistical distribution of the eigenvalues (viewed as point
processes) of finite volume restrictions of the Anderson model.
Localized states should be characterized by Poisson statistics of
the eigenvalues, while in spectral regions with extended states the
finite volume eigenvalues should show GOE statistics. The latter it
a special kind of level repulsion observed for Gaussian orthogonal
ensembles in random matrix theory.

In the spectral regimes where mathematicians can establish
localization, it has also been verified that the finite volume
eigenvalues are Poisson distributed, see \cite{Molchanov},
\cite{Minami}, \cite{CGK} and \cite{Germinet/Klopp}. However, regarding GOE statistics in
the Anderson model, as little is known as for other possible
characterizations of extended states.

As discussed in the lectures by L.\ Erdos at this School
\cite{Erdos}, GOE statistics is a rather universal phenomenon
observed in large classes of random matrices, e.g.\ so-called {\it
Wigner random matrices}. The most apparent difference between Wigner
matrices and the Anderson model is that for the latter randomness is
restricted to the diagonal matrix-elements while in Wigner matrices
all entries are random. Understanding the transition between
Anderson models and random matrices, for example by considering {\it
random band matrices} with an increasing amount of off-diagonal
random entries, could provide important insights into the
localization-delocalization transition in the Anderson model.

\subsection{Electron-electron interactions and many-body systems}

The Anderson models discussed above are {\it one-electron} models,
which ignore interactions between electrons (as well as interactions
between nuclei, which are considered as affixed to the lattice
sites). Quite recently, Anderson-type models for a fixed number $N$
of interacting electrons in a random background have been shown to
have localization properties. Chulaevsky and Suhov \cite{CS1, CS2}
have done this by an extension of multiscale analysis, while
\cite{AW09} povides similar results based on the fractional moments
approach.

Let us give one example of a result which can be obtained by both
approaches, where we do not try to state the most general result. An
$N$-particle discrete Anderson-type model can be defined as
\[
(h_{\omega}^{(N)} \phi)(x) = \sum_{y:|y-x|=1} \phi(y) + (U(x) +
\lambda\sum_{j=1}^N \omega_{x_j}) \phi(x),
\]
where $\phi \in \ell^2(\Z^{Nd})$, $x=(x_1,\ldots,x_N) \in \Z^{Nd}$
and $y\in \Z^{Nd}$. As above, the $(\omega_x)_{x\in\Z^d}$ are
i.i.d.\ random variables with bounded, compactly supported density.
Assume, for simplicity, that $U(x)$ is a two-particle interaction
term of finite range,
\[
U(x) = \sum_{1\le j<k\le N} \Phi(x_j-x_k), \quad \mbox{supp$\,\Phi$
finite}.
\]

Then localization holds for large disorder:

\begin{theorem} \label{thm7}
If $\lambda$ is sufficiently large, then $h_{\omega}^{(N)}$ is
spectrally and dynamically localized at all energies.
\end{theorem}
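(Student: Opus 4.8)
The plan is to reproduce, in the $N$-particle setting, the two-stage scheme used for Theorems~\ref{thm1}--\ref{thm2}: (a) prove exponential decay of the fractional moments of the $N$-particle Green function $G^{(N)}_\omega(x,y;z):=\langle e_x,(h^{(N)}_\omega-z)^{-1}e_y\rangle$, $x,y\in\Z^{Nd}$, uniformly in $z\in\C\setminus\R$, once $\lambda$ is large; and (b) feed this bound into the eigenfunction-correlator and RAGE arguments of Sections~\ref{sec:Grafmethod}--\ref{sec:efcor}, which carry over almost verbatim with $\ell^2(\Z^{Nd})$ and $h^{(N)}_\omega$ in place of $\ell^2(\Z^d)$ and $h_\omega$. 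The essential new feature is that the randomness is still indexed by the one-particle lattice $\Z^d$ while the operator acts on $\Z^{Nd}$: a single coupling $\omega_u$ enters the diagonal potential $U(x)+\lambda\sum_{j}\omega_{x_j}$ at \emph{every} configuration $x$ having a particle at $u$, an infinite set, so the rank-one and rank-two perturbation tricks behind Lemma~\ref{lem1} no longer literally apply. I would follow the fractional-moment treatment of \cite{AW09} (the multiscale route of \cite{CS1,CS2} being the alternative).

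\textbf{A priori bound.} Fix all couplings except $\omega_u$ for a single site $u$ and write $h^{(N)}_\omega=h^{(N)}_{\hat\omega}+\lambda\omega_uN_u$, where $N_u$ is the \emph{diagonal} (hence self-adjoint) operator counting the number of particles at $u$, $0\le N_u\le N$. Compressing to the range $\mathcal H_u$ of the spectral projection $Q_u:=\chi_{[1,\infty)}(N_u)$, on which $N_u$ is boundedly invertible, a Krein/Birman--Schwinger computation yields $Q_u(h^{(N)}_\omega-z)^{-1}Q_u=N_u^{-1/2}(B+\lambda\omega_u)^{-1}N_u^{-1/2}$ with $B$ maximally dissipative and independent of $\omega_u$. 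Since the matrix elements we need are contractions of this against the rank-one --- hence Hilbert--Schmidt, with norm $\le1$ --- multipliers $\langle e_x,\cdot\,\rangle$ and $\langle\cdot\,,e_y\rangle$, Theorem~\ref{thm5}(b) and the layer-cake argument leading to (\ref{eq:HSfmbound}) give, after averaging $\omega_u$ against $\rho$, the bound $\E(|G^{(N)}_\omega(x,y;z)|^s)\le C_1(s,\rho)\,\lambda^{-s}$ whenever $x$ and $y$ share an occupied site (in particular for $x=y$); the remaining off-diagonal cases follow by combining this with one step of the resolvent expansion. This replacement of the elementary estimate (\ref{eq:2by2fracbound}) by the Hilbert--Schmidt input of Theorem~\ref{thm5} --- exactly as for the continuum model in Section~\ref{sec:continuum} --- is the crucial point, and the step I expect to be the real obstacle.

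\textbf{Exponential decay.} With the a priori bound in hand I would iterate the resolvent expansion (\ref{eq:expansion}): for $x\ne y$, $\big(U(y)+\lambda\sum_j\omega_{y_j}-z\big)G^{(N)}_\omega(x,y;z)=\sum_{y'\sim y}G^{(N)}_\omega(x,y';z)$, and then, as in (\ref{eq:itstep}), condition on a judiciously chosen coupling to extract a factor $C/\lambda^s$ per step. When the occupied sites of $y$ are mutually far apart and far from those of $x$, one may single out a coordinate that meets only the current denominator and decouple by a Lemma~\ref{lem2}-type bound; the genuinely new case is a \emph{clustered} configuration, where the same $\omega_u$ sits in several resolvent denominators at once, and here one instead uses a multi-variable decoupling estimate in the spirit of Lemma~2.3 of \cite{ASFH01} (or simply spends the a priori bound across the bounded-diameter cluster). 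Iterating $\sim d(x,y)$ times, each time picking up $2NdC_2/\lambda^s$ after maximizing over the $2Nd$ neighbours, and closing with the a priori bound, gives $\E(|G^{(N)}_\omega(x,y;z)|^s)\le Ce^{-\mu d(x,y)}$ once $\lambda$ is large enough that $2NdC_2/\lambda^s<1$; here $d$ is the natural distance on $\Z^{Nd}$ for well-separated configurations, replaced on clustered ones by a Hausdorff-type pseudometric between the occupied-site sets $\{x_1,\dots,x_N\}$ and $\{y_1,\dots,y_N\}$. (This suffices because $\sum_{y}e^{-\mu d(x,y)}<\infty$ for fixed $x$.)

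\textbf{From fractional moments to localization.} No new ideas are needed for stage (b): the finite-volume eigenfunction-correlator bound of Proposition~\ref{prop:totvarbound}, its proof, and the RAGE argument of Proposition~\ref{prop:dynspec} used only self-adjointness, boundedness, and conditioning on one coupling, all of which survive on $\ell^2(\Z^{Nd})$. Hence exponential decay of $\E(|G^{(N)}_\omega(x,y;z)|^s)$ yields dynamical localization in the form (\ref{eq:dynloc}) on every bounded interval (with $|j-k|$ read as $d(j,k)$), and hence on the whole spectrum since $h^{(N)}_\omega$ is bounded, and via RAGE it gives almost sure pure point spectrum; exponential decay of the eigenfunctions then follows as in \cite{Aizenman/Molchanov,Simon/Wolff} or from the Borel-calculus bound (\ref{eq:Borelloc}). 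To summarize, the architecture is identical to the one-particle large-disorder proof; what is hard, and what forced the developments of \cite{AW09}, is (i) the a priori bound, where the infinite rank of each single-site perturbation makes Theorem~\ref{thm5} unavoidable, and (ii) the bookkeeping in the iteration for clustered particle configurations.
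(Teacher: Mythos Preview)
The paper does not actually prove Theorem~\ref{thm7}. It appears in Section~\ref{sec:problems} (``Open Problems and directions for future work'') as a statement of a result obtained elsewhere, with the remark that it ``can be obtained by both approaches'' --- the multiscale analysis of \cite{CS1,CS2} and the fractional-moment treatment of \cite{AW09} --- followed only by the comment that ``a version of dynamical localization suitable for $N$-particle systems has to be used here, see \cite{AW09}.'' There is therefore no proof in the paper to compare your proposal against.

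That said, your outline is broadly faithful to the \cite{AW09} approach you cite, and you have correctly isolated the two genuine obstacles: (i) a single coupling $\omega_u$ acts as an infinite-rank perturbation on $\ell^2(\Z^{Nd})$, so the a~priori bound must go through the dissipative/Hilbert--Schmidt machinery of Theorem~\ref{thm5} rather than the elementary $2\times2$ computation of Lemma~\ref{lem1}; and (ii) decay must be measured in a Hausdorff-type distance between the occupied-site sets $\{x_1,\ldots,x_N\}$ and $\{y_1,\ldots,y_N\}$, not in the graph metric on $\Z^{Nd}$. One point you understate: the iteration is not simply ``pick up $2NdC_2/\lambda^s$ per step.'' In \cite{AW09} the decay proof proceeds by an induction on the number of particles, separating configurations into clusters that do or do not interact (via the finite-range $\Phi$) and invoking the $(N-1)$-particle bound for the non-interacting pieces; the clustered/well-separated dichotomy you sketch is the right picture, but the actual bookkeeping is this inductive decomposition rather than a direct single-chain expansion as in Theorem~\ref{thm1}. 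Your stage~(b) assessment is accurate: once the fractional-moment decay is in hand, the eigenfunction-correlator and RAGE arguments transfer essentially unchanged.
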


Note that a version of dynamical localization suitable for
$N$-particle systems has to be used here, see \cite{AW09}.

While such results provide important mathematical progress,
condensed matter physicists will object to the above model because
it keeps the number of electrons fixed. The physically correct
system to look at would be a model with positive {\it electron
density}. For example, one could consider the restriction of
$h_{\omega}^{(N)}$ to $\ell^2(\Lambda_L^N)$, where $\Lambda_L^N =
\Lambda_L \times \ldots \times \Lambda_L$, and study its properties
if $N$ goes to infinity together with the volume, i.e. $N \sim L^d
\to \infty$. This is mathematically wide open. It is not even clear
how the concepts of localization and extended states should be
defined in this setting, with spectral theoretic terms most likely
not being the correct language any more. Instead dynamical
descriptions will have to be used. One particularly interesting
question, which is still challenging even to physicists, is if
electron interaction enhances or reduces localization effects. It
has been argued in the physics literature that such effects could be
crucial to understand localization for two-dimensional disordered
systems, the critical case.

\begin{appendix}

\section{Basic Rank-One Perturbation Theory} \label{sec:appendix}

Here we first provide some facts from rank-one perturbation theory. They can be considered as finite-dimensional, elementary versions of some of the much more profound insights behind Simon-Wolff theory, e.g.\ \cite{Simon/Wolff} or Chapter 11 to 13 of \cite{Simon:Trace}. Then we establish a general relation between fractional moments of Green's function and eigenfunction correlators, which was used in Section~\ref{sec:efcor} above.

Let $\mathcal H$ be a finite-dimensional Hilbert space,
dim$\,{\mathcal H} = N$, $h_0$ a self-adjoint operator in $\mathcal
H$, and $\varphi$ a normalized cyclic vector for $h_0$. By $P =
\langle \varphi, \cdot \rangle \varphi$ we denote the orthogonal
projection onto span$\{\varphi\}$. Consider the family of rank-one
perturbations
\[
h_v := h_0 +vP, \quad v\in \R,
\]
of $h_0$. Then $\varphi$ is a cyclic vector for all $h_v$. All $h_v$
have simple eigenvalues, which we label as
\[
E_1(v) < E_2(v) < \ldots < E_N(v).
\]
Also consider the self-adjoint operator
\[
h_{\infty} := Ph_0 P \quad \mbox{on $D(h_{\infty}) =
\{\varphi\}^{\perp}$}.
\]
Application of the Gram-Schmidt procedure to $\varphi, A\varphi, A^2
\varphi, \ldots$ shows that $A\varphi - \langle \varphi, A\varphi
\rangle \varphi$ is a cyclic vector for $h_{\infty}$. In particular,
$h_{\infty}$ has simple eigenvalues which we label as
\[
E_1(\infty) < E_2(\infty) < \ldots < E_{N-1}(\infty).
\]
In the following we will also use the notation $E_0(\infty):=
-\infty$, $E_N(\infty) := \infty$.

\begin{lemma} \label{lem:A1}
(i) For every $k\in \{1,\ldots,N\}$, $E_k(v)$ is analytic and
strictly increasing as a function of $v\in \R$.

(ii) For every $v\in \R$, the eigenvalues of $h_v$ and $h_{\infty}$
are intertwined as
\begin{equation} \label{eq:intertwine}
E_1(v) < E_1(\infty) < E_2(v) < E_2(\infty) < \ldots < E_{N-1}(v) <
E_{N-1}(\infty) < E_N(v).
\end{equation}

(iii) For every $k\in \{1,\ldots,N\}$,
\[
\lim_{v\to -\infty} E_k(v) = E_{k-1}(\infty), \quad
\lim_{v\to\infty} E_k(v) = E_k(\infty).
\]
\end{lemma}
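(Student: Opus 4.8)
The plan is to reduce everything to the $(\varphi,\varphi)$ matrix element of the resolvent,
\[ F_v(z) := \langle \varphi, (h_v-z)^{-1}\varphi\rangle, \qquad z\in\C\setminus\R, \]
and its behaviour on the real line. First I would apply the second resolvent identity to $h_v = h_0 + vP$, take the $(\varphi,\varphi)$ matrix element, and solve the resulting scalar equation to obtain the Aronszajn--Krein formula
\[ F_v(z) = \frac{F_0(z)}{1+vF_0(z)}. \]
Since $\varphi$ is cyclic for $h_0$, the eigenvalues of $h_0$ are simple, say $\lambda_1<\cdots<\lambda_N$ with normalized eigenvectors $\phi_j$, and the weights $w_j := |\langle\varphi,\phi_j\rangle|^2$ are strictly positive with $\sum_j w_j = 1$, so that $F_0(x) = \sum_{j=1}^N w_j/(\lambda_j-x)$ on $\R\setminus\{\lambda_1,\dots,\lambda_N\}$. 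A short computation of $F_0'(x) = \sum_j w_j/(\lambda_j-x)^2 > 0$, together with the one-sided limits of $F_0$ at $\pm\infty$ and at each $\lambda_j^{\pm}$, shows that $F_0$ is a strictly increasing continuous bijection from $(-\infty,\lambda_1)$ onto $(0,\infty)$, from each gap $(\lambda_j,\lambda_{j+1})$ onto $(-\infty,\infty)$, and from $(\lambda_N,\infty)$ onto $(-\infty,0)$; in particular $F_0$ has exactly one zero in each gap and none outside $[\lambda_1,\lambda_N]$.

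Next I would identify the zeros of $F_0$ with the eigenvalues of $h_\infty$ (read as the compression of $h_0$ to the hyperplane $\{\varphi\}^\perp$). Cyclicity of $\varphi$ forces that no eigenvector of $h_0$ is orthogonal to $\varphi$; hence at any eigenvalue $z$ of $h_\infty$ the operator $h_0-z$ is invertible on $\mathcal H$, the corresponding eigenvector is a multiple of $(h_0-z)^{-1}\varphi$, and orthogonality to $\varphi$ is precisely $F_0(z)=0$, the converse being immediate. Thus $E_1(\infty)<\cdots<E_{N-1}(\infty)$ are exactly the zeros of $F_0$, and the previous paragraph already gives $\lambda_j < E_j(\infty) < \lambda_{j+1}$. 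I would also record, by an inductive argument showing $\mathrm{span}\{\varphi,h_v\varphi,\dots,h_v^m\varphi\} = \mathrm{span}\{\varphi,h_0\varphi,\dots,h_0^m\varphi\}$ for all $m$, that $\varphi$ is cyclic for every $h_v$, so all $h_v$ have simple spectrum. For $v\neq 0$ the eigenvalues of $h_v$ are the poles of $F_v$, i.e. the solutions of $1+vF_0(x)=0$ (the poles of $F_0$ at the $\lambda_j$ being cancelled), equivalently of $F_0(x)=-1/v$. Feeding this into the interval-by-interval monotonicity of $F_0$ places, for $v>0$, one eigenvalue of $h_v$ in each $(\lambda_j,E_j(\infty))$, $1\le j\le N-1$, and one in $(\lambda_N,\infty)$; and for $v<0$, one in $(-\infty,\lambda_1)$ and one in each $(E_j(\infty),\lambda_{j+1})$. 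Reading these off in increasing order, together with $E_j(0)=\lambda_j$, yields the interlacing (ii) for every $v\in\R$ and shows that $E_k$ maps $\R$ into $(E_{k-1}(\infty),E_k(\infty))$. Letting $-1/v\to 0^{\mp}$ in $F_0(E_k(v))=-1/v$ and using that $F_0$ is a bijection onto the interval in question squeezes the solution to the appropriate endpoint, which gives (iii) with the conventions $E_0(\infty)=-\infty$, $E_N(\infty)=\infty$.

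For (i), analyticity I would obtain from analytic perturbation theory: $v\mapsto h_v = h_0+vP$ is (trivially) a self-adjoint holomorphic family of type (A), and the spectrum of $h_v$ is simple for every real $v$ by the counting above, so the Rellich--Kato theory provides real-analytic eigenvalues $E_k(\cdot)$ on $\R$ together with real-analytic normalized eigenvectors $\psi_k(v)$. Strict monotonicity then follows from the Feynman--Hellmann identity $E_k'(v) = \langle\psi_k(v),P\psi_k(v)\rangle = |\langle\varphi,\psi_k(v)\rangle|^2$, which is strictly positive because $\varphi$ is cyclic for $h_v$ and hence not orthogonal to any eigenvector. The step I expect to be the main obstacle is controlling the passage through $v=0$: the parametrization $F_0(x)=-1/v$ degenerates there, so monotonicity and analyticity across $v=0$ have to come from the operator picture (min--max gives $E_k(v)\le E_k(v')$ for $v<v'$, which the Feynman--Hellmann computation upgrades to a strict inequality), and one must keep careful track of which hyperplane $h_\infty$ lives on when matching its $N-1$ eigenvalues with the $N-1$ zeros of $F_0$.
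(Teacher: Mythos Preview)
Your argument is correct and essentially complete; the worry about $v=0$ is not a real obstacle, since the analytic perturbation theory you invoke for (i) works uniformly in $v$, and the interlacing at $v=0$ is exactly the statement $\lambda_j<E_j(\infty)<\lambda_{j+1}$ that you already extracted from the sign pattern of $F_0$.

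Your route differs genuinely from the paper's. For (ii) the paper argues purely operator-theoretically via the min--max principle, comparing the variational characterizations of $E_k(v)$ on $\mathcal H$ and of $E_k(\infty)$ on $\{\varphi\}^\perp$ to obtain $E_k(v)\le E_k(\infty)\le E_{k+1}(v)$, with strictness then supplied by (i). For (iii) the paper uses a Schur-complement computation to show norm-resolvent convergence $(h_v-E)^{-1}\to 0\oplus(h_\infty-E)^{-1}$ as $|v|\to\infty$, and reads off the limits of the eigenvalue branches from $\|(A-E)^{-1}\|=1/\mathrm{dist}(E,\sigma(A))$. By contrast, you funnel everything through the scalar Weyl function $F_0$: the eigenvalues of $h_\infty$ are its zeros, those of $h_v$ are the solutions of $F_0=-1/v$, and both interlacing and the asymptotics fall out of the elementary graph of a strictly increasing function with prescribed poles. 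Your approach is more self-contained and computationally explicit, and it ties in naturally with the Aronszajn--Krein/Simon--Wolff machinery used elsewhere in the paper; the paper's min--max and Schur-complement arguments, on the other hand, generalize more readily beyond rank one and beyond finite dimension.
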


\begin{proof}
(i) Let $\psi_k(v)$ denote normalized eigenvectors of $h_v$ to
$E_k(v)$, $k=1,\ldots,N$. That $\varphi$ is a cyclic vector for
$h_v$ means that $\langle \psi_k(v), \varphi \rangle \not= 0$ for
all $k\in \{1,\ldots, N\}$. By analytic perturbation theory, the
functions $E_k(v)$ are analytic with
\begin{equation} \label{eq:1OPT}
 E_k'(v) = \langle \psi_k(v), P \psi_k(v) \rangle = |\langle
\psi_k(v), \varphi \rangle |^2 \not= 0.
\end{equation}

(ii) We will prove this by the variational characterization of
eigenvalues of $h_v$ and $h_{\infty}$, e.g.\ Theorem XIII.2 in
\cite{Reed/SimonIV}, which says that for $k\in \{1,\ldots,N\}$,
\begin{equation} \label{eq:minmax}
E_k(v) = \sup_{\tiny \begin{array}{c} V\subset {\mathcal H} \\
\mbox{dim}\,V=k-1 \end{array}} \inf_{\tiny \begin{array}{c} f\in
V^{\perp} \\ \|f\|=1 \end{array}} \langle f, h_v f \rangle
\end{equation}
and for $k\in \{1,\ldots, N-1\}$,
\begin{equation} \label{eq:minmax2}
E_k(\infty) = \sup_{\tiny \begin{array}{c} \tilde{V}\in
\{\varphi\}^{\perp} \\ \mbox{dim}\,\tilde{V}=k-1 \end{array}}
\inf_{\tiny \begin{array}{c} g\in \tilde{V}^{\perp} \\ \|g\|=1
\end{array}} \langle g, h_{\infty} g \rangle.
\end{equation}
Note that in the infimum in (\ref{eq:minmax}) the orthogonal complement $V^{\perp}$ is taken with respect to $\mathcal H$, while $\tilde{V}^{\perp}$ in (\ref{eq:minmax2}) is taken with respect to $\{\varphi\}^{\perp}$. By definition of $h_{\infty}$ and $h_v$ we also have that
\begin{equation} \label{eq:idperp}
\langle f, h_{\infty} f \rangle = \langle f, h_v f \rangle \quad \mbox{for all $f\in \{\varphi\}^{\perp}$}.
\end{equation}

We first show that
\[
E_k(v) \le E_k(\infty) \quad \mbox{for all $k\in \{1,\ldots,N\}$}.
\]
This is trivial for $k=N$. For $k\le N-1$, let $V$ be a subspace of $\mathcal H$ with dim$\,V=k-1$. Then $V^{\perp} \cap \{\varphi\}^{\perp}$ is a subspace of $\{\varphi\}^{\perp}$ of dimension at least $N-k$. Thus it is the orthogonal complement of a subspace $W$ of $\{\varphi\}^{\perp}$ of dimension at most $k=1$. Therefore, by (\ref{eq:minmax2}) and (\ref{eq:idperp}),
\begin{eqnarray*}
E_k(\infty) & \ge & \inf_{\tiny \begin{array}{c} f\in W^{\perp} \\ \|f\|=1 \end{array}} \langle f, h_{\infty} f \rangle = \inf_{\tiny \begin{array}{c} f\in V^{\perp} \cap \{\varphi\}^{\perp} \\ \|f\|=1 \end{array}} \langle f, h_v f \rangle \\
& \ge & \inf_{\tiny \begin{array}{c} f\in V^{\perp} \\ \|f\|=1 \end{array}} \langle f, h_v f \rangle.
\end{eqnarray*}
As this holds for every subspace $V$ of $\mathcal H$ with dim$\,V=k-1$, (\ref{eq:minmax}) implies $E_k(\infty) \ge E_k(v)$.

Next we will show that
\[
E_k(\infty) \le E_{k+1}(v) \quad \mbox{for all $k\in \{0, \ldots, N-1\}$},
\]
which is trivial for $k=0$. Let $k\ge 1$ and $\tilde{V} \subset \{\varphi\}^{\perp}$ with dim$\,\tilde{V} = k-1$. Then $V:= \,\mbox{span}\{\varphi\} \oplus \tilde{V} \subset {\mathcal H}$ with dim$\,V=k$ and $V^{\perp} = \{0\} \oplus \tilde{V}^{\perp}$. Thus, by (\ref{eq:minmax}) and (\ref{eq:idperp}),
\[
E_{k+1}(v) \ge \inf_{\tiny \begin{array}{c} f\in V^{\perp} \\ \|f\|=1 \end{array}} \langle f, h_v f \rangle = \inf_{\tiny \begin{array}{c} g\in \tilde{V}^{\perp} \\ \|g\|=1 \end{array}} \langle g, h_{\infty} g \rangle.
\]
As $\tilde{V} \subset \{\varphi\}^{\perp}$ with dim$\,\tilde{V} = k-1$ was arbitrary, (\ref{eq:minmax2}) implies $E_{k+1}(v) \ge E_k(\infty)$.

Strictness of all inequalities in (\ref{eq:intertwine}) now is a consequence of (i).

(iii) Here we use the following general fact, which can be proven using Schur complementation (see e.g.\ \cite{BHS} for a description of this method): For the self-adjoint $2\times 2$-block matrix
\[ \left( \begin{array}{cc} A & B \\ B^* & D \end{array} \right),\]
let $E\not\in \sigma(D)$, then
\[
\lim_{|v|\to \infty} \left( \begin{array}{cc} A+vI-EI & B \\ B^* & D-EI \end{array} \right)^{-1} = \left( \begin{array}{cc} 0 & 0 \\ 0 & (D-EI)^{-1} \end{array} \right).
\]
Applying this to the $2\times 2$-block representation of $h_0$ in span$\{\varphi\} \oplus \{\varphi\}^{\perp}$ shows that $(h_v-EI)^{-1} \to 0 \oplus (h_{\infty}-EI)^{-1}$ as $|v|\to \infty$ for every $E\not\in \sigma(h_{\infty})$.

Using that for self-adjoint operators $A$,
\[ \|(A-EI)^{-1}\| = \frac{1}{\mbox{dist}(E,\sigma(A))},\]
we conclude that for every $E\in \sigma(h_{\infty})$ there exists a function $E(v)$ such that $E(v) \in \sigma(h_v)$ for all $v$ and $\lim_{v\to\infty} E(v) = E$. If $E=E_k(\infty)$ for $k=1,\ldots, N-1$, it follows from the results of (i) and (ii) that $E(v) = E_k(v)$ for $v$ sufficiently large, i.e.\ $\lim_{v\to\infty} E_k(v)=E_k(\infty)$. Similarly, it follows that $\lim_{v\to -\infty} E_{k+1}(v) = E_k(\infty)$. $E_1(v) \to -\infty$ as $v\to -\infty$ and $E_N(v) \to\infty$ as $v\to\infty$ follows easily by minimizing/maximizing the quadratic form of $h_v$.

\end{proof}

To given $\chi \in {\mathcal H}$, $s\in (0,1)$ and open interval $I\subset \R$ we consider the fractional eigenfunction correlators
\begin{equation} \label{eq:defefcor}
Q_v(\varphi,\chi;I,s) := \sum_{k:E_k(v)\in I} |\langle \psi_k(v), \varphi \rangle|^{2-s} |\langle \psi_k(v), \chi \rangle|^s.
\end{equation}

\begin{proposition} \label{prop:efcor}
The fractional eigenfunction correlators satisfy the identity
\begin{equation} \label{eq:efcorfm}
\int_{\R} \frac{Q_v(\varphi, \chi;I,s)}{|v|^s}\,dv = \int_I |\langle \varphi, (h_0-E)^{-1} \chi \rangle|^s\,dE.
\end{equation}

\end{proposition}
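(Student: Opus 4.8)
The plan is to split the $v$-integral on the left-hand side of (\ref{eq:efcorfm}) according to the eigenvalue branches $E_1(v)<\dots<E_N(v)$ of $h_v$ and, on each branch, change variables from $v$ to the eigenvalue $E=E_k(v)$. First I would interchange the finite sum over $k$ with the integral over $v$, writing
\[
\int_{\R}\frac{Q_v(\varphi,\chi;I,s)}{|v|^s}\,dv=\sum_{k=1}^N\int_{\{v\in\R:\,E_k(v)\in I\}}\frac{|\langle\psi_k(v),\varphi\rangle|^{2-s}\,|\langle\psi_k(v),\chi\rangle|^s}{|v|^s}\,dv .
\]
By Lemma~\ref{lem:A1}, for each $k$ the map $v\mapsto E_k(v)$ is a real-analytic, strictly increasing bijection of $\R$ onto $(E_{k-1}(\infty),E_k(\infty))$, with $E_k'(v)=|\langle\psi_k(v),\varphi\rangle|^2$ by (\ref{eq:1OPT}); moreover the $N$ open intervals $(E_{k-1}(\infty),E_k(\infty))$ are pairwise disjoint and their union is $\R$ minus the finite set $\{E_1(\infty),\dots,E_{N-1}(\infty)\}$. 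Substituting $dE=E_k'(v)\,dv$ in the $k$-th summand, the factor $|\langle\psi_k(v),\varphi\rangle|^{2}$ from the Jacobian cancels against part of the numerator, so the summand becomes $\int_{I\cap(E_{k-1}(\infty),E_k(\infty))}|R(E)|^s\,dE$, where $R(E):=\langle\psi,\chi\rangle/(v\langle\psi,\varphi\rangle)$, with $v=v(E)$ the inverse of $E_k$ and $\psi$ the normalized eigenvector of $h_{v}$ for the eigenvalue $E$ (here $\langle\psi,\varphi\rangle\neq 0$, and $v\neq 0$ for a.e.\ $E$, because $\varphi$ is cyclic for $h_v$). Summing the $N$ contributions collapses the domains to $I$ and gives $\int_I|R(E)|^s\,dE$.

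It then remains to show $R(E)=-\langle\varphi,(h_0-E)^{-1}\chi\rangle$ for a.e.\ $E$, which is a pointwise computation from the eigenvalue equation. For $E\notin\sigma(h_0)$, rewriting $(h_0+vP-E)\psi=0$ as $(h_0-E)\psi=-v\langle\varphi,\psi\rangle\varphi$ and inverting gives $\psi=-v\langle\varphi,\psi\rangle(h_0-E)^{-1}\varphi$. Pairing with $\chi$ and using that $v,E\in\R$ and $(h_0-E)^{-1}$ is self-adjoint yields $\langle\psi,\chi\rangle=-v\langle\psi,\varphi\rangle\,\langle\varphi,(h_0-E)^{-1}\chi\rangle$, i.e.\ $R(E)=-\langle\varphi,(h_0-E)^{-1}\chi\rangle$. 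Taking $s$-th powers of absolute values and inserting into the previous display produces exactly (\ref{eq:efcorfm}).

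The argument carries no serious obstacle; the points requiring care are purely bookkeeping. The most technical step is invoking Lemma~\ref{lem:A1}(ii)--(iii) to see that the branches tile $\R$ up to the finite set $\{E_j(\infty)\}=\sigma(h_\infty)$, together with checking that the two finite exceptional sets — $\sigma(h_\infty)$ (the branch endpoints) and $\sigma(h_0)$ (where $v(E)=0$ and $(h_0-E)^{-1}$ is undefined) — are Lebesgue-null and hence harmless. Since $0<s<1$, the function $E\mapsto|\langle\varphi,(h_0-E)^{-1}\chi\rangle|^s$ is locally integrable (its only singularities, at $\sigma(h_0)$, are of integrable order), so both sides of (\ref{eq:efcorfm}) are well-defined in $[0,\infty]$ and the branchwise change of variables is justified by Tonelli's theorem applied to the nonnegative integrands.
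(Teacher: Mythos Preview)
Your proof is correct and follows essentially the same approach as the paper's: both use the Feynman--Hellmann identity $E_k'(v)=|\langle\psi_k(v),\varphi\rangle|^2$, the eigenvalue equation $(h_0-E_k(v))\psi_k(v)=-v\langle\varphi,\psi_k(v)\rangle\varphi$ to convert the ratio $\langle\psi_k(v),\chi\rangle/(v\langle\psi_k(v),\varphi\rangle)$ into the resolvent matrix element, and the branchwise change of variables $v\mapsto E_k(v)$ supplied by Lemma~\ref{lem:A1}. The only organizational difference is that the paper first derives the pointwise identity $Q_v/|v|^s=\sum_k E_k'(v)\,|\langle\varphi,(h_0-E_k(v))^{-1}\chi\rangle|^s$ and then reduces to intervals $I$ lying in a single band $(E_{k-1}(\infty),E_k(\infty))$, whereas you interchange sum and integral first and handle the tiling of $\R$ by the branches directly; this is a matter of presentation, not substance.
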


\begin{proof}
As $\varphi$ is cyclic for $h_v$ and thus $\langle \psi_k(v), \varphi \rangle \not= 0$ for all $k$, we can rewrite (\ref{eq:defefcor}) as
\begin{equation} \label{eq:efcor2}
Q_v(\varphi,\chi;I,s) = \sum_{k:E_k(v) \in I} E_k'(v) \left| \frac{ \langle \psi_k(v), \chi \rangle}{ \langle \psi_k(v), \varphi \rangle} \right|^s,
\end{equation}
where we have also used (\ref{eq:1OPT}). Observe that
\[ (h_0-E_k(v)) \psi_k(v) = (h_v -E_k(v) -vP) \psi_k(v) = -v \langle \varphi, \psi_k(v) \rangle \varphi.\]
Thus, using that $E_k(v) \not\in \sigma(h_0)$ for all $v\not= 0$,
\begin{eqnarray*}
\langle \psi_k(v), \chi \rangle & = & \langle (h_0-E_k(v)) \psi_k(v), (h_0-E_k(v))^{-1} \chi \rangle \\
& = & -v \langle \psi_k(v), \varphi \rangle \langle \varphi, (h_0-E_k(v))^{-1} \chi \rangle.
\end{eqnarray*}
This allows to further rewrite (\ref{eq:efcor2}) as
\begin{equation} \label{eq:efcor3}
\frac{Q_v(\varphi, \chi;I,s)}{|v|^s} = \sum_{k:E_k(v) \in I} E_k'(v) | \langle \varphi, (h_0-E_k(v))^{-1} \chi \rangle |^s.
\end{equation}
This will allow to prove (\ref{eq:efcorfm}) by integration. Here we may assume that $I\subset (E_{k-1}(\infty), E_k(\infty))$ for a fixed $k\in \{1,\ldots, N\}$ (from which the general case follows easily). In this case (\ref{eq:efcor3}) says that
\[
\frac{Q_v(\varphi, \chi;I,s)}{|v|^s} = \left\{ \begin{array}{ll} E_k'(v) |\langle \varphi, (h_0-E_k(v))^{-1} \chi \rangle |^s, & \mbox{if $E_k(v)\in I$}, \\ 0, & \mbox{else.} \end{array} \right.
\]
Integration yields
\begin{eqnarray*}
\int_{\R} \frac{Q_v(\varphi, \chi;I,s)}{|v|^s} \,dv & = & \int_{v:E_k(v)\in I} E_k'(v) |\langle \varphi, (h_0-E_k(v))^{-1} \chi \rangle |^s \,dv \\
& = & \int_I | \langle \varphi, (h_0-E)^{-1} \chi \rangle|^s \,dE,
\end{eqnarray*}
which used the substitution $v \mapsto E=E_k(v)$.
\end{proof}

We conclude by noting that it is possible to prove a result
corresponding to Proposition~\ref{prop:efcor} without assuming that
the Hilbert space is finite-dimensional, as long as the spectral
measure for $h_0$ corresponding to $\chi$ is purely singular and
boundary values $\langle \varphi, (h_0-E-i0)^{-1} \chi \rangle$ of
Green's function are used. A corresponding streamlining of
Aizenman's original arguments in \cite{Aizenman94} has been provided
by Simon in \cite{SimonAT} for the unitary models considered there
and will be presented for the self-adjoint setting in
\cite{AizenmanWarzelBook}.

\end{appendix}

\end{document}